\newcolumntype{d}[1]{D{.}{.}{#1}}
\newlength{\bibitemsep}\setlength{\bibitemsep}{.2\baselineskip plus .05\baselineskip minus .05\baselineskip}
\newlength{\bibparskip}\setlength{\bibparskip}{0pt}
\let\oldthebibliography\thebibliography
\renewcommand\thebibliography[1]{%
  \oldthebibliography{#1}%
  \setlength{\parskip}{\bibitemsep}%
  \setlength{\itemsep}{\bibparskip}%
}
\newcommand\DoToC{%
 \vskip1cm
 \startcontents[sections]
  \printcontents[sections]{l}{1}{\textbf{Contents}\vskip3pt\hrule\vskip5pt\setcounter{tocdepth}{2}}
  \vskip5pt\hrule\vskip5pt
}
\renewcommand{\paragraph}{%
  \@startsection{paragraph}{4}%
  {\z@}{1.5ex \@plus 0.5ex \@minus .2ex}{-1em}%
  {\normalfont\normalsize\bfseries}%
}
\newcommand{\W}{\mathbb{W}}
\newcommand{\K}{\mathbb{K}}
\newcommand{\op}{\mathrm{op}}
\newcommand{\Lip}{\mathrm{Lip}} 
\newcommand{\Lin}{\mathrm{lin}} 
\newcommand{\cT}{\mathcal{T}}
\renewcommand{\S}{\mathbb{S}}
\newcommand{\td}{\tilde}
\newcommand{\lb}{\left(}
  \newcommand{\rb}{\right)}
\newcommand{\cE}{\mathcal{E}}
\begin{document}
\title{Nonparametric Identification of Demand \\ without Exogenous Product Characteristics \vspace{0.6cm}}
\author{Kirill Borusyak \and Jiafeng Chen \and Peter Hull \and Lihua Lei
\vspace{0.1cm}
\footnote{\textsuperscript{*}February 2026 (first draft: December 2025). Borusyak: University of California, Berkeley,
k.borusyak@berkeley.edu; Chen: Stanford University,
jiafeng@stanford.edu; Hull: Brown University, peter\_hull@brown.edu; Lei: Stanford
University, lihualei@stanford.edu. We thank Nikhil Agarwal, Isaiah Andrews,  Steve Berry, Jimbo Brand, Giovanni Compiani, Jos\'e Ignacio Cuesta, Matt Gentzkow, Jeff Gortmaker, Phil Haile, Nail Kashaev, Toru Kitagawa, Ashesh Rambachan, Adam Rosen, Jon Roth, Jesse Shapiro, Paulo Somaini, and Daniel Yi Xu for helpful comments. Google Gemini and OpenAI's ChatGPT contributed valuable insights. Refine.ink was used to check for consistency and
clarity.}}
\maketitle
\begin{abstract}
\singlespacing
We study identification of differentiated product demand from market-level data when product characteristics can be endogenous. Past work suggests nonparametric identification may be impossible: that is, in addition to standard price instruments,  exogenous characteristic-based instruments are essentially necessary to identify  sufficiently flexible demand models with standard index restrictions. We show, however, that price counterfactuals are nonparametrically identified using recentered instruments---which combine exogenous price instruments with possibly endogenous product characteristics---under a weaker index restriction and a new condition we term \emph{faithfulness}. We argue that faithfulness, like the usual completeness condition for nonparametric instrumental variable identification, is best viewed as a technical requirement on the strength of identifying variation rather than a substantive economic or statistical restriction. We show the two conditions are closely related, though generally distinct. We conclude with several practical implications for the parametric estimation of demand counterfactuals. 

\vspace{1em}
\noindent \textsc{JEL codes.} C14, C36, L13

\noindent \textsc{Keywords.} Demand, nonparametric identification, recentered IV, faithfulness

\noindent\thispagestyle{empty}
\end{abstract}

\noindent\newpage{}

\noindent
\global\long\def\expec#1{\mathbb{E}\left[#1\right]}%
\global\long\def\var#1{\mathrm{Var}\left[#1\right]}%
\global\long\def\cov#1{\mathrm{Cov}\left[#1\right]}%
\global\long\def\one{\mathbf{1}}%
\global\long\def\diag{\operatorname{diag}}%
\global\long\def\plim{\operatorname*{plim}}%

\pagestyle{plain}

\section{Introduction}

Conventional wisdom holds that exogenous supply-side instruments trace out demand curves and identify important  counterfactuals: how quantities would change under hypothetical price changes \citep{wright1928tariff}. Indeed, when considering demand for a single product, \cite{agrist2000interpretation} show such instruments are enough to nonparametrically identify certain average price elasticities across markets. Modern demand analyses, however, involve multiple differentiated goods and target counterfactuals that change prices in a \emph{particular} market \citep{berry2021foundations}. These analyses usually  incorporate other kinds of instruments: functions of the observed characteristics of competing products \citep[e.g.,][]  {berry1995auto,berry1999voluntary,gandhi2019measuring}. It is well-known that such characteristic-based instruments can be invalid if firm entry is strategic or  characteristics are otherwise endogenous  \citep{berry1995auto,petrin2022identification} and they can moreover reduce robustness to model misspecification \citep{andrews2025structural}.

Can other instruments be found to avoid these issues? Are exogenous shocks to prices enough to identify market-specific counterfactuals? In influential work, \citet{berryhaile} argue that the answer is generally no: i.e., that exogenous characteristic-based instruments are essentially necessary to flexibly estimate demand with market-level data. Specifically, they consider a nonparametric demand model of the form:
\begin{align}\label{eq:simpleBH}
    \sigma^{-1}(S,P)=X+\xi \equiv \delta,%
\end{align}
where $S$ and $P$ are $J$-vectors of the observed quantity shares and prices of $J$ goods in a market, $X$ is a $J$-vector of an observed product characteristic, and $\xi$ is an unobserved $J$-vector of demand shocks. The demand shocks and characteristics combine linearly in an index $\delta$. The unknown $\sigma^{-1}(\cdot)$ is the inverse of a demand function $\sigma(d,p)$, which returns the  shares that would arise if $\delta$ were set to $d$ and $P$ were set to $p$.\footnote{$\sigma(\cdot)$ may also depend on other observable characteristics $\tilde{X}$, which we suppress here for exposition.} 

\citet{berryhaile} argue that observing a $J$-vector of exogenous price instruments $Z$---or even having exogenous prices---generally does not identify $\sigma(\cdot)$ or even price counterfactuals, i.e. the effects of changing $P$ while holding $\delta$ fixed.\footnote{This is directly stated in a review, \citet[p.~6]{berry2021foundations}: ``having valid  instruments for all $J$ prices will not generally suffice for identification of  [\ldots] the \emph{ceteris paribus} effects of price changes.''} Intuitively, the  inverse demand function $\sigma^{-1}(\cdot)$  depends separately on $S$ and $P$ but $Z$ affects market shares only through prices; this suggests needing other instruments that shift $S$ holding $P$ fixed.\footnote{\citet{berryhaile} also consider a model in which $P$ enters the index and is excluded  from $\sigma^{-1}(\cdot)$,  concluding that price instruments suffice for  identification, though again assuming that characteristics are exogenous. \citet{borusyak2025estimating} show that recentered instruments identify price counterfactuals  in this case, allowing endogenous characteristics.}  The characteristic $X$  that enters the index uniquely fits this bill, provided it is also exogenous. All functions of $(X,Z)$ can  then serve as instruments and identify $\sigma(\cdot)$ so long as they induce enough  variation in $(S,P)$---a standard technical condition known as completeness \citep{newey2003instrumental}.

We reexamine this setting and reach a more optimistic conclusion: while exogenous price shocks are not generally enough for nonparametric identification of market-specific counterfactuals on their own, they \emph{can} suffice when combined with possibly \emph{endogenous} characteristics. This is because combinations of an as-good-as-randomly assigned $Z$ and an endogenous $X$ that are mean-zero given $X$---what \cite{borusyak2023nonrandom} call recentered instruments---are valid instruments with identifying power. In particular, they yield a simple test: for any candidate $\check\sigma(\cdot)$, if the implied index $\check \delta=\check \sigma^{-1}(S,P)$ correlates with a recentered instrument, then $\check \sigma(\cdot)$ cannot be the true demand function. 

We introduce a new technical condition---\emph{faithfulness}---under which all price counterfactuals are identified by candidate inverse demand functions that survive this recentered instrument test. Under faithfulness, any surviving $\check{\sigma}^{-1}(\cdot)$  equals an invertible transformation of the true $\sigma^{-1}(\cdot)$. The transformation is unknown, making demand not fully identified; we cannot, for example, learn counterfactuals that change the endogenous characteristics. But this is not a problem for price counterfactuals, as the unknown transformation is normalized away in those calculations.

Faithfulness requires the variation in $Z$ and $X$ to be rich enough to make all price effects on any functions of the form $H(\delta,P)$ detectable. More precisely, it says that functions of the form $H(\delta, P)$ cannot be mean-independent of the exogenous $Z$, given the possibly endogenous $X$, unless they are constant in $P$. This is a useful condition because any  candidate $\check\sigma^{-1}(\cdot)$ can be written as such a function: $\check\sigma^{-1} (S,P)=\check\sigma^{-1}(\sigma(\delta,P),P)\equiv\check{H}(\delta,P)$. Thus, if faithfulness holds, any $\check\sigma^{-1}(S,P)$ that survives the recentered instrument test (and is thus mean-independent of $Z$ given $X$) must be a transformation of the true $\sigma^{-1}(S,P)=\delta$: i.e., $\check H(\delta, P) = \check H(\delta)=\check H (\sigma^{-1}(S,P))$.

Intuitively, faithfulness is satisfied when two conditions hold: \emph{(i)} $Z$ is a strong instrument for $P$, given $X$, and \emph{(ii)} $X$ is a strong \emph{proxy} for $\delta$. Under \emph{(i)}, we have identification of all conditional-on-$X$ average price effects on any $\check H(\delta,P)$. This is not generally enough to make all price effects on $\check H(\delta,P)$ detectable, however, since the identified effects average over the unobserved $\delta\mid X$ distribution. Complex interactions between $\delta$ and $P$ could potentially average out to zero conditional on $X$, making a candidate $\check\sigma^{-1}(\cdot)$ survive the recentered instrument test despite not being a transformation of the true $\sigma^{-1}(\cdot)$. The proxy condition \emph{(ii)} prevents this by ensuring the variation in $X$ is sufficiently predictive of $\delta$ such that it is impossible for such interactions to average out across all the strata of $X$. Importantly, this argument does not require $X$ to causally vary the distribution of $\delta$: the strong proxy condition can hold even when all characteristics are endogenous.\footnote{Allowing characteristics to be endogenous further relaxes the index restriction in \cite{berryhaile}  by allowing $\delta$ to be an arbitrary function of $X$ and $\xi$ and for $\xi$ to have an arbitrary dimension.}

This intuition highlights the conceptually distinct roles of $Z$ (as an instrument) and $X$ (as a proxy) in our identification results. Unlike with  $Z$, researchers do not need to justify why the characteristics $X$ are as-good-as-randomly assigned or otherwise independent of unobserved demand shocks; indeed, faithfulness is even more plausible when firms strategically choose characteristics using partial information on $\delta$.  This distinction is also reflected in how $X$ and $Z$ would be used in forming ``technical instruments'' for estimation: while $X$ can be useful when combined with $Z$ and recentered, it has no identifying power on its own (i.e., in conventional ``BLP instruments'') as  recentered functions of $X$ only are identically zero.

While faithfulness is a novel condition, we argue it is a close cousin to completeness. Completeness says that 
functions of the form $\check\sigma^{-1}(S,P)$ cannot be mean-independent of $(X,Z)$ unless they are constant in $(S,P)$. Like completeness, faithfulness is therefore best viewed as a requirement on the strength of identifying variation rather than a substantive economic or statistical assumption. 

To explore the close relationship between faithfulness and completeness, we study when one follows from the other. In the limit where $Z$ is a perfect instrument for price (i.e., $P=Z$), we show faithfulness and completeness are equivalent. Outside this limit, faithfulness follows from completeness under several conditions on either pricing or the utility index. 
On the pricing side, it suffices that either \emph{(i)} $X$ and $Z$ combine in a nonparametric index $\lambda$ such that $P$ is independent of $(X, Z)$ given $(\lambda,\delta)$ or \emph{(ii)} the derivatives of $P$ with respect to $Z$ satisfy a particular separability condition. Both conditions are compatible with firms engaging in Bertrand--Nash pricing with certain forms of marginal costs.
On the $\delta$ index side, we show faithfulness follows from completeness when \emph{(i)} $\delta$ and $X$ have finite support with the same number of values or \emph{(ii)}
$\delta \mid X$ can be transformed to follow a location-scale model satisfying certain smoothness conditions---regardless of how prices are determined. 
In the other direction, completeness of $ (S,P) 
\mid (X, Z)$ follows from faithfulness when $\delta \mid X$ is complete.

Importantly, researchers do not need to take a stand on the specific sufficient conditions---our identification arguments are the same so long as faithfulness holds. 
Our menu of conditions shows faithfulness can hold without strong economic or statistical restrictions. Given completeness, faithfulness is not a big additional leap. 

Overall, our analysis clarifies the types of variation that are useful for identifying different demand counterfactuals. The model specifies potential outcomes in two ``treatments'': prices and characteristics \citep{chen2025reinterpreting}. If counterfactuals in both treatments are of interest, then it is unsurprising that exogenous variation in both $X$ and $P$ is needed---leading to the familiar ``$2J$'' instrument requirement in \citet{berryhaile}. If only the average causal effects of price were of interest, then $J$ instruments for price would intuitively suffice. Price counterfactuals in individual markets have an intermediate requirement: only average price effects are needed, but a sufficiently large and diverse set of them. This is satisfied by having $J$ strong instruments and $J$ strong proxies.

Our analysis also yields several practical insights for applied researchers estimating parametric demand models. Most importantly, it strengthens the recommendation---made previously in the parametric analyses of \citet{borusyak2025estimating} and \citet{andrews2025structural}---that researchers use recentered instruments in order to make their price counterfactual estimates more robust to characteristic endogeneity and model misspecification concerns. Our nonparametric results reassure practitioners that such robustness is not tied to any particular functional form or distributional assumptions. Our results also suggest a new role for potentially endogenous variation in $X$, as proxying for unobserved model heterogeneity, which practitioners can assess alongside the conceptually distinct requirement of price instrument exogeneity. 

This paper contributes to two main literatures. First, we add to an influential literature on the identification of differentiated product demand with market-level data started by \cite{berry1994} and \cite{berry1995auto,berry1999voluntary}; most closely related is the nonparametric identification analysis of \cite{berryhaile}. We depart from most of this literature by allowing product characteristics to be endogenous and by focusing on the identification of price counterfactuals. In this way, our market-level identification results complement Proposition 1 in \citet{borusyak2025estimating} which shows nonparametric identification of price counterfactuals without exogenous characteristics in a model with an index restriction on price and a standard completeness condition.  Our analysis also relates to papers like \cite{ackerbergcrawford} which study the estimation of parametric demand models with endogenous characteristics;  \cite{borusyak2025estimating} propose recentered instruments for this case.

From this literature, our focus on nonparametric identification of price counterfactuals aligns most closely with \citet{berryhaile24}, who call such counterfactuals ``conditional demand.'' Their analysis is also similar in allowing product characteristics $X$ to be endogenous and using price instruments that are exogenous conditional on $X$. The key difference is that \citet{berryhaile24} leverage  ``micro data'': i.e., variation in market shares across consumer characteristics. Our results show that price counterfactuals can be identified with market-level data only. This entails a very different identification strategy: while \cite{berryhaile24} essentially condition on $X$ and do not require variation on it, we use $X$ as a proxy for unobserved heterogeneity. 
 
Second, we contribute to a large literature studying nonparametric identification with
instrumental variables, including \cite{brownmatzkin}, \cite{npv03},
\cite{newey2003instrumental}, \cite{altonjimatzkin}, \cite{cik07}, \citet{chiappori2015nonparametric}, 
\cite{imbensnewey},
\citet{torgovitsky2015identification}, 
\citet{d2015identification}, and \citet{blundell2017individual}. 
The model we study imposes an index restriction in the spirit of \cite{berryhaile}, albeit a substantially weaker one. Thus, it is still restrictive relative to the fully-general potential outcomes model that \cite{agrist2000interpretation} consider for linear instrumental variables (IV) estimation. 
The index restriction is important for identifying market-specific counterfactuals, which are %
generally not given by the local average demand elasticities that linear IV identifies \citep{berry2021foundations,chen2025reinterpreting}. Beyond demand, our results can be restated to apply to triangular models in which unobserved heterogeneity enters the second stage through an index with a potentially endogenous covariate.\footnote{Specifically, our results apply to the model characterized by $Y = g(W_1, \delta(W_2, \xi))$ and $W_1 = h(Z, W_2, \omega)$ with $Z
\indep (\xi, \omega) \mid W_2 $ and $(Y, W_1, \delta, W_2) \in \R^J$. } 
Our faithfulness
condition appears new and may prove useful for identifying other nonparametric models
with this structure.\footnote{Our faithfulness condition derives its name from and relates
conceptually to a condition in the causal discovery literature
\citep{spirtes2000causation}, which considers whether a joint distribution of variables is
``faithful'' to an underlying causal graph;  \Cref{sub:faithfulness_in_causal_graphs} details this connection. \Cref{sub:triangular_model} shows that some
nonparametric identification results in \cite{imbensnewey}, \cite{torgovitsky2015identification}, and \cite{d2015identification} can be interpreted as verifying an analog of
faithfulness.}

The rest of this paper is organized as follows. \Cref{sec:theory} develops the main identification results. \Cref{sec:sufficient_conditions} relates faithfulness to completeness via a series of sufficient and necessary conditions.  \Cref{sec:practical} discusses key practical insights for parametric estimation. \Cref{sec:conclusion} concludes. For expositional ease we omit various technical details in the main text (e.g., regularity conditions on measurability, existence of moments, and support); \Cref{sec:proofs} gives precise theoretical statements and proofs.

\section{Theory}\label{sec:theory}
\subsection{Setup}
We consider a demand system in a market with $J$ products. The product quantity shares (or some other quantity measures) $S\in\mathbb{R}^{J}$ are given by:
\begin{align}
S=\sigma(\delta,\tilde{X},P),\quad \delta=\delta(\bar{X},\xi)\label{eq:demand}
\end{align} 
for prices
 $P\in\mathbb{R}^{J}$, observed characteristics $\bar{X}=(X, \tilde X)\in\mathbb{R}^
 {J}\times\mathcal{X}$, and unobserved demand shocks $\xi\in\Xi$ which capture
 latent consumer tastes or unobserved product characteristics. Focusing on identification, we suppress market subscripts.
 
 \cref{eq:demand} follows \citet
 {berryhaile} by imposing an index restriction in which the
 demand shocks and characteristics enter together through the unobserved $\delta\in \mathbb
 {R}^J$ with the ``special characteristic'' $X$ otherwise excluded from $\sigma$.\footnote{Note that the meaning of the $\delta$ index differs from that in the literature on parametric demand models following \cite{berry1995auto} where it usually denotes the mean utility vector over products. Most importantly, here $\delta$ does not include any effects of price on demand.}   We also follow \cite{berryhaile} by assuming demand is invertible in this index: 

\smallskip
\begin{as}
\label{as:invert}
  For every $(\tilde{x},p)$, the map $d\mapsto \sigma(d, \tilde{x},p)$ is invertible. 
\end{as}
\smallskip

\noindent Invertibility follows here from a ``connected substitutes'' condition \citep{berry2013connected}. 
 
 We are interested in identifying price counterfactuals: the quantities $\sigma(\delta, \tilde X, p^\prime)$ that would be observed if prices were set to some counterfactual $p^\prime$ while holding $\delta,\tilde{X}$ fixed at their realized values. These capture both local changes in prices (i.e. own- and cross-price elasticities) and global counterfactuals that can inform many positive and normative analyses---such as merger simulations, conduct testing, and studying the effects of product entry and exit.\footnote{Under standard assumptions on price-setting, firm optimization only involves demand at counterfactual prices---such that our counterfactuals are enough for computing marginal costs, markups, and welfare changes from mergers or entry/exit (modeled by prices moving from/to infinity). \citet{borusyak2025estimating} discuss how $\delta$ can be obtained for new products if characteristics are endogenous. %
 }  It is straightforward to extend our analysis to counterfactuals in other product attributes by interpreting $P$ as those non-price variables of interest.
 
 To identify price counterfactuals, we assume that a set of price instruments  $Z\in \R^{d_z}$, $d_z \ge J$, is observed in addition to $S$, $\bar{X}$ and $P$. We formalize the sense in which these instruments are exogenous below. The identification challenge stems from the unobservability of $\xi$ and the unknown nonparametric functions $\sigma$ and $\delta$.

\cite{berryhaile} consider identification of $\sigma$ with the additional assumptions of
 \emph{(i)} exclusion of $\tilde X$ from the utility index, $\delta(\bar X, \xi) = \delta(X ,\xi)$, \emph{(ii)} a linear index, $\delta (X, \xi) = X +\xi$ with $\xi\in\mathbb{R}^J$, and \emph{(iii)} mean-independence of the
 unobserved demand shocks from the characteristics and instruments: $\E
 [\xi \mid \bar{X},Z] = 0$. The last assumption captures the exogeneity of both $\bar{X}$ and $Z$. Together, \emph{(i)}--\emph{(iii)} imply:
 \begin{align}\label{eq:linear_exogenous}
  \E[\delta \mid \bar{X},Z] = X.
\end{align}

\cref{eq:linear_exogenous} identifies $\sigma$ under a completeness condition \citep{newey2003instrumental,ai2003efficient,andrews2011examples,d2011completeness,miao2018identifying}:
\smallskip
\begin{as}
  \label{as:completeness}
The distribution of $(S, P, \tilde X) \mid (X,Z, \tilde X)$ is complete: For all $h$ with
$\E[\norm{h(S,P, \tilde X)}] < \infty$, if $
  \E[h(S,P, \tilde X) \mid X,Z,\tilde X] = 0$ then $ h(S,P,\tilde X) = 0$. Equivalently, under \cref{as:invert}, the distribution of $(\delta, P, \tilde X) \mid (X, Z, \tilde X)$ is complete.
\end{as}
\smallskip
\noindent Identification follows from the fact that, under \cref{as:invert,eq:linear_exogenous}, 
\[ 
X - \E[\sigma^{-1}(S, P, \tilde X) \mid \bar{X},Z] = 0,
\numberthis \label{eq:bh_integral_eqn}
\] while \cref{as:completeness} ensures the solution to this integral equation in $\sigma^{-1}$ is
 unique. 
 
  \cite{berryhaile}'s identification result leads to a standard intuition on the need for exogenous variation in the special characteristic $X$. Assumptions \emph{(i)}--\emph{(iii)} yield a structural equation, $X = \sigma^{-1}(S, P, \tilde X) - \xi$, with $2J$ endogenous variables $(S,P)$ on the right-hand side after conditioning on $\tilde X$. This suggests a need for $2J$ exogenous instruments: $J$ ``for the prices'' $P$ and an additional $J$ ``for the shares'' $S$. When $X$ is exogenous, it can serve as the latter set of instruments since it is excluded from the right-hand side. The role of exogenous characteristics, then, is not to generate
 variation in prices but to generate variation in shares conditional on prices so that one can disentangle
 the two arguments of $\sigma^{-1}$. 
 The completeness condition ensures that transformations of $X$ and $Z$ provide sufficient variation for this. 

For price counterfactuals, we work under a partial relaxation of \cref{eq:linear_exogenous} that only imposes conditional exogeneity of the price instruments:

 \smallskip
 \begin{as}
   \label{as:exogenous}
    $Z \indep \delta \mid \bar X.$
 \end{as}

 \smallskip
\noindent This restriction strengthens the mean independence in \eqref{eq:linear_exogenous} to  full independence of $Z$ given $\bar{X}$ but crucially drops the exogeneity of $\bar{X}$.\footnote{\Cref{as:exogenous} is implied by the more standard $Z \indep \xi \mid \bar X$ but is weaker when $\xi$ has a higher dimension than $\delta$.
We do not view the strengthening from mean- to full independence as meaningfully restrictive, since there is little reason to distinguish $\xi$ from some transformation $g(\xi)$. Full independence amounts to mean-independence of any $g(\xi)$, treating all functions symmetrically.
}
\Cref{as:exogenous} can be motivated by viewing $Z$ as a
set of supply-side shocks drawn in some true or natural experiment after product
characteristics are determined \citep{borusyak2025estimating}.\footnote{For example, in
the U.S. automobile market, $Z$ may contain exchange rate shocks in cars'
countries of production that are excluded from demand and drawn randomly after cars are designed but before prices
are set. In this case $Z\indep (\xi,\bar X)$, implying \cref{as:exogenous}. See \citet{ackerbergcrawford} and \citet{borusyak2025estimating} for further discussion; 
Appendix A of \citet{berryhaile24} shows how \cref{as:exogenous} can arise for different types of price instruments in various economic models.}

Our relaxation of \Cref{eq:linear_exogenous} is motivated by three related misspecification
concerns. First, suppose the functional form $\delta (\bar X, \xi) = X + \xi$ is correctly
specified but characteristics $\bar X$ are chosen strategically by firms with partial
knowledge of demand conditions as captured by $\xi$. Then $\E [\xi\mid\bar{X}]=0$ is
unlikely to hold. Second, $\xi$ may include physical product characteristics chosen by firms along
with $\bar{X}$ but unobserved by the econometrician. Again, there is little reason they
would be uncorrelated with $\bar{X}$. 
Finally, suppose that characteristics $\bar{X}$ are fully independent of $\xi$ but that the functional form of $\delta$ is not linear or does not exclude $\tilde X$: i.e., $\delta(\bar X, \xi) \neq X+\xi$.
Then \eqref
{eq:linear_exogenous} need not hold, and model-implied price counterfactuals relying on
\eqref{eq:linear_exogenous}  may be incorrect  \citep{andrews2025structural}. This in particular accommodates the view of $\xi$ as a ``statistical disturbance'' rather than an economically meaningful object, in which case $\bar{X}\indep \xi$ by definition of $\xi$ but  exclusion and linearity are restrictive.\footnote{Any two random vectors $(\delta, \bar X)$ can be represented as $\delta = \tilde\delta(\bar X, \tilde\xi)$ for some $\tilde\xi \indep \bar X$ and some measurable function $\tilde\delta$---for instance by using Knothe--Rosenblatt (conditional quantile) transforms. In such a representation, $\tilde\xi$ need not have a straightforward economic interpretation.
} Relaxing the
linear index functional form substantively weakens restrictions on the heterogeneity of causal
effects of $\bar X$ on $S$ \citep{chen2025reinterpreting}; it also allows
$\xi$ to be multi-dimensional for each product, in contrast to most demand models in the literature.\footnote{For instance, Appendix B of \citet{berryhaile} contains a nonseparable model in which $\delta = \delta(X, \xi)$, but restricts the map coordinate-wise: $\delta_j = \delta_j(X_j,\xi_j)$ with the function increasing in $\xi_j$. This requires $\xi$ to be of dimension $J$, in addition to other restrictions that we remove.}

\subsection{Identification strategy}
Under \cref{as:exogenous}, 
\begin{align}
  \E[\delta \mid  \bar{X},Z] = \E[\delta \mid \bar
   {X}] \equiv k_0(\bar{X}) \numberthis \label{eq:moment_condition_with_z}
\end{align}
for some unknown $k_0$.
This motivates our identification strategy. The true inverse demand function, which returns $\delta$, is mean-independent of $Z$ given  $\bar{X}$:
\begin{equation}\label{eq:cond_moment} \E[\sigma^{-1}(S,P,\tilde{X}) \mid \bar{X},Z] = \E[\sigma^{-1}(S,P,\tilde{X}) \mid \bar{X}].\end{equation}
We can thus rule out some candidate inverse demand functions $\check\sigma^{-1}(S,P,\tilde X)$: those  that violate the
conditional moment restriction \eqref{eq:cond_moment}.\footnote{\cref{as:exogenous} implies conditional independence of higher moments of $\sigma^{-1}(S,P,\tilde X)$ too, which could in principle be used for identification. We focus on identification via first moments, which is more aligned with \cite{berryhaile} as well as common parametric estimation strategies.} 

To develop this strategy, we first ease some notation. Our identification results fully condition on the ``non-special'' characteristics $\tilde X$. To simplify notation, we suppress $\tilde X$ and replace $\bar{X}$ with $X$ throughout.

We next define a (potentially non-sharp) identified set for $\sigma^{-1}$:
\begin{align*}
  \Theta_I \equiv 
 \br{h(s,p), \text{ $\R^J$-valued and invertible in $s$}\colon \E
 [h(S,P) \mid X,
 Z] = k(X) \text{ for some $k$}}. 
\end{align*}
This set is nonempty because \eqref{eq:cond_moment} implies it contains at least the true $\sigma^{-1}$.

The identified set can be more constructively defined via recentered instruments,  which take the form $R( X,Z) - \E[R( X,Z) \mid  X]$ for some function $R$ \citep{borusyak2023nonrandom}. The set of functions in $\Theta_I$ is exactly the set of functions that are orthogonal to any recentered instrument:
\begin{lemma}
    Let $h$ be $\mathbb{R}^J$-valued, invertible, and square-integrable. Then $ h\in \Theta_I$ if and only if $\E[h(S, P) \br{R( X, Z) - \E[R( X,Z) \mid X]} ] = 0$ for all square-integrable $R$.
\end{lemma}
\noindent Intuitively, a candidate $\check\sigma^{-1}$ can only be conditionally mean-independent of $Z$, and thus in $\Theta_I$, if it is uncorrelated with all conditionally mean-zero functions of $Z$.

The identified set is not a singleton that contains $\sigma^{-1}$ only: under \cref{as:exogenous}, all transformations of the true $\sigma^{-1}$ are independent of
 $Z$ given $X$ and hence in $\Theta_I$.
However, this does not necessarily impede point-identification of price counterfactuals: in fact, it suffices for $\sigma^{-1}$ to be identified only up to some transformation.\footnote{Statements like \cref{lemma:transform} are also intermediate steps in \citet[][p. 950]{matzkin2008identification}, \citet[][p. 1190]{torgovitsky2015identification}, and \citet[][p. 1143-1144]{berryhaile24}. In particular, \citet{berryhaile24} normalize their index to avoid considering analogous indeterminacy as with $T$.}
 \smallskip
\begin{restatable}{lemma}
{lemmatransform}
\label{lemma:transform}
Suppose all $\check{\sigma}^{-1}
  \in \Theta_I$ have the form $\check{\sigma}^{-1}(S, P) = T(\sigma^
  {-1}(S,P))$ for some $T : \R^J \to \R^J$. Then  price counterfactuals are identified.

\end{restatable}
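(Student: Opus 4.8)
The plan is to show that every candidate $\check\sigma^{-1}\in\Theta_I$ produces exactly the same price counterfactual as the truth. Since $\Theta_I$ is pinned down by the observable recentered-instrument moment conditions (its defining constraint $\E[h(S,P)\mid X,Z]=k(X)$ involves only observables) and contains the true $\sigma^{-1}$, it suffices to prove that any element of $\Theta_I$ yields the true counterfactual; then picking any such element and forming its plug-in counterfactual identifies the object of interest. Concretely, a candidate's counterfactual at a price $p'$ for a market with observed $(S,P)$ is the plug-in $\check\sigma(\check\sigma^{-1}(S,P),p')$, where $\check\sigma(\cdot,p')$ is the inverse of $\check\sigma^{-1}(\cdot,p')$ in its first argument (which exists because membership in $\Theta_I$ requires invertibility in $s$). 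So the goal is to establish $\check\sigma(\check\sigma^{-1}(S,P),p')=\sigma(\sigma^{-1}(S,P),p')=\sigma(\delta,p')$, the true counterfactual.

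The key structural fact I would exploit is that the transformation $T$ in the hypothesis does not depend on $p$. For each fixed $p$, both $\sigma^{-1}(\cdot,p)$ (by \Cref{as:invert}) and $\check\sigma^{-1}(\cdot,p)=T\circ\sigma^{-1}(\cdot,p)$ (by $\check\sigma^{-1}\in\Theta_I$) are invertible in $s$. Composing, $T=\check\sigma^{-1}(\cdot,p)\circ\sigma(\cdot,p)$; since the left-hand side is the same function for every $p$, I obtain the pointwise identity $\check\sigma^{-1}(\sigma(\delta,p),p)=T(\delta)$ for all $\delta$ and all relevant $p$. This says that the candidate inverse demand, evaluated along the true demand curve, returns $T(\delta)$ regardless of price, and it is precisely what lets me transport the relationship from the realized price to the counterfactual price $p'$.

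With this identity in hand the computation is short. For the realized market the candidate reads off $\check\delta:=\check\sigma^{-1}(S,P)=T(\sigma^{-1}(S,P))=T(\delta)$, and its counterfactual share at $p'$ is the unique $s'$ solving $\check\sigma^{-1}(s',p')=\check\delta=T(\delta)$. Setting $p=p'$ in the transported identity gives $\check\sigma^{-1}(\sigma(\delta,p'),p')=T(\delta)=\check\delta$, so $s'=\sigma(\delta,p')$ solves the equation; invertibility of $\check\sigma^{-1}(\cdot,p')$ in $s$ makes the solution unique, yielding $\check\sigma(\check\delta,p')=\sigma(\delta,p')$ as desired. Note this argument never needs $T$ itself to be invertible (though it is, being a composition of invertible maps)---only invertibility of the candidate's inverse demand at the counterfactual price.

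I expect the main obstacle to be support and domain bookkeeping rather than the algebra. The transported identity is evaluated at the point $(\sigma(\delta,p'),p')$, which pairs the realized market's $\delta$ with a new price $p'$; if $\delta$ and $P$ are dependent, this pair need not lie in the observed joint support of $(S,P)$, so the almost-sure hypothesis $\check\sigma^{-1}(S,P)=T(\sigma^{-1}(S,P))$ may not apply there verbatim. I would resolve this by reading the hypothesis as a functional identity on the product of the relevant $s$- and $p$-supports---so that it holds for each fixed $p'$ in the support of $P$ and each $\delta$ in the support of the index---which is exactly the kind of measurability and support regularity that the paper defers to \Cref{sec:proofs}. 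Under such a condition the evaluation at $(\sigma(\delta,p'),p')$ is legitimate and the argument goes through.
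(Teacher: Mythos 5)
Your proposal is correct and takes essentially the same route as the paper's proof: both arguments hinge on the identity $\check\sigma^{-1}(\sigma(\delta,p),p)=T(\delta)$ with $T$ free of $p$, so that inverting the candidate at the counterfactual price $p'$ recovers $\sigma(\delta,p')$ exactly, with $T$ cancelling. The only presentational difference concerns the support caveat you flag: rather than strengthening the hypothesis to a product-support functional identity, the paper resolves it through its formal definition of identification, which requires the counterfactual map $C(s,p,p')=h^{-1}(h(s,p),p')$ to be correct only at tuples $(s,p,\delta,x,z)$ and $(s',p',\delta,x,z)$ lying in a common probability-one set $E$---that is, only at counterfactual prices that co-occur with the same $(\delta,x,z)$.
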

\smallskip
\noindent Intuitively, counterfactual quantities given hypothetical prices $p^\prime$ under a  candidate model $\check{\sigma}^{-1}=T(\sigma^{-1})$ are computed from the observed $(S,P)$ as:
\begin{align*}
 \check{\sigma}(\check{\sigma}^{-1}(S,P),p^\prime)=\sigma(T^{-1}(T(\sigma^{-1}(S,P))),p^\prime)=\sigma(\delta,p^\prime).
\end{align*}
The transformation $T$ cancels, so it does not affect the counterfactual $\sigma(\delta,p^\prime)$.\footnote{Note that $T$ is implicitly invertible because $\Theta_I$ only includes invertible $\check\sigma^{-1}$ candidates.} 
 
Our identification strategy is therefore successful when $\Theta_I$ contains nothing but transformations of $\sigma^{-1}$. We next consider when this condition holds.

\subsection{A surprisingly insightful case: exogenous prices}
We start  with the case of $Z=P$: i.e., where prices are themselves exogenous ($P\indep \delta\mid X$). This is  unrealistic in many settings, but it is an illuminating baseline case for developing our strategy (and notably one which \citet{berry2021foundations} call ``surprisingly difficult''). In particular, it clarifies why exogenous variation in $P$ can be enough and what the endogenous variation in $X$ nevertheless contributes. This case can also be viewed as the limit of our general setting with perfect price instruments. 

With exogenous prices, price counterfactuals are identified under the same completeness condition as in \citet{berryhaile}---despite our substantial relaxation of their model.
Indeed, all characteristics can be fully endogenous in this case:
\begin{prop}
\label{prop:exogenous_price}
Suppose $P = Z$ and \cref{as:completeness,as:exogenous,as:invert}
 hold.
Then price counterfactuals are identified.
\end{prop}
\noindent To see this result, consider any candidate $\check\sigma^{-1}\in\Theta_I$. We can write $\check\sigma^{-1}(S,P)=\check\sigma^{-1}(\sigma(\delta,P),P)\equiv H(\delta,P)$ for an unknown $H$ such that $\E[H(\delta,P) \mid P, X] =
k (X)$. Fixing some price value $p_0$,\footnote{When $P$ is continuously distributed, conditioning on a realization of it requires measure-theoretic care. See \cref{prop:exogenous_price_x_first} and \cref{prop:p-index_x} for a formal argument.} we then have:
\begin{align*} 
0 &= \E[H(\delta,P)\mid P,X]-\E[H(\delta,p_0)\mid P=p_0,X]\\
&= \E[H(\delta,P)\mid P,X] - \E[H(\delta,p_0)\mid P,X]\\
\implies 0 &= H(\delta, p) - H(\delta, p_0) 
\end{align*}
where the first line uses the fact that $P$ does not enter $k(X)$, 
the second line uses \cref{as:exogenous}, and the third line uses \cref{as:completeness}. Thus $H(\delta, p) = H(\delta, p_0) $ does not depend on $p$: $H(\delta,P)=H(\delta)$. In turn, this implies that the candidate inverse demand function is a transformation of the
 true inverse demand function: \[\check\sigma^{-1}(S,P)=H(\delta,P)=H(\delta)=H(\sigma^{-1}(S,P)).\] Since this holds for any $\check\sigma^{-1}\in\Theta_I$, price counterfactuals are identified
 by \Cref{lemma:transform}.

 \cref{prop:exogenous_price} contains two key intuitions. First, for identifying price counterfactuals, it is enough to detect causal effects of prices only. Indeed, since it suffices  to identify $\sigma^{-1}(S,P)$ only up to a transformation (\cref{lemma:transform}), we only need to find some function $h(S,P)$ such that $h(\sigma(\delta,P),P)\equiv H(\delta,P)$ is a transformation of $\delta$ only with no dependence on $P$.  From a causal inference perspective, this means finding some outcome $H=h(S,P)$ which is fully unaffected by $P$. 
 
 Importantly, here we are only interested in whether $P$ has \emph{any} causal effects on $H$ and not in disentangling the effects that come through $S$ versus $P$. We thus sidestep the need for separate exogenous variation in $S$ given $P$, which is what compels characteristic exogeneity in \citet{berryhaile}. Hence, to trace out \emph{average} price effects, we require only $J$-dimensional exogenous price variation---not $2J$ instruments.

 The second key intuition is on the role of the additional $J$-dimensional variation in $X$ as a \emph{proxy} for $\delta$.\footnote{A related argument by \citet{ahn2025prognostic}, in a context with an exogenous binary treatment $P$,  uses $X$ in a similar way and calls it a ``prognostic variable.''} Tracing out average effects of $P$ is not generally enough: while \eqref{eq:demand} implies that the effect of $P$ on $S$ (and therefore on $H$) is fully governed by the index $\delta$, this index is unobserved. Zero average price effects, which average across the unobserved $\delta$, do not by itself imply
zero price effects for a given $\delta$.\footnote{Borrowing an example from \citet{benkard2006nonparametric}, if $X$ is independent of $\delta$ and thus variation in $X$ is not useful, then one can choose $\delta \sim \Norm(0, I_J)$ and $H(\delta, p)$ to be a $p$-dependent rotation of $\delta.$ This construction implies that $H(\delta, P) \indep (X, P)$, and thus, absent other conditions that rule out such $H$, the set $\Theta_I$  would include inverse demand candidates that are not just transformations of $\delta$---leading to incorrect price counterfactuals.}  This gap can be filled by leveraging variation in $X$, so long as it strongly correlates with $\delta$---i.e., ``proxies'' for it. For each observed value of $X$, exogenous price variation reveals the conditional average price effects on $H$.  
Completeness of $(\delta, P) \mid (X, Z)$---which here, with $Z=P$ and $\delta\indep P\mid X$, is equivalent to completeness of $\delta \mid X$---ensures that this set of identified effects is rich enough to span the set of price effects for each unobserved value of $\delta$. Hence, under completeness, the finding of no conditional-on-$X$ average price effects ensures no causal dependence of $H$ on $P$.\footnote{Section 3 in \citet{chen2025reinterpreting} shows that many counterfactual predictions from structural models analogously extrapolate from a lack of average causal effects to a total lack of causal effects.} 
 
This proxy role of $X$ is substantively different from the role that $X$ plays in \citet{berryhaile}, as an instrument ``for the shares.'' Here, $X$ need not be exogenous with respect to the economically meaningful unobserved demand shock $\xi$, nor must it causally shift $\delta$. For instance, it is perfectly fine if $X$ correlates with $\delta$ because of selection only. As a result, while practitioners should justify why $Z$ is as-good-as-randomly assigned and why it is not strategically chosen (in order to satisfy \cref{as:exogenous}), no such arguments are needed for $X$. 

Of course, one can always make $X$ ``technically exogenous'' by rewriting \eqref{eq:cond_moment} as:
\begin{equation*}
0 = \sigma^{-1}(S, P) - k_0(X) + \tilde\xi, \qquad \E[\tilde\xi \mid X, P] = 0.
\end{equation*}
But since $k_0$ is unknown, $X$ is not excluded from this model: variation in it informs both $k_0$ directly and $\sigma^{-1}$ indirectly through $S$. This  highlights that we cannot merely redefine residuals, make $X$ exogenous without loss, and appeal to the argument of \citet{berryhaile}; fundamentally, our results rely on a distinct use of $X$. 

The proxy role of $X$ is also substantively different from the role of other characteristics $\tilde{X}$, which---being fully conditioned on---might be viewed here as ``controls.'' While there are no restrictions on the amount or nature of variation in $\tilde{X}$ or its effects on demand, $X$ must be strongly correlated with $\delta$ and excluded from demand in the sense of the index restriction in \eqref{eq:demand}. Indeed, the proxy role of $X$ is what gives the index restriction bite: with a constant or completely randomly generated $X$, any demand model could be written in the form of \eqref{eq:demand}. 

 We next generalize these results and intuitions to settings with endogenous prices.

\subsection{Faithfulness and identification with endogenous prices}

When $P\neq Z$, we continue to need sufficiently rich  variation in $X$ to proxy for the unobserved $\delta$. We additionally need rich enough variation in the exogenous instruments $Z$ to trace out the causal effects of the now-endogenous prices. Perhaps surprisingly, the usual completeness condition is not the appropriate formalization of these two requirements. Instead, we consider a new condition to replace \cref{as:completeness}:

\begin{as}[\textbf{Faithfulness}]
  \label{as:faithfulness} The distribution of $(\delta, P) \mid(X,Z)$ is \emph
   {faithful}: for all $\R^J$-valued $H(\delta,P)$, if $
  \E[H(\delta,P) \mid X,Z] = k(X)$ does not depend on $Z$ then $H$ does not depend on $P$, i.e., $H(\delta,P) = H(\delta). 
$
\end{as}

Under faithfulness, price counterfactuals are identified even with endogenous prices:\begin{restatable}{prop}{lemmamain}
\label{lemma:main}
Under \Cref{as:invert,as:exogenous,as:faithfulness}, price counterfactuals are identified.
\end{restatable}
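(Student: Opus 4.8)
The plan is to verify the hypothesis of \Cref{lemma:transform} and then invoke it directly. Concretely, I would show that under \Cref{as:invert,as:exogenous,as:faithfulness} every candidate $\check\sigma^{-1}\in\Theta_I$ takes the form $\check\sigma^{-1}(S,P)=T(\sigma^{-1}(S,P))$ for some $T:\R^J\to\R^J$; \Cref{lemma:transform} then delivers identification of price counterfactuals. Before that, I would record that the strategy is not vacuous: since $S=\sigma(\delta,P)$, \Cref{as:invert} gives $\sigma^{-1}(S,P)=\delta$, and \Cref{as:exogenous} gives $\E[\sigma^{-1}(S,P)\mid X,Z]=\E[\delta\mid X,Z]=\E[\delta\mid X]=k_0(X)$, which does not depend on $Z$. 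Hence the true $\sigma^{-1}$ lies in $\Theta_I$.

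Next I would fix an arbitrary $\check\sigma^{-1}\in\Theta_I$ and substitute $S=\sigma(\delta,P)$ to define the composed map $\check H(\delta,P):=\check\sigma^{-1}(\sigma(\delta,P),P)$, which is $\R^J$-valued. By the definition of $\Theta_I$, $\E[\check H(\delta,P)\mid X,Z]=\E[\check\sigma^{-1}(S,P)\mid X,Z]=k(X)$ for some $k$, so this conditional expectation does not depend on $Z$. Applying \Cref{as:faithfulness} to $\check H$ then forces $\check H$ to be free of $P$, i.e.\ $\check H(\delta,P)=\check H(\delta)$. Unwinding the substitution through $\delta=\sigma^{-1}(S,P)$ yields $\check\sigma^{-1}(S,P)=\check H(\sigma^{-1}(S,P))$, so taking $T:=\check H$ produces exactly the form required by \Cref{lemma:transform}; invertibility of $T$ follows because $\Theta_I$ contains only candidates invertible in $s$, as noted in the footnote to \Cref{lemma:transform}. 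Since this holds for every element of $\Theta_I$, the hypothesis of \Cref{lemma:transform} is met and the conclusion follows.

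The main subtlety---more a point to handle carefully than a genuine obstacle---is ensuring that \Cref{as:faithfulness}, a statement about conditional expectations, yields the pointwise (almost sure) functional identity $\check\sigma^{-1}(S,P)=T(\sigma^{-1}(S,P))$ between the observed random variables that \Cref{lemma:transform} presumes. This requires reading the conclusion ``$H(\delta,P)=H(\delta)$'' as an almost-sure equality in the joint law of $(\delta,P)$ and transporting it along the invertible map $\delta=\sigma^{-1}(S,P)$, which in turn needs the measurability and support regularity that make $\check H$ a well-defined $\R^J$-valued function and guarantee the \emph{same} $T$ applies across the support of $P$. These are precisely the technical conditions deferred to \Cref{sec:proofs}; with them in hand, the remainder is a direct assembly of faithfulness with \Cref{lemma:transform}.
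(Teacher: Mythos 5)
Your proposal is correct and follows essentially the same route as the paper: verify nonemptiness of $\Theta_I$ via the true $\sigma^{-1}$, compose any candidate with $\sigma$ to form $\check H(\delta,P)$, apply faithfulness to strip the $P$-dependence, and unwind through $\delta=\sigma^{-1}(S,P)$ to obtain the transformation form required by \Cref{lemma:transform}. Your closing remark on the almost-sure/measurability care is exactly what the paper's formal version in the appendix handles via the function classes $\mathcal H$, $\mathcal K$, $\mathcal F$.
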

\noindent  The proof of \cref{lemma:main}  follows the same steps as for \cref{prop:exogenous_price}: write $\check\sigma^
{-1} (S,P)=\check\sigma^{-1}(\sigma(\delta,P),P) \equiv H(\delta,P)$ and note
that
$\E[H(\delta,P)\mid X,Z]=\E[\check\sigma^{-1}(S,P)\mid
    X,Z]=k(X)$ for some $k$ if $\check\sigma^{-1}\in\Theta_I$.
Under faithfulness, this means that $\check\sigma^{-1}(S,P)=H
 (\delta,P)=H(\delta)=H(\sigma^{-1}(S,P))$ is a transformation of the
 true inverse demand function. All price counterfactuals are thus identified
 by \Cref{lemma:transform}.  

The two key intuitions from the exogenous price case are retained in \cref{lemma:main}. First, because we are only interested in whether $P$ has any effects on $H(\delta,P)=h(S,P)$, and not in disentangling its direct effects from indirect effects through $S$, we avoid the need for separate exogenous variation in $S$ given $P$.  Second, though variation in $X$ need not be exogenous, it is still critical for proxying for the unobserved $\delta$. As in the exogenous price case, complex interactions between $\delta$ and $P$ can potentially average out and threaten identification by obscuring true price effects on $H(\delta,P)$. Here faithfulness, rather than completeness, rules out such scenarios via rich variation in $X$ that strongly predicts $\delta$, together with strong price instruments $Z$.\footnote{\Cref{sec:no_x} shows how, under additional restrictions, analogs of faithfulness can hold without an $X$ that satisfies the index restriction and proxies for $\delta$. These include the case where $P$ combines linearly with $\delta$ (as in Proposition 1 of \cite{borusyak2025estimating}), and where $J=1$ with $\sigma$ monotone in $\delta$ (as in \cite{imbensnewey}, \cite{torgovitsky2015identification}, and \cite{d2015identification}).}

How different is faithfulness from the usual completeness condition? They are similar in kind: both are high-level conditions that yield identification  by directly asserting that certain operations are injective. Namely, they link the variation in some $H(\delta,P)$ that is detectable via conditional expectations in $(X,Z)$ to the structural dependence of $H$ on $(\delta,P)$. Completeness requires that if $H(\delta, P)$ varies then $\E[H(\delta,P) \mid X, Z]$ also varies. In this sense, conditional expectations ``faithfully reflect''   changes in $(\delta,P)$. Faithfulness instead says that if prices truly affect a function of demand primitives, these effects must show up in the instrument-induced price variation conditional on 
  $X$. Conversely, functions with no instrument-induced conditional-on-$X$ variation cannot depend on  $P$.\footnote{Note that faithfulness is different than conditional-on-$X$ completeness---i.e., completeness of $(\delta,P,X)\mid(X,Z)$. This condition cannot hold here as $Z$ generates no variation in $\delta$.}

In the nonparametric instrumental variables literature, 
completeness is often treated as a technical condition which encodes a sense of instrument strength while not imposing substantive restrictions \citep[see, e.g.,][]{newey2003instrumental,ai2003efficient,darolles2011nonparametric}.\footnote{This is especially true when nonparametric identification is viewed as a theoretical argument for how particular parametric structure does not ``drive'' conclusions. In any given parametric model, one could posit a strong-instrument-type condition where all non-constant functions $w \in \mathcal{W}$ of endogenous variables $w(S, P)$ are assumed to correlate with some function of $(X,Z)$, over some parametrized class $\mathcal{W}$. Completeness is the limit of such conditions as we enlarge $\mathcal{W}$ to include all (integrable) functions. } Lower-level conditions are given by, e.g., \citet{d2011completeness} and \citet{andrews2011examples}. We argue that faithfulness should be similarly treated as a technical condition which encodes a sense of instrument strength for $Z$ as well as proxy strength for $X$. 

To bolster this argument, we next detail connections between the two conditions---showing, in particular, that faithfulness follows from completeness under a wide range of different lower-level conditions. Together these results suggest faithfulness and completeness, while non-nested in general, are close cousins. Thus, to the extent nonparametric identification under completeness reassures practitioners that demand can be flexibly estimated with exogenous characteristics and price instruments, our results under faithfulness should likewise reassure that price counterfactuals can be flexibly estimated with recentered instruments.

\section{Relationship between completeness and faithfulness}
\label{sec:sufficient_conditions}

We first provide a useful calibration: when $Z$ is a perfect instrument (i.e., the exogenous price case), faithfulness and completeness are exactly equivalent.
  \begin{prop}
  \label{prop:equiv}
      Suppose $P = Z$ and \cref{as:exogenous} holds. Then \cref{as:completeness} is equivalent to \cref{as:faithfulness}.  Hence \cref{prop:exogenous_price} is a special case of \cref{lemma:main}.
  \end{prop}
\noindent 
In this sense, faithfulness is not an exotic new assumption: it is the natural technical condition for leveraging the conditional moment restriction \eqref{eq:cond_moment}. 

Outside of the $P=Z$ case, faithfulness and completeness are distinct conditions. Two counterexamples illustrate this: \Cref{sub:fwithoutC} exhibits a class of demand models where faithfulness holds but $(\delta, P) \mid (X, Z)$ needs not be complete, while \Cref{sub:failure_of_faithfulness} presents a distribution $(\delta, P) \mid (X,Z)$ that is complete but not faithful. 

We next show that the two conditions are nevertheless closely related, in the sense that each condition implies the other under additional restrictions. 

\subsection{When does completeness imply faithfulness?}
\label{subsec:C_to_F}

We first present four non-nested conditions under which completeness implies faithfulness; these conditions are sufficient but not necessary. The sufficient conditions upper-bound the extent to which faithfulness is ``stronger'' than completeness. 

\subsubsection{Restrictions on price-setting}\label{sec:model-p}
We start from two sufficient conditions that restrict how price depends on the
observables and unobservables of the model. Both conditions extend the case of exogenous prices. For some function $f$, write \[
  P = f(X, Z, \delta, 
  \tilde\omega), \qquad \tilde \omega \indep (X, Z, \delta) \numberthis 
  \label{eq:price_eqn_main}.
\]
Here $\tilde \omega$ is an unobservable of arbitrary dimension that captures residual
variation in $P$ that is independent of $(X, Z, \delta)$. So far, \Cref{eq:price_eqn_main} is without loss of generality.\footnote{Note that $f$ is not a structural function because of the parametrization of $\tilde\omega$, but it is consistent with any structural formulation. Indeed, for any structural shock $\omega$ possibly correlated with $(X,Z,\delta)$ and $P=\tilde f(X,Z,\delta,\omega)$, one can represent $\omega = f_\omega(X,Z,\delta, \tilde\omega)$ and $P=\tilde f(X,Z,\delta,f_\omega(X,Z,\delta,\tilde\omega))$.} In what follows we place different restrictions on $f$. At the end of this subsection, we show that these restrictions can be restated as similar conditions on marginal costs under Bertrand--Nash pricing.

The first sufficient condition is that $X$ and $Z$ enter price only through the utility index $\delta$ and an index $\lambda(X, Z)=\lambda$ that is invertible in $Z$:
\smallskip
\begin{as}
\label{as:p-index} $P \indep (X, Z) \mid (\lambda(X, Z), \delta)$, for some $\lambda(x,
z)$ that is invertible in $z$. Equivalently, in \eqref{eq:price_eqn_main}, $ f(X, Z,
\delta,\tilde \omega) = f(\lambda
(X,Z), \delta, \tilde \omega)$.
\end{as} 
\smallskip
\noindent This assumption is satisfied, in particular, when $X$
 enters price in \eqref{eq:price_eqn_main} only through $\delta$---paralleling how it enters $\sigma(\cdot)$. Under this index restriction for price, \cref{as:p-index} is satisfied with 
 $\lambda(X, Z) = Z$. By further setting $Z=P$, this special case also nests exogenous prices and generates \cref{prop:exogenous_price} as a corollary. In general we have:

\smallskip
\begin{restatable}{prop}{proppindex}
\label{prop:p-index}
  \cref{as:p-index,as:exogenous,as:completeness} imply \cref{as:faithfulness}.
\end{restatable}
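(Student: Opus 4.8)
The plan is to fix an arbitrary $\R^J$-valued $H(\delta,P)$ with $\E[H(\delta,P)\mid X,Z]=k(X)$ for some $k$, and to deduce $H(\delta,P)=H(\delta)$, thereby verifying \cref{as:faithfulness}. Write $\lambda=\lambda(X,Z)$. Using the conditional independence $P\indep(X,Z)\mid(\lambda,\delta)$ from \cref{as:p-index}, I would define the integrated function $G(\delta,\lambda)\equiv\E[H(\delta,P)\mid\delta,\lambda]$. Since the law of $P$ given $(X,Z,\delta)$ depends on $(X,Z)$ only through $\lambda$, this gives $\E[H(\delta,P)\mid X,Z,\delta]=G(\delta,\lambda)$.

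The first substantive step is to show $G$ does not depend on its second argument. Taking a further expectation and invoking \cref{as:exogenous} (so that the conditional law of $\delta$ given $(X,Z)$ coincides with that given $X$), I obtain $\E[H(\delta,P)\mid X,Z]=\E[G(\delta,\lambda)\mid X]=:\Psi(X,\lambda)$, where $\Psi(X,\ell)\equiv\E[G(\delta,\ell)\mid X]$. By hypothesis this equals $k(X)$ and is constant in $Z$; because $\lambda(X,\cdot)$ is invertible in $z$, it follows that $\Psi(X,\ell)=k(X)$ for all $\ell$ in the range, i.e.\ $\E[G(\delta,\ell)-G(\delta,\ell_0)\mid X]=0$ for any reference $\ell_0$. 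Here I would use an auxiliary observation: completeness of $(\delta,P)\mid(X,Z)$ together with \cref{as:exogenous} implies completeness of $\delta\mid X$, since $\E[g(\delta)\mid X]=0$ gives $\E[g(\delta)\mid X,Z]=0$ and hence $g(\delta)=0$. Applying this to $g(\delta)=G(\delta,\ell)-G(\delta,\ell_0)$ yields $G(\delta,\ell)=G(\delta,\ell_0)\equiv G(\delta)$ for all $\ell$.

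It remains to upgrade this conditional-mean statement to the pointwise conclusion. I would center $H$ by setting $H^{*}(\delta,P)\equiv H(\delta,P)-G(\delta)$. By the previous step $\E[H^{*}(\delta,P)\mid X,Z,\delta]=G(\delta,\lambda)-G(\delta)=0$, so by the tower property $\E[H^{*}(\delta,P)\mid X,Z]=0$. Completeness of $(\delta,P)\mid(X,Z)$ (\cref{as:completeness}) then forces $H^{*}(\delta,P)=0$, i.e.\ $H(\delta,P)=G(\delta)$ depends only on $\delta$, which is exactly \cref{as:faithfulness}.

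The main obstacle, as in \cref{prop:exogenous_price}, is bridging from ``the average of $H$ over the residual $\tilde\omega$-noise in $P$ is undetectable in $Z$'' to ``$H$ itself does not vary with $P$.'' The former is delivered by invertibility of $\lambda$ in $z$ plus the implied completeness of $\delta\mid X$ (the analogue of the step from \cref{eq:exo_price_2} to \cref{eq:exo_price_3}); the genuinely load-bearing move is the \emph{second}, separate application of completeness to the centered function $H^{*}$, which converts mean-independence into the almost-sure identity. I would also need to discharge the suppressed regularity conditions---integrability of $H$, $G$, and $H^{*}$, and measurability of $\lambda^{-1}$---so that both invocations of completeness are legitimate.
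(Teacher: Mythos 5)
Your proof is correct, and it rests on the same core ingredients as the paper's argument: the integrated function $G(\delta,\lambda)=\E[H(\delta,P)\mid \delta,\lambda]$ (the paper's $\bar H$), invertibility of $\lambda(X,\cdot)$ in $z$ to vary $\lambda$ while holding $X$ fixed, \cref{as:exogenous} to swap conditioning sets, and \cref{as:completeness} to finish. The structural difference is in the endgame. The paper fixes a single reference value $\lambda_0$, shows $\E[\bar H(\delta,\lambda_0)\mid X,Z]=k(X)=\E[H(\delta,P)\mid X,Z]$, and applies completeness \emph{once} to the difference $\bar H(\delta,\lambda_0)-H(\delta,P)$, which simultaneously delivers that $H$ depends only on $\delta$. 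You instead apply completeness \emph{twice}: first the induced completeness of $\delta\mid X$ (your auxiliary observation, which is exactly the paper's footnote to \cref{prop:exogenous_price}) to show $G(\delta,\ell)$ is constant in $\ell$, and then full completeness of $(\delta,P)\mid(X,Z)$ to the centered function $H^{*}=H-G(\delta)$. Your route is slightly longer, but it buys something expositionally: it makes explicit the two-stage logic of \cref{sub:disc}---zero conditional-average price effects within $X$-strata, followed by extrapolation from averages to each $\delta$---whereas the paper's single application of completeness fuses these steps. One caveat that your sketch shares with the paper's own main-text display: the passage from ``$k(X)$ does not depend on $Z$'' to ``$\Psi(X,\ell)=k(X)$ for all $\ell$ (or for a common reference $\ell_0$)'' needs measure-theoretic care, since conditional expectations are defined only up to null sets and the range of $\lambda(x,\cdot)$ can vary with $x$; this is precisely what \cref{as:p-index_x}(2) and the Fubini-type argument in \cref{prop:p-index_x} are for, and it goes somewhat beyond the integrability and ``measurability of $\lambda^{-1}$'' issues you flag at the end.
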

\smallskip

To see how this result follows, note that for any realization $\lambda_0$ of $\lambda$:
\begin{align*}
k(X) &= \E[H(\delta, P) \mid X, Z] 
\\
& = \E[\E[H(\delta, P) \mid \delta, \lambda] \mid
X, Z] \tag{Iterated expectations, \cref{as:p-index}}
\\
&= \E[\E[H(\delta, P) \mid \delta, \lambda = \lambda_0] \mid X, Z = \lambda^{-1}
(X, \lambda_0)] \tag{$Z$ does not enter $k(X)$}\\
&= \E[\E[H(\delta, P) \mid \delta, \lambda = \lambda_0] \mid X] \tag{\cref{as:exogenous}} \\ 
&= \E[\E[H(\delta, P) \mid \delta, \lambda = \lambda_0] \mid X, Z]. \tag{\cref{as:exogenous}}
\end{align*} Hence for $H(\delta) = \E[H(\delta, P) \mid \delta, \lambda
 = \lambda_0]$ we have $\E[H(\delta) \mid X, Z] = k(X)$. Finally, by completeness (\cref{as:completeness}), $H
 (\delta, P) = H(\delta)$. 

The second sufficient condition instead imposes a separability condition on the derivatives of price with respect to $Z$:
\smallskip
\begin{as}
\label{as:model-p}

  In \eqref{eq:price_eqn_main}, $f$ is continuously differentiable in $Z$ with Jacobian
  $D_z f$, which satisfies the following separability condition: for measurable functions
  $A, B$,
    \[
      \underbrace{D_z f(X, Z, \delta,  \tilde\omega)}_{J \times d_z} = \underbrace{A(f(X, Z, \delta,
      \tilde \omega),
            \delta)}_{J \times J} \cdot \underbrace{B
      (X, Z)}_{J \times d_z} \numberthis \label{eq:separable_maintext}
    \]
    where $A(f(X, \delta,
      Z, \tilde \omega),
            \delta)$ and $B(X,Z)$ are full row rank with $\dim(Z)\equiv d_z \ge J$.
\end{as}
\smallskip
\noindent  \cref{as:model-p} holds, in particular, when  \[ P = f_0\bigg( f_1(X,Z) + f_2(X, \delta , \tilde \omega),
  \delta
  \bigg)  \numberthis \label{eq:fn_form}
\]
(with regularity conditions on $f_0$ and $f_1$ given in \cref{lemma:regular}). The
restriction in \eqref{eq:fn_form} is that $Z$ enters $P$ through an index $f_1(X,Z) + f_2
(X, \delta,\tilde \omega)$, which is a form of separability between observed and
unobserved cost shifters. Clearly, it is satisfied when $P=Z$.\footnote{Moreover, under \eqref{eq:fn_form} and suitable regularity conditions on $f_0, f_1, f_2$, faithfulness is \emph{equivalent} to completeness. To see this, suppose faithfulness holds but completeness fails. By \cref{prop:fimpliesc_maintext}, below, $\delta \mid X$ is not complete: there exists $h(\delta)\ne 0$ such that $\E[h(\delta)\mid X]=0$. Consider $\E[h(\delta) f_0^{-1}(P, \delta) \mid X, Z] = \E[h(\delta) \mid X]f_1(X, Z) + \E[h(\delta) f_2(X, \delta, \tilde\omega) \mid X, Z]$. The first term is 0 and the second does not vary with $Z$. Yet, $h(\delta)f_0^{-1}(P,\delta)$ depends on $P$, contradicting faithfulness.
}

\smallskip
\begin{prop}
\label{prop:model-p}

\cref{as:model-p,as:completeness,as:exogenous} imply \cref{as:faithfulness}. 
\end{prop}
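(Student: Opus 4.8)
The plan is to verify \cref{as:faithfulness} by differentiating its defining moment condition in $Z$ and exploiting the separable structure of $D_z f$ imposed by \cref{as:model-p}. Fix an $\R^J$-valued $H(\delta,P)$ that is continuously differentiable in its price argument (as the formal statement's regularity conditions will require) and satisfies $\E[H(\delta,P)\mid X,Z]=k(X)$ for some $k$ not depending on $Z$. Under \cref{as:exogenous} we have $Z\indep\delta\mid X$, and \eqref{eq:price_eqn_main} gives $\tilde\omega\indep(X,Z,\delta)$; hence, conditional on $(X,Z)=(x,z)$, the pair $(\delta,\tilde\omega)$ is distributed as the law of $\delta\mid X=x$ times the marginal of $\tilde\omega$, independently of $z$. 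Writing the moment condition as
\[
\iint H\bigl(d,f(x,z,d,w)\bigr)\,dF_{\delta\mid X}(d\mid x)\,dF_{\tilde\omega}(w)=k(x),
\]
I would differentiate both sides in $z$---justifying the interchange of $D_z$ and the integral by dominated convergence under the maintained smoothness and integrability conditions---and use that the right-hand side is free of $z$ to obtain
\[
\iint D_p H\bigl(d,f(x,z,d,w)\bigr)\,D_z f(x,z,d,w)\,dF_{\delta\mid X}(d\mid x)\,dF_{\tilde\omega}(w)=0.
\]

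Next I would substitute the separability $D_z f(x,z,d,w)=A\bigl(f(x,z,d,w),d\bigr)\,B(x,z)$ and pull $B(x,z)$ outside the integral, since it depends on neither $d$ nor $w$. Writing $P=f(x,z,d,w)$ and reading the remaining integral as a conditional expectation, this gives $\bigl(\E[D_p H(\delta,P)\,A(P,\delta)\mid X=x,Z=z]\bigr)\,B(x,z)=0$. Because $B(x,z)$ has full row rank, right-multiplying by $B(x,z)^\top$ and inverting the $J\times J$ matrix $B(x,z)B(x,z)^\top$ forces the bracketed matrix to vanish, so $\E[D_p H(\delta,P)\,A(P,\delta)\mid X,Z]=0$. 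Here \cref{as:invert} enters: it makes \cref{as:completeness} equivalent to completeness of $(\delta,P)\mid(X,Z)$, which I would apply column by column to the matrix-valued map $(\delta,P)\mapsto D_p H(\delta,P)\,A(P,\delta)$ to conclude $D_p H(\delta,P)\,A(P,\delta)=0$ almost surely. Finally, since $A(P,\delta)$ is square with full row rank it is invertible, so $D_p H(\delta,P)=0$ on the support, i.e.\ $H(\delta,P)=H(\delta)$, which is precisely faithfulness.

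I expect the main obstacle to be the two analytic steps that the main-text sketch suppresses behind regularity conditions. The first is justifying differentiation under the integral sign together with the interchange of limits, which needs integrability of $D_p H\cdot D_z f$ and uniform domination in $z$. The second and more delicate step is passing from the pointwise conclusion $D_p H(\delta,P)=0$ a.s.\ to the functional statement $H(\delta,P)=H(\delta)$: this requires the conditional support of $P$ given $\delta$ to be well-behaved (for instance connected, so that a vanishing price-gradient integrates to a constant along price paths), and it is what motivates restricting attention to $H$ smooth in $p$. A secondary concern is the measure-theoretic handling of the conditioning and the full-rank hypotheses, ensuring $A$ and $B$ attain their stated ranks on the relevant support so that both rank-cancellation steps are valid.
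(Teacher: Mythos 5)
Your proposal is correct and follows essentially the same route as the paper's proof (Proposition~\ref{prop:model-p_x} in the appendix): differentiate the conditional moment in $z$, use the separability of $D_z f$ to pull out $B(X,Z)$, kill it via full row rank, apply completeness to get $H_p(\delta,P)A(P,\delta)=0$, invert $A$, and integrate the vanishing price-gradient over a connected support. The two analytic obstacles you flag are exactly the ones the paper addresses: the interchange of differentiation and expectation is handled by Lemma~\ref{lemma:leibniz} and the definition of $\mathcal F$, and the passage from $H_p=0$ to $H(\delta,P)=H(\delta)$ uses the connected open support $U_\delta$ in Assumption~\ref{as:model-p_x}(1).
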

\smallskip

The logic for this result is as follows: under \Cref{eq:separable_maintext}, given
any differentiable $H(\delta, P)$ with $H_p(\delta,P)\equiv\diff{H}{P}$,\footnote{Because we assume $H$ is differentiable
here, we have to modify faithfulness to restrict only differentiable functions. These
complications are resolved in \cref{prop:model-p_x}, which redefines faithfulness and
ensures that differentiation is exchangeable with expectation.} we have
\begin{align*}
0_{J\times d_z} &= \diff{}{z} \E[H(\delta, P) \mid X, Z=z]=\diff{}{z}\E[H(\delta,f(X,Z,\delta,\omega))\mid X,Z=z] \\
&= \E\bk{H_p(\delta, P) A(P, \delta) \mid X,
Z=z}
B(X, z) \tag{\cref{as:model-p}}\\
\implies 0_{J\times J} &= \E\bk{H_p(\delta, P) A(P, \delta) \mid X,
Z} \tag{$B$ is full-rank} \\
\implies 0_{J\times J} &= H_p(\delta, P) A(P, \delta) \tag{\cref{as:completeness}} \\
\implies \hspace{0.4cm} 0_J &= H_p(\delta, P). \tag{$A$ is nonsingular}
\end{align*}
By the fundamental theorem of calculus, $H_p(\delta, P) = 0$ implies $H(\delta, P)=H(\delta)$.  

\Cref{as:p-index,as:model-p} are non-nested. 
\Cref{as:p-index} allows richer interactions between
how $Z$ and $\tilde \omega$: e.g., $P=f(Z,\tilde \omega)$ always satisfies
\cref{as:p-index} but not necessarily \cref{as:model-p}. \Cref{as:model-p} allows richer interactions between $X$ and $\tilde \omega$: e.g., $P=Z+f(X,\tilde \omega)$ always
satisfies \cref{as:model-p} but not necessarily \cref{as:p-index}.

\cref{as:p-index,as:model-p} can be economically grounded with more primitive conditions on marginal costs. Under Bertrand--Nash pricing with constant marginal costs that are represented without loss of generality as $c(X,Z,\delta,\tilde\omega)\equiv C\in\mathbb R^J$, one can always write the equilibrium prices as
\begin{equation}
P=g(C,\delta)\numberthis \label{eq:bertrand-nash}
\end{equation}
for some function $g$ (see \cref{sub:BertrandNash}). Thus, if one assumes that $X$ and $Z$ enter marginal costs via the index $\lambda$, i.e.,
$C=c(\lambda(X,Z),\delta,\tilde\omega),$ then
\cref{as:p-index} follows. Similarly, \cref{as:model-p} holds if $C=f_0(f_1(Z, X) +
f_2(\delta, X,\tilde \omega),\delta)$ with $g$ and $f_0,f_1$ satisfying certain regularity conditions.

\subsubsection{Restrictions on the $\delta$ index}

Our second set of sufficient conditions leverage statistical restrictions on the
conditional distribution of $\delta \mid X$. With these conditions, it is possible to show that \emph{every} function of $X$ belonging to some known class $\mathcal{K}$ can be written as $\E[H(\delta) \mid X]$ for some function $H$. If this is true, we can then conclude from completeness that $H(\delta, P) = H(\delta)$ for some $H(\delta)$, so long as $\E[H(\delta, P) \mid X, Z] \in \mathcal{K}$. To be sure, these conditions are not fully general: 
our main goal is to demonstrate the existence of restrictions that are purely statistical; these restrictions should not necessarily be viewed as recommended modeling choices. 

One simple case is if $\delta$
and $X$ are discrete and known to take the same finite number of values. 
Here, completeness directly implies that any function $k (X)$ can be represented as a
projection of some function $H(\delta)$ to $(X,Z)$-space.  
\begin{restatable}{prop}{lemmafinite}
\label{lemma:finite}
Fix $M \in \N$. Assume the support of $\delta$ and $X$ are both finite sets of size $M$.
Then \cref{as:completeness,as:exogenous} imply \cref{as:faithfulness}.
\end{restatable}

The same strategy can be used in the case of continuously-distributed $X$ and $\delta$. 
Suppose that  $\delta\mid X$ can be transformed into a location-scale
model of the form:
\[
    a(\delta) = b(X) + \Sigma(X) \epsilon, \quad \epsilon \sim q(\cdot), \quad \epsilon
    \indep X, \numberthis \label{eq:location-scale-maintext}
\]
for some invertible $a(\cdot)$ and continuously distributed $J$-dimensional $\epsilon$ with density $q(\cdot)$; $\epsilon$ can be seen as reparametrizing the component of $\delta$ that is independent of $X$. Note that \eqref{eq:location-scale-maintext} is a purely statistical assumption as $\epsilon$ is generally distinct from $\xi$.

We consider the following assumptions on $a(\cdot)$, $b(\cdot)$, $\Sigma(\cdot)$, and $q(\cdot)$:

\begin{as}
\label{as:smoothness}
Fix some integers $s > J+1$ and $M = J+2s+1$. 
  Let \[\K^s = \br{u : \R^J \to \R^J \colon u(x) = Ax + r(x), A \in \R^{J\times J},r \in
  \W^{s,\infty}(\R^J)}\] where $\W^{s,p}(\R^J)$ is a Bessel potential space of smoothness parameter $s$ and integrability parameter $p$, defined in \eqref{eq:bessel}.
  Let $\mathcal K = \K^M$. Assume
\begin{enumerate}
  \item $a,b$ in \eqref{eq:location-scale-maintext} are invertible.

  \item (Density smoothness for $\epsilon$)  The Fourier transform of $q$ and its derivatives have bounded tails obeying
  \cref{as:location-scale-assns}(2).

\item (Limited heteroskedasticity) $\Sigma(x)$ is
uniformly close to some fixed $\Sigma_0$, in the sense that certain distances in
\cref{as:location-scale-assns}(3) are bounded above by a sufficiently small $\psi$.

\item (Smooth $b$ with Lipschitz inverse) It is known that $b\in \mathcal K$ and that \\ $\sup_
{x\in \R^J}\| (D b(x))^{-1}\|_{\op} <\infty
$
where $D b(x)$ is the Jacobian of $b$. 
\end{enumerate}
\end{as}

The key requirement in \cref{as:smoothness} is that $b$ is known to
fall in a ``well-behaved'' function class $\mathcal K$, which---loosely speaking---consists of functions that deviate from a linear map by a suitably smooth function with $s$th-order derivatives in $L^p$. Because of this, we can limit
$\Theta_I$ to just those functions $h$ with conditional expectations in $\mathcal K$: \[
     \Theta_I(\mathcal K) = \br{h : \E[h(S, P) \mid X, Z] \in \mathcal K} \ni a(\sigma^
     {-1}(\cdot, \cdot)).
 \]

The other regularity conditions in \cref{as:smoothness} establish that $\mathcal K$ is
 sufficiently small and the operator $u\mapsto\E[u(\delta) \mid X]$ is sufficiently
 well-behaved so that $\mathcal K$ can be entirely
 populated by conditional expectations of functions of $\delta$.\footnote{\Cref{as:smoothness}(2) is satisfied by
 distributions with Gamma-like tails, though it rules out Gaussian distributions. See
 \cref{lem:B2} and \cref{rmk:radial_gamma}. We strongly suspect that these restrictions
 are not essential and can be further relaxed.} That is, for any $k \in \mathcal
 K$, there exists some $H (\delta)$ such that $
     \E[H(\delta) \mid X] = k(X).
 $ Completeness would then imply a version of faithfulness with respect to $\mathcal K$. That
 is, for
 any candidate $H(\delta, P)$ where $\E[H(\delta, P) \mid X, Z] \in \mathcal K$, it follows that for
 some $H(\delta)$, we have  \[
     \E[H(\delta,P) \mid X, Z] = \E[H(\delta) \mid X, Z],
 \]
 and therefore $H(\delta, P) = H(\delta)$ by completeness. We summarize this argument in the following result and verify that the
 conclusion of \cref{lemma:main} continues to hold. 

\begin{prop}
\label{prop:model_delta}
  Under \cref{as:completeness,as:exogenous,as:invert,as:smoothness}, for any integrable $H
  (\delta,
  P)$, if $\E[H(\delta, P) \mid X, Z] \in \mathcal K$, then $H
  (\delta,
  P) = H(\delta)$ for some $H(\delta)$. As a result, price counterfactuals are identified.
  \end{prop}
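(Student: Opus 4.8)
The plan is to reduce the claim to a single surjectivity statement about a convolution-type operator and then close the argument with completeness, mirroring the structure sketched just before \Cref{prop:model_delta}. Throughout I work with the location-scale representation \eqref{eq:location-scale-maintext}, which lets me write the conditional-expectation operator explicitly: for any $u : \R^J \to \R^J$,
\[
  (Tu)(x) \equiv \E[u(\delta) \mid X = x] = \E\bk{u\lb a^{-1}(b(x) + \Sigma(x)\epsilon)\rb},
\]
where the outer expectation is over $\epsilon \sim q$ using $\epsilon \indep X$. The entire content of \Cref{as:smoothness} is used to show that the range of $T$ contains $\mathcal K$.

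The key steps, in order, are as follows. First I take any integrable $H(\delta, P)$ with $\E[H(\delta, P) \mid X, Z] \in \mathcal K$; since $\mathcal K$ consists of functions of the single $\R^J$-argument alone, membership already encodes that the conditional expectation does not depend on $Z$, so $\E[H(\delta, P) \mid X, Z] = k(X)$ for some $k \in \mathcal K$. Second, I invoke the surjectivity lemma (the main obstacle, below) to produce a function $\td H(\delta)$ of $\delta$ alone with $\E[\td H(\delta) \mid X] = k(X)$. Third, I use \Cref{as:exogenous} in the form $Z \indep \delta \mid X$ to upgrade this to $\E[\td H(\delta) \mid X, Z] = \E[\td H(\delta) \mid X] = k(X)$. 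Combining the first and third steps gives $\E\bk{H(\delta, P) - \td H(\delta) \mid X, Z} = 0$, and completeness (\Cref{as:completeness}) then forces $H(\delta, P) = \td H(\delta)$, i.e., $H$ does not depend on $P$. This is exactly the stated conclusion; applying it to $\check H(\delta,P) = \check\sigma^{-1}(\sigma(\delta, P), P)$ for any $\check\sigma^{-1} \in \Theta_I(\mathcal K)$ shows every such candidate satisfies $\check\sigma^{-1}(S,P) = \check H(\sigma^{-1}(S,P))$, a transformation of the true inverse demand, so price counterfactuals are identified via \Cref{lemma:transform}.

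The main obstacle is the surjectivity lemma: for every $k \in \mathcal K$ there is some $\td H$ with $T \td H = k$. I would first change variables via $y = b(x)$ and $v = u \circ a^{-1}$, which (using the invertibility in \Cref{as:smoothness}(1) together with the smoothness and Lipschitz-inverse hypotheses on $b$ in \Cref{as:smoothness}(4)) maps $\mathcal K$ into a comparable Bessel potential class in the $y$-coordinate and reduces the problem to solving $\E[v(y + \Sigma(b^{-1}(y))\epsilon)] = \td k(y)$ for the transported target $\td k = k \circ b^{-1}$. In the homoskedastic baseline $\Sigma(x) \equiv \Sigma_0$ this is a deconvolution: the left-hand side is the convolution of $v$ with the density of $-\Sigma_0\epsilon$, so on the Fourier side $\hat v = \hat{\td k} / \widehat{q_{\Sigma_0}}$, which is well defined and lands in the right smoothness class precisely because \Cref{as:smoothness}(2) forces $\widehat q$ to have only mild (e.g. Gamma-like, not Gaussian) tail decay, so dividing by it costs only finitely many derivatives---this is why $M$ is taken comfortably larger than $s$ (with $M = J + 2s + 1$ and $s > J+1$, the surplus also feeding Sobolev embeddings that control sup-norms). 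I would then treat genuine heteroskedasticity as a perturbation of this baseline: writing $T = T_0 + (T - T_0)$ with $T_0$ the homoskedastic operator, the limited-heteroskedasticity bound \Cref{as:smoothness}(3) makes $\|T - T_0\|_{\op}$ small in the relevant operator norm, so a Neumann series / contraction argument shows $T$ remains surjective onto $\mathcal K$ and inherits the mapping properties of $T_0^{-1}$. Making these operator-norm estimates rigorous on the Bessel potential spaces---in particular checking that composition with $b$ and multiplication by the perturbed characteristic function stay within the class, and that the smallness threshold $\psi$ in \Cref{as:smoothness}(3) can indeed be met---is the technical heart of the argument and is where I expect the real work to lie.
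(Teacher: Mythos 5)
Your proposal is correct and takes essentially the same approach as the paper: reduce the claim to surjectivity of the conditional-expectation operator $u \mapsto \E[u(\delta) \mid X]$ onto $\mathcal K$ (via the change of variables through $a$ and $b$, Fourier-side deconvolution in the homoskedastic baseline exploiting the Gamma-like tail bounds on $\hat q$, and a Neumann-series perturbation argument for small heteroskedasticity), then close with \cref{as:exogenous} and completeness. The surjectivity lemma you flag as the main obstacle is exactly what the paper's \cref{thm:main,thm:main_2} (together with \cref{lem:T0_inv,lem:Delta_integral}) establish, by the same strategy you outline.
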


 Unlike \cref{lemma:finite}, which requires an exact equivalence in the number of support points of $\delta$ and $X$,  \cref{as:smoothness} is not knife-edge in nature---suggesting that  it can be further relaxed to allow for more flexible models for $\delta \mid X$. Overall, the combination of results in \Cref{subsec:C_to_F} confirms that faithfulness can follow from completeness without strong
 economic or statistical assumptions. We expect that many other sufficient conditions for faithfulness
 exist as well.

 \subsection{When does faithfulness imply completeness?}

The above arguments verify faithfulness from completeness by, intuitively, using $X$ as a
proxy for $\delta$. Since faithfulness  only requires ruling out the possibility of certain
$\delta$ and $P$ interactions averaging out, it is possible---indeed shown by the example in  \cref{sub:fwithoutC}---that faithfulness
does not fully use the completeness of $\delta \mid X$. In other words, completeness of
$\delta \mid X$ is sufficient for faithfulness but may not be necessary. It turns out  this is the only reason that faithfulness does not always imply completeness:

\begin{restatable}{prop}{propfimpliesc}
\label{prop:fimpliesc_maintext}
Suppose \cref{as:exogenous} holds and $\delta \mid X$ is complete. Then \cref{as:faithfulness} implies 
\cref{as:completeness}. 
  
\end{restatable}

\noindent The proof is simple: Given $\E[H(\delta, P) \mid X, Z] = 0$, 
faithfulness implies that $H(\delta, P) = H(\delta)$ since $0$ is constant in $Z$. Exogeneity of $Z$ further implies that $\E
[H(\delta) \mid X, Z] = \E[H(\delta) \mid X] = 0$. Finally, completeness of $\delta \mid
X$ implies that $H(\delta) = 0$.

\section{Practical Implications}\label{sec:practical}

Our nonparametric identification results yield several insights for applied researchers estimating \emph{parametric} demand models. 
Namely, they suggest key conditions that practitioners can scrutinize in order to use our results to argue (perhaps informally) that their parametric assumptions serve the usual role of filling gaps left by insufficient identifying variation. These conditions, discussed here, concern the counterfactuals of interest, the available variation in the price instruments and characteristics, and the estimation method. We also discuss how the conditions differ from what researchers  appealing to \citet{berryhaile} and \citet{berryhaile24} would need to argue.

First, in order to appeal to our results, a researcher should be interested in estimating counterfactuals that change a product attribute for which some plausibly exogenous variation is available. Most commonly, this attribute is the product's price. Price counterfactuals arise, for instance, in merger simulations, when studying product exit or entry (which can be understood as moving prices to or from infinity), and when price elasticities are fundamentally of interest. In each of these cases, one can plausibly observe a set of supply-side shocks $Z$ which exogenously vary prices $P$. 

Second, a researcher should argue there exists at least one ``special'' characteristic $X$  which satisfies the index restriction in \eqref{eq:demand}. For mixed logit models, this would mean $X$ enters demand without a random coefficient; \citet{berryhaile} argue this requirement is  satisfied in most applications. Note that no assumptions are generally needed on how the other characteristics $\tilde{X}$ enter demand.

The researcher should further argue that the special characteristic is a strong proxy for the index $\delta$. In practice, this means that $X$ is strongly predictive of demand holding the other product attributes fixed. Again, no such requirement is generally imposed on the other observed characteristics $\tilde X$.

Importantly, and in contrast with the \cite{berryhaile} results, researchers motivated by our identification results need not justify that either $X$ or $\tilde X$ are exogenous. Several distinct forms of endogeneity are allowed: firms can strategically design products (i.e., choose characteristics both observed and unobserved by the researcher) or choose which markets to enter with a given product with some knowledge of local demand conditions. Moreover, the functional form restrictions that the researcher imposes on how characteristics enter demand need not be correct; for instance, the model is still able to predict price counterfactuals correctly  if quadratic terms in $X$ are incorrectly excluded, violating their exogeneity \citep{andrews2025structural}.

Third, a researcher should scrutinize the exogeneity of the price instruments $Z$. These should be fully independent of the $\delta$ index conditional on $(X,\tilde X)$ which in practice will generally involve certain timing assumptions. The cleanest scenario is when $Z$ is a set of unconditionally as-good-as-random shocks that have some variation across products, are realized after the product design is chosen and entry decisions are made but before $P$ is set, and do not affect demand except through $P$. Short-term unanticipated fluctuations in the exchange rate of countries producing different goods \citep{borusyak2025estimating} or deviations of realized input prices from futures markets values \citep{ackerbergcrawford} are possible examples of such price instruments. However, there are also scenarios where only the conditional exogeneity of $Z$ is plausible; see Appendix A of \citet{berryhaile24} for a detailed discussion.

Fourth, to avoid any bias from the potentially endogenous $(X,\tilde{X})$, the moment conditions used in estimation must be based on recentered instruments. Functions of $Z$ alone are not typically enough to identify parametric demand models beyond pure logit, so other instruments have to be chosen. This paper shows that recentered instruments can be sufficient for this goal nonparametrically, while \citet{borusyak2025estimating} propose specific constructions of powerful recentered instruments in parametric models. Importantly, other instruments---such as ``BLP instruments'' that are functions of characteristics only---are not generally valid without characteristic exogeneity.\footnote{In fact, functions of own and rivals' characteristics can be controlled for and potentially improve estimation efficiency by absorbing some residual variation \citep{borusyak2025estimating}.}

We conclude here by noting that with our identification strategy works with only market-level data, which are widely available. Other types of data---e.g., on how market shares vary by consumer characteristics within the market or on second choices of consumers---are known to be helpful for demand estimation. However, these types of data are not always available and our results show that they are not necessary for identification even nonparametrically. Moreover, even when microdata are available, the identification strategies developed for them impose additional homogeneity assumptions \citep[][Section 4]{chen2025reinterpreting} which our identification strategy does not require.

\section{Conclusion}\label{sec:conclusion}

We have shown that price counterfactuals are identified without exogenous product
characteristics in a nonparametric demand model with a weak index restriction, given exogenous
instruments that induce sufficient variation in prices and a strong proxy for the index. The richness of this identifying variation is captured by a
new faithfulness condition, which essentially
requires that the instruments and the proxies make all causal price effects detectable. We show through a
variety of non-nested sufficient conditions that faithfulness can follow from a standard completeness condition  without strong
statistical or economic restrictions. These results reassure practitioners that price counterfactuals can be reliably identified with recentered instruments, without needing to justify that observed characteristics are as-if-randomly assigned or chosen non-strategically.  We suspect the new faithfulness condition and identification results may also prove useful in other nonparametric models. 

We have not provided a theoretical analysis of nonparametric estimation based on these results, which would naturally require 
additional regularity conditions \citep{compiani2022market,chen2015sieve}. As in \cite{berryhaile}, we leave developing these
conditions to future research. We nevertheless hope that our identification analysis will help
guide empirical researchers towards more robust and credible estimation strategies, by clarifying the
kinds of variation that can reveal counterfactuals in flexible demand models.
Most importantly, in contrast to widely-held intuition, our results
suggest researchers can generally avoid conventional characteristic-based
instruments---and the potential biases associated with them---by looking for plausibly exogenous supply
shocks and leveraging them via recentered instruments.

\bigskip
\bibliographystyle{aer}
\bibliography{refs}

\appendix 

\numberwithin{lemma}{section}
\numberwithin{theorem}{section}
\numberwithin{cor}{section}
\numberwithin{prop}{section}
\numberwithin{as}{section}
\numberwithin{rmk}{section}
\numberwithin{figure}{section}

\numberwithin{equation}{section}

\begin{appendix}

\section{Detailed Statements and Proofs}\label{sec:proofs}

\subsection{Setup}
We first restate the assumptions and results in \Cref{sec:theory,sec:sufficient_conditions} more formally.
Let $F^*$ denote the distribution of $(S, P, \delta, X, Z)$. We assume for each $F^*$
there is some $\sigma$ such that $S = \sigma(\delta, P)$ almost-surely. Let $F$ be the
distribution of $(S, P, X, Z)$. The assumptions will restrict the class of $\mathcal F^*
\ni F^* $, which we keep implicit. Identification of price counterfactuals is formally
defined in the following sense: Counterfactuals are identified if there exists some
function of the observed data that predicts them, and these predictions are correct on a
set of $F^*$-probability 1.

\begin{defn} 
We say that price counterfactuals are identified at $F$ if there exists a
 mapping $C(s,p,p')$ such that, for any $F^*$ generating $F$ with a corresponding
 $\sigma
 = \sigma_{F^*}$, there exists a set $E$ with $\P_{F^*}(E)=1$ where
 \[C (s,p,p') = \sigma(\sigma^{-1}(s,p), p') \] for all values $(s, p, \delta, x, z) \in
 E$ and $ (s', p', \delta, x, z) \in E$. We say that price counterfactuals are identified
  if they are identified at all $F$ generated by some $F^*$.
\end{defn}

We first restate \cref{as:invert}:
\begin{as} 
\label{as:invertible_x}

$\sigma$ is invertible in $\delta$: For some measurable $\sigma^{-1}$, $ \delta = \sigma^
{-1}(S, P) $ almost surely. Assume that $\E[\norm{\sigma^{-1}(S, P)}] < \infty$. 

\end{as}

We next restate \cref{as:completeness,as:exogenous}. Here, completeness is defined relative to
all integrable functions. Weakening completeness by changing integrable to
square-integrable does not affect subsequent results, so long as we modify them
accordingly. 

\begin{as}
    \label{as:complete_x}
    For all $F$, the conditional distribution $(S,P) \mid (X,Z)$ is complete with
    respect to integrable functions: For any integrable $h$, \[
        \E_{F}[h(S, P) \mid X, Z] = 0 \implies h(S,P) = 0 \text{ almost surely. }
    \]
\end{as}

\begin{as}
\label{as:exogenous_x}
    For each $F^*$,  $Z \indep \delta \mid X$ under $F^*$. 
\end{as}

Fix some class of functions $\mathcal H$ mapping $(S, P)$ to $\R^J$. Suppose $0 \in \mathcal H$ and all $h \in
\mathcal H$ are integrable at all $F$. Moreover, suppose all $h(s, p) \in \mathcal {H}$ are invertible in $s$: There exists $h^{-1}$ such that,
almost surely, $
S = h^{-1}(h(S,P), P)
$. Fix
 some class of functions $\mathcal K$ mapping $X$ to $\R^J$. Define an identified set for
 $\sigma^{-1}$ with respect to $\mathcal H$ and $\mathcal K$: \[
    \Theta_I = \Theta_I(\mathcal H, \mathcal K, F) = \br{
        h \in \mathcal H : \E_F[h(S,P) \mid X, Z] = \E_F[h(S,P) \mid X] \in \mathcal K
    }.
\] We will examine throughout identification relative to these function classes $\mathcal
 H$ and $\mathcal K$, representing a priori restrictions (e.g., integrability,
 smoothness, etc.) of a researcher.

\subsection{Identification with recentered instruments}
\begin{lemma}
\label{lemma:recentering}

Let $\mathcal H$ and $\mathcal K$ be all square-integrable functions of $(S,P)$ and of $X$, respectively. Let $\mathcal R$ collect all square-integrable, conditionally mean zero functions of $(X, Z)$: For $R \in \mathcal {R}$, $\E[R(X, Z) \mid X] = 0$ and $\E[R(X,Z)^2] < \infty$. Then \[
\Theta_I = \br{h \in \mathcal H: \E[h(S,P) R(X,Z)] = 0 \,\,\forall R \in \mathcal{R}}.
\]

\end{lemma}

\begin{proof}
First, consider $h \in \Theta_I$---let $k(X) = \E[h(S,P) \mid X,Z]$---and $R \in \mathcal{R}$. Since $R$ and $h$ are both square integrable, $hR$ is integrable. Therefore $\E[hR] = \E[\E[h R \mid X, Z]] = \E[k(X) R(X,Z)] = \E[k(X) \E[R(X,Z) \mid X]] = 0$. Conversely, let $h \in \mathcal {H}$ and $\E[hR] = 0$ for all $R \in \mathcal{R}$. With slight abuse of notation, we overload $h$ to refer to an element of the vector-valued function $h$. Let $R(X,Z) = \E[h(S,P) \mid X,Z] - \E[h(S,P) \mid X] \in \mathcal{R}$. Then \[
\E[hR] = \E[R^2] = 0 \implies R = 0. 
\] 
Hence $\E[h(S,P) \mid X,Z] = \E[h(S,P) \mid X] \in \mathcal{K}$, and thus $h \in \Theta_I$. \qedhere

\end{proof}

\subsection{Suffices to identify $\sigma^{-1}$ up to transformation}
The following formalizes \cref{lemma:transform}: That ensuring $\Theta_I$ contains solely
functions of the form $T(\sigma^{-1}(\cdot))$ suffices for identifying price
counterfactuals. 
\begin{lemma} 
\label{lemma:transform_x}
Assume that for every $F^*$, \begin{enumerate}

\item Every $h \in \Theta_I$ is of the form $h(S, P) = T(\sigma^{-1}(S, P))$ almost surely
 for some measurable $T : \R^J \to \R^J$. 

 \item $\Theta_I$ is nonempty. 
 \end{enumerate}
Then counterfactuals are identified by $C(s, p, p') = h^{-1}(h(s,p), p')$ for any
$h \in \Theta_I$. 
\end{lemma}

\begin{proof}
Fix $F^*$ and its implied $F$. Fix any $h \in \Theta_I$. The assumptions imply that there exists a $F^*$-probability-one set
$E$ such that \[
    s = h^{-1}(h(s, p), p) \quad h(s,p) = T(\sigma^{-1}(s,p))  \quad \delta = \sigma^{-1}
    (s,p) \quad s = \sigma(\delta, p)
\]
for all $(s, p, \delta, x, z) \in E$. Thus $s = h^{-1}\pr{T(\delta), p}.
$ 
Likewise, for some $(s', p', \delta, x, z) \in E$,  \[
    \sigma(\delta, p') = s' = h^{-1}(T(\delta), p') = h^{-1}(h(s,p), p'). \qedhere
\]
\end{proof}

\subsection{Exogenous prices}

\cref{prop:exogenous_price} is a consequence of \cref{prop:p-index_x}: 

\begin{restatable}{prop}{propexopricefirst}
\label{prop:exogenous_price_x_first}
Let \cref{as:p-index_x} hold with $\lambda(X, Z) = Z$ and $Z = P$. Then price
counterfactuals are identified under \cref{as:exogenous_x,as:complete_x}.
\end{restatable}

\begin{proof}
  Let
\begin{enumerate}
    \item $\mathcal F$ be all integrable functions of $(\delta, P)$
    \item $\mathcal H$ be all integrable functions of $(S,P)$ invertible in $S$
    \item $\mathcal K$ be all integrable functions of $X$ 
\end{enumerate}    
Then \cref{prop:p-index_x} implies \emph{(i)} faithfulness follows from completeness under $\mathcal
F, \mathcal H, \mathcal K$ and \emph{(ii)}
\cref{prop:faith_x}'s assumptions hold. Identification follows by
\cref{prop:faith_x}.
\end{proof}

\subsection{Suffices to impose faithfulness}
Let $\mathcal F$ denote a class of functions mapping from $(\delta, P)$ to $\R^J$. We first restate \cref{as:faithfulness} taking into account $\mathcal F$ and
$\mathcal K$:  

\begin{as} 
\label{as:faithful_x} Under \cref{as:exogenous_x},  the conditional distribution $
 (\delta, P) \mid (X, Z )$ satisfies faithfulness with respect to $\mathcal F$ and
 $\mathcal K$: For any $H \in \mathcal F$ such that $\E_{F^*}[H(\delta, P) \mid X,
 Z] \in \mathcal K$, we have that $H(\delta, P) = H(\delta)$ almost surely for some
 measurable $H(\delta)$.
\end{as}

We now formalize \cref{lemma:main}---that faithfulness is sufficient
for identification---relative to these definitions and assumptions: 

\begin{prop} 
\label{prop:faith_x}
Fix $(\mathcal H, \mathcal K, \mathcal F)$. Suppose \cref{as:invertible_x,as:exogenous_x,as:faithful_x} hold for every $F^*$.
 Assume further that for every $F^*$, 
\begin{enumerate}
    \item  There exists an invertible  $T_0(\cdot)$ such that $\E_{F^*}[T_0(\delta) \mid
    X, Z] \in \mathcal K$, $T_0(\sigma^{-1}(\cdot)) \in \mathcal H$. 
    \item For every $h \in \mathcal H$, $h(\sigma(\delta, p), p) \equiv H(\delta, p) \in
    \mathcal F$.
\end{enumerate}
Then price counterfactuals are identified.
\end{prop}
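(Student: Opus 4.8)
The plan is to deduce the result from the formal transformation lemma, \cref{lemma:transform_x}, by verifying its two hypotheses---that $\Theta_I$ is nonempty and that every member of $\Theta_I$ is a transformation of $\sigma^{-1}$---for every $F^*$. All the substantive content is already packaged into the faithfulness assumption and into \cref{lemma:transform_x}; the work here is to thread the latent index $\delta$ through the observed $(S,P)$ correctly and to keep the constructed functions inside the admissible classes.

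First I would establish nonemptiness using hypothesis (1). By \cref{as:invertible_x} we have $\sigma^{-1}(S,P)=\delta$ almost surely, so the candidate $T_0\circ\sigma^{-1}$ satisfies $T_0(\sigma^{-1}(S,P))=T_0(\delta)$ almost surely. Hence its conditional expectation under $F$ coincides with $\E_{F^*}[T_0(\delta)\mid X,Z]$, which lies in $\mathcal K$ by hypothesis (1). Together with $T_0\circ\sigma^{-1}\in\mathcal H$, this places $T_0\circ\sigma^{-1}$ in $\Theta_I$, so $\Theta_I\neq\emptyset$.

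Next I would show that an arbitrary $h\in\Theta_I$ is a transformation of $\sigma^{-1}$. Define $H(\delta,p):=h(\sigma(\delta,p),p)$, which lies in $\mathcal F$ by hypothesis (2). Since $S=\sigma(\delta,P)$ almost surely, $H(\delta,P)=h(S,P)$ almost surely, so $\E_{F^*}[H(\delta,P)\mid X,Z]=\E_{F}[h(S,P)\mid X,Z]\in\mathcal K$ because $h\in\Theta_I$. Faithfulness (\cref{as:faithful_x}) then forces $H$ to be free of $P$: $H(\delta,P)=H(\delta)$ almost surely for some measurable $H(\delta)$. Reading this back through $\delta=\sigma^{-1}(S,P)$ gives $h(S,P)=H(\sigma^{-1}(S,P))$ almost surely, i.e.\ $h=T\circ\sigma^{-1}$ with $T=H$ measurable (note that $T$ need not be invertible, since \cref{lemma:transform_x} uses only invertibility of $h$ in $s$, which is built into $\mathcal H$). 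With both hypotheses of \cref{lemma:transform_x} verified for every $F^*$, that lemma delivers identification via $C(s,p,p')=h^{-1}(h(s,p),p')$.

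The only delicate point, and the step I would treat most carefully, is the bookkeeping between the latent $(\delta,P)$-space and the observed $(S,P)$-space: one must ensure the almost-sure identities $\sigma^{-1}(S,P)=\delta$ and $H(\delta,P)=h(S,P)$ hold on a common probability-one set, and that the conditional expectations under $F^*$ and under its observable marginal $F$ agree, so that membership in $\mathcal K$ transfers intact. Hypotheses (1) and (2) are exactly what keep the constructed functions inside the admissible classes $\mathcal H$ and $\mathcal F$ on which faithfulness and \cref{lemma:transform_x} are stated, so no genuinely new difficulty arises beyond this measure-theoretic alignment.
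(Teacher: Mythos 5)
Your proposal is correct and follows essentially the same route as the paper's own proof: both verify the two hypotheses of \cref{lemma:transform_x}, using assumption (1) together with $\sigma^{-1}(S,P)=\delta$ a.s.\ to place $T_0\circ\sigma^{-1}$ in $\Theta_I$, and assumption (2) plus \cref{as:faithful_x} to conclude that any $h\in\Theta_I$ equals $H\circ\sigma^{-1}$ almost surely. Your added care about the $F$-versus-$F^*$ bookkeeping and the observation that $T$ need only be measurable (invertibility of $h$ in $s$ being what \cref{lemma:transform_x} actually uses) are both consistent with the paper's formal statements.
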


\begin{proof} By \cref{lemma:transform_x}, we need to check assumptions (1)--(2) in \cref
 {lemma:transform_x}. First, condition (1) assumed implies that $T_0(\sigma^{-1}(s, p)) \in \mathcal H$ and
 its conditional expectation lies in $\mathcal K$. Thus $T_0(\sigma^{-1}
 (s, p)) \in \Theta_I$, verifying (2) in \cref{lemma:transform_x}. Now fix $h \in \Theta_I$; by condition (2), $h
 (\sigma(\delta,p), p) \in \mathcal F$. By \cref{as:faithful_x}, we have that \[ h(S, P)
 = H(\delta) = H(\sigma^{-1}(S, P)) \] almost surely. This verifies assumption
 (1) in \cref{lemma:transform_x}.
\end{proof}

\subsection{Faithfulness verification from $\lambda$ index}\label{subsec:model-p-lambda}

Next, we state and verify the case (\cref{prop:p-index}) where 
\[ P \indep (X,Z) \mid \lambda
(X, Z), \delta.
\]
We show \cref{prop:equiv} as partly a corollary of \cref{prop:p-index} with
$P = \lambda(X, Z) = Z$.

\begin{as}
\label{as:p-index_x} Let $\lambda: \R^J \times \mathcal Z \to \mathcal L \subset \R^{d_z}$ be
an
index that is invertible in $Z$.
\begin{enumerate}
  \item For such a $\lambda$, $
  P \indep (X,Z)\mid \lambda(X, Z), \delta.$
  \item Let $\lambda = \lambda(X, Z)$. The joint distribution of $(X, \lambda)$ has a density with respect to some product
   measure $\mu_X \otimes \mu_\lambda$ over $\mathcal X \times \mathcal L$, which
   is
   strictly positive for $\mu_X \otimes \mu_\lambda$-almost every $
   (x,\lambda) \in \mathcal X  \times S_\lambda$ for some $S_\lambda \subset
  \mathcal L$ and $(\mu_X \otimes \mu_\lambda)(\R^J \times
  S_\lambda) > 0$.

\end{enumerate}
\end{as} 

\begin{restatable}{prop}{propexoprice}
\label{prop:exogenous_price_x}
Let \cref{as:exogenous_x} hold and let \cref{as:p-index_x} hold with $\lambda(X, Z) = Z$ and $Z = P$. Suppose also that $P$
is not degenerate given $\delta$: for any $S_\delta$ where $\P_{F^*}(S_\delta) > 0$, there are two disjoint sets
$S_1, S_2$ such that $\P_{F^*}(\delta \in S_\delta \cap P \in S_1) > 0, \P_{F^*}(\delta
\in S_\delta \cap P \in S_2) > 0$. Let
\begin{enumerate}
    \item $\mathcal F$ be all integrable functions of $(\delta, P)$
    \item $\mathcal H$ be all integrable functions of $(S,P)$ invertible in $S$
    \item $\mathcal K$ be all integrable functions of $X$. 
\end{enumerate}    
Then \cref
 {as:faithful_x} holds if and only if \cref{as:complete_x} holds. 
\end{restatable}

\begin{proof}
  The ``if'' direction is a corollary of \cref{prop:p-index_x}. For the only if direction,
  consider the contrapositive. Suppose \cref{as:complete_x} does not hold.

  Then, there exists a nonzero $f(\delta, P) \in \mathcal F$ such that $\E[f(\delta, P)
  \mid X, Z] = 0$. If $f(\delta, P) \neq f(\delta)$ for some $f(\delta)$, then 
  \cref{as:faithful_x} fails and we are done. Otherwise, we have $\E[f(\delta) \mid X, Z]
  = \E[f(\delta) \mid X] = 0$ but $f(\delta) \neq 0$. Let $S_\delta = \br{\delta : f
  (\delta) \neq
  0}$ be such that $\P_{F^*}(S_\delta) > 0$. By the non-degeneracy condition, we can
  choose sets $S_1, S_2$ and $B(P) = \one(P \in S_1)$. Then $
    H(P, \delta) = B(P)f(\delta)
  $
  is integrable and not constant in $P$. But $\E[H(P, \delta) \mid X, P] = B(P) \cdot 0 =
  0$. Thus \cref{as:faithful_x} does not hold.
\end{proof}

\begin{restatable}{prop}{proppindex}
\label{prop:p-index_x}

Let \begin{enumerate}
    \item $\mathcal F$ be all integrable functions of $(\delta, P)$
    \item $\mathcal H$ be all integrable functions of $(S,P)$ invertible in $S$
    \item $\mathcal K$ be all integrable functions of $X$

\end{enumerate}
Then, 
  \begin{enumerate}
    \item \cref
 {as:faithful_x} follows from \cref{as:complete_x} and \cref{as:exogenous_x}.
 \item The assumptions of \cref{prop:faith_x} are satisfied. 

\end{enumerate}
\end{restatable}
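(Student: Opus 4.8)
The plan is to make rigorous the informal derivation in the main text, whose only delicate step is ``conditioning on a realization $\lambda = \lambda_0$'' when $\lambda$ is continuously distributed; \cref{as:p-index_x}(2) is precisely the hypothesis that licenses this. I would establish part (1) first and then dispatch part (2) by routine integrability checks feeding into \cref{prop:faith_x}.

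For part (1), fix any $H \in \mathcal F$ with $\E[H(\delta, P) \mid X, Z] = k(X) \in \mathcal K$, and write $\lambda = \lambda(X, Z)$. The first move is to integrate out $P$: by \cref{as:p-index_x}(1), $P \indep (X, Z) \mid (\lambda, \delta)$, so setting $\tilde H(\delta, \ell) := \E[H(\delta, P) \mid \delta, \lambda = \ell]$ and using the tower property gives
\[
k(X) = \E[\tilde H(\delta, \lambda) \mid X, Z].
\]
Since $\lambda = \lambda(X,Z)$ is $(X,Z)$-measurable it may be held fixed inside this conditional expectation, and since $Z \indep \delta \mid X$ by \cref{as:exogenous_x}, the conditional law of $\delta$ given $(X,Z)$ coincides with that given $X$. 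Hence the right-hand side equals $g(X, \lambda(X,Z))$, where $g(x, \ell) := \E[\tilde H(\delta, \ell) \mid X = x]$ with $\ell$ treated as a constant. This yields the key identity $k(X) = g(X, \lambda(X,Z))$ almost surely.

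The crux is to pass from this to $g(X, \lambda_0) = k(X)$ for a single well-chosen constant $\lambda_0$. Heuristically, because $\lambda$ is invertible in $Z$ and $Z$ does not enter $k$, the function $g(x, \cdot)$ must be constant over the reachable range of $\lambda$; but $\{\lambda = \lambda_0\}$ is $\mu_\lambda$-null in the continuous case, so direct evaluation is illegitimate. I would instead change variables from $(X,Z)$ to $(X,\lambda)$---valid since $\lambda$ is invertible in $Z$---and read $k(X) = g(X, \lambda)$ as an identity holding $(X,\lambda)$-almost everywhere. By \cref{as:p-index_x}(2) the joint density of $(X, \lambda)$ is strictly positive on $\mathcal X \times S_\lambda$, a set of positive $\mu_X \otimes \mu_\lambda$-measure, so the identity holds for $\mu_X \otimes \mu_\lambda$-a.e.\ $(x,\ell) \in \mathcal X \times S_\lambda$. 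A Fubini argument then furnishes a single $\lambda_0 \in S_\lambda$ (chosen also so that $\tilde H(\cdot, \lambda_0)$ is integrable, which holds for a.e.\ $\lambda_0$) with $g(x, \lambda_0) = k(x)$ for $\mu_X$-a.e.\ $x$. Setting $H(\delta) := \tilde H(\delta, \lambda_0)$, exogeneity gives $\E[H(\delta) \mid X, Z] = g(X, \lambda_0) = k(X) = \E[H(\delta, P) \mid X, Z]$, hence $\E[H(\delta,P) - H(\delta) \mid X, Z] = 0$ with an integrable argument. Because $\sigma$ is invertible, completeness of $(S,P)\mid(X,Z)$ (\cref{as:complete_x}) is equivalent to completeness of $(\delta,P)\mid(X,Z)$, which forces $H(\delta, P) = H(\delta)$ almost surely, establishing \cref{as:faithful_x}. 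This Fubini selection of $\lambda_0$ is the main obstacle; everything upstream is bookkeeping with the tower property and exogeneity.

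For part (2), I would verify the two hypotheses of \cref{prop:faith_x} directly. Condition (1) holds with $T_0 = \mathrm{id}$: then $\E[\delta \mid X, Z] = \E[\delta \mid X]$ by \cref{as:exogenous_x} is an integrable function of $X$, hence in $\mathcal K$, while $T_0(\sigma^{-1}(s,p)) = \sigma^{-1}(s,p) \in \mathcal H$ by \cref{as:invertible_x}. Condition (2) holds because for any integrable $h$ invertible in $S$, the composition $H(\delta, p) = h(\sigma(\delta, p), p)$ is an integrable function of $(\delta, P)$ since $S = \sigma(\delta, P)$ almost surely, so $H \in \mathcal F$. With both conditions verified, \cref{prop:faith_x} yields identification of price counterfactuals.
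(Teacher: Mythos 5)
Your proposal is correct and follows essentially the same route as the paper's proof: integrate out $P$ via the conditional independence in \cref{as:p-index_x}(1) to define $\bar H(\delta,\lambda)$, use exogeneity to reduce to $k(X)=g(X,\lambda)$, invoke the positive-density condition \cref{as:p-index_x}(2) plus a Fubini selection to fix a single $\lambda_0$ with $g(\cdot,\lambda_0)=k(\cdot)$ $\mu_X$-a.e., and finish by completeness; part (2) is verified identically with $T_0=\mathrm{id}$. The only cosmetic difference is that you justify integrability of $\tilde H(\cdot,\lambda_0)$ by choosing $\lambda_0$ appropriately, while the paper deduces it from integrability of $k$; both are fine.
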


\begin{proof}

\begin{enumerate}[wide]

\item 
Take $H \in \mathcal F$ with \[
        \E \bk{H(\delta, P) \mid X, Z} = k(X) \in \mathcal K, \quad \text{a.s.}
    \]
    By law of iterated expectations, \begin{align}\label{eq:delta_X_lambda}
    k(X) = \E[ \E[ H(\delta, P) \mid \delta, X, \lambda ]\mid X,Z],\quad \text{a.s.}
    \end{align}
    since $(X, Z) = (X, \lambda^{-1}(X,\lambda))$ is measurable with respect to $(X, \lambda)$ and hence to $(\delta, X, \lambda)$. By \cref{as:p-index_x} (1), 
    \begin{equation}\label{eq:r_H_def}
    \E[ H(\delta, P) \mid \delta, X, \lambda ] = \E[ H(\delta, P) \mid \delta, \lambda ] := r_H(\delta, \lambda),\quad \text{a.s.}
    \end{equation}
    Since $(X,\lambda) = (X, \lambda(X,Z))$ is measurable with respect to $(X, Z)$,
    \[k(X) = \E[k(X)\mid X,\lambda] = \E[\E[\bar{H}(\delta,\lambda)\mid X,Z]\mid X,\lambda] = \E[r_H(\delta,\lambda)\mid X,\lambda], \quad \text{a.s.}\]
    By \cref{as:p-index_x} (2), for $(\mu_X\otimes \mu_\lambda)$-almost every $(x, \lambda')$,
    \[k(x) = \E[r_{H}(\delta,\lambda')\mid X=x,\lambda=\lambda'].\]
    By \cref{as:exogenous_x}, 
    $\lambda \indep \delta \mid X.$
    Thus, for $(\mu_X\otimes \mu_\lambda)$-almost every $(x, \lambda')$, 
    \begin{equation}\label{eq:r_H_lambda}
    k(x) = \E[r_H(\delta,\lambda')\mid X=x].
    \end{equation}
    Fix any $\lambda_0\in S_\lambda$ such that the above equation holds for $\mu_X$-almost every $x$. Then, for $\mu_\lambda$-almost every $\lambda'$, 
    \begin{equation*}\label{eq:r_H_diff}
    0 = \E[r_H(\delta,\lambda') - r_H(\delta,\lambda_0)\mid X=x], \quad \text{for }\mu_X\text{-almost every } x.
    \end{equation*}
    Since $X$ has positive density on $\R^J$, \begin{equation*}
    \E[r_H(\delta,\lambda') - r_H(\delta,\lambda_0)\mid X] = 0 \,\, \text{ a.s. in }X\text{, for }\mu_\lambda\text{-almost every }\lambda'.
    \end{equation*}
    By \cref{as:complete_x}, 
    \[r_H(\delta,\lambda') = r_H(\delta,\lambda_0):= \tilde{H}(\delta) \,\, \text{ a.s. in }\delta\text{, for }\mu_\lambda\text{-almost every }\lambda'.\]
    Using \cref{as:p-index_x} again, $\lambda$ has a positive density on $S_\lambda$ and thus, 
    \[r_H(\delta,\lambda) = \tilde{H}(\delta)\quad \text{ a.s. in }(\delta,\lambda).\]
    Together with \eqref{eq:r_H_def}, we have
     \[\E[H(\delta, P) - \tilde{H}(\delta)\mid \delta, X, \lambda] = 0, \quad \text{a.s.}\]
    Taking conditional expectation given $(X,Z)$ on both sides and using the same argument for \eqref{eq:delta_X_lambda}, we obtain
    \[\E[H(\delta, P) - \tilde{H}(\delta)\mid X, Z] = 0, \quad \text{a.s.}.\]
    Completeness (\cref{as:complete_x}) then implies $H(\delta, P) = \tilde{H}(\delta)$, hence faithfulness.
\item The proof is analogous to \cref{prop:discrete_x}(2).  \qedhere
\end{enumerate}
\end{proof}

\subsection{Faithfulness verification from price separability}

We next restate \cref{as:model-p,prop:model-p}, which rely on restrictions about
derivatives of $P$ in $z$. \eqref{eq:price_eqn} is slightly different from---but is implied by---its counterpart in \cref{as:model-p}.

\begin{as}
\label{as:model-p_x} We have the following:
  \begin{enumerate}
 \item (Support) For almost every $\delta$, there is a connected open set $U_\delta
            \subset \R^J$ such that $P \mid \delta$ has a density $f(p \mid \delta) > 0$
            on $U_\delta$ and $\P(P \in U_\delta \mid \delta) = 1$. The distribution $
            (S,P,Z)\mid X$ is absolutely continuous with respect to the Lebesgue measure
            on $\R^{J \times J \times d_z}$. There exists an open set $U \subset \R^{J
            \times J \times d_z}$ where the density of $(S,P,Z) \mid X$ is positive on $U$
            and $P((S, P,Z) \in U \mid X) = 1$ a.s.

    \item (Price equation) The random variable $P$ can be represented as \[ P = f
    (X,\delta, Z, \tilde \omega)
    \text{ for } (\tilde \omega, \delta) \indep Z \mid X
      \numberthis
      \label{eq:price_eqn}
    \]
    for some function $f$ continuously differentiable in $Z$ with Jacobian $D_z f$. The joint
    distribution $ (\tilde \omega, \delta)
    \mid X$  has density $f(\tilde \omega, \delta \mid x)$ with respect to some dominating
    measure
    $\lambda_x$.

    \item (Price separability) The Jacobian of $f$ in $Z$ satisfies the following: For
    measurable functions
    $A, B$, \[
      \underbrace{D_z f(X, \delta, Z, \tilde \omega)}_{J \times d_z} = \underbrace{A(f(X, \delta,
      Z, \tilde \omega),
            \delta)}_{J \times J} \cdot \underbrace{B
      (X, Z)}_{J \times d_z} \numberthis \label{eq:separable}
    \]
    where $A(f(X, \delta,
      Z, \tilde \omega),
            \delta)$ and $B(X,Z)$ are both full-rank a.s. with $d_z \ge J$

            \item (Domination) There exists $G_1(x, \delta, \tilde \omega) \ge 0$ and $G_2(x, \delta,
            \tilde \omega) \ge 0$ such that a.s.,\footnote{For concreteness, we choose the
            Frobenius norm as the matrix norm.}
  \begin{align*}
  \norm{\sigma_p(\delta, P) f_z(X, Z, \delta, \tilde \omega)} \le G_1(X, \delta, \tilde \omega) \\ 
  \norm{f_z(X, Z, \delta, \tilde \omega)} \le G_2(X, \delta, \tilde \omega)
  \end{align*}
  and $\E[G_1^4 + G_2^4 \mid X] = \E[G_1^4 + G_2^4 \mid X, Z] < \infty$.

\end{enumerate}
\end{as}

\begin{prop}
\label{prop:model-p_x}

Let 
\begin{enumerate}
    \item $\mathcal
F$ denote the class of functions $H(\delta, P)$ where (i) $\E [H^2 \mid X, Z] < \infty$,
(ii) for $F^*$-almost every $\delta$, $H(\delta, \cdot)$ is continuously differentiable
with derivative $H_p(\delta, \cdot)$ on $U_\delta$, and (iii) \[\diff{}{z}\E[H(\delta,
f(X,\delta, z,
\tilde \omega))
\mid X, Z=z] = \E[H_p(\delta, f(X,\delta, z,
\tilde \omega)) D_z f(X,\delta, z,
\tilde \omega) \mid X]. \tag{Interchange of differentiation and expectation}\]

\item $\mathcal H \subset L^2_F(S,P)$ be
            the set of $\R^J$-valued functions $h (s,p)$ such that (i) $h$ is continuously
            differentiable in $(s,p)$ on $U$, (ii) the derivatives are locally
            Lipschitz: There exists $\epsilon > 0$ such that for each $j=1,\ldots,J$ and
            all values $s, s', p, p'$ where $\norm{(s,p) - (s',p')} < \epsilon$, \[
              \max\pr{\norm[\bigg]{\diff{h_j}{s}(s,p) - \diff{h_j}{s}(s',p')}, \norm[\bigg]{
              \diff{h_j}{p}(s,p) - \diff{h_j}{p}(s',p')}} \le L(s',p') \norm{(s,p) - (s',p')}
            \]
            where $\E[L^2(S,P) \mid X, Z] < \infty$, and (iii)
The derivatives $h_s, h_p$ are integrable: \[
    \E[\norm{h_s(S,P)}^2 + \norm{h_p(S,P)}^2 \mid X, Z] < \infty.
  \] (iv) $h$ is invertible in $S$.

\item $\mathcal K$ be all square-integrable functions of $X$.

\end{enumerate}

Assume $\sigma^{-1} \in  \mathcal H$. Suppose
\cref{as:model-p_x,as:complete_x,as:exogenous_x,as:invertible_x} hold.  Then,
\begin{enumerate}
    \item \cref{as:faithful_x} holds
    \item The assumptions in \cref{prop:faith_x} hold
\end{enumerate}
\end{prop}

\begin{proof}
\begin{enumerate}[wide]
    \item Let $H \in \mathcal F$. Write \begin{align*} k(x) = \E[H(\delta, P) \mid X=x, Z=z] &= \int
  H(\delta, f(x, \delta, z, \tilde \omega)) \, f(\tilde \omega, \delta \mid x) d
  \lambda_x.
  \end{align*}
  Since $\mathcal H\in \mathcal F$, differentiating both sides in $z$ yields \[ 0 =
    \E[H_p(\delta, P) A(P, \delta)  \mid X, Z] \cdot B(X, Z).
  \]
  Since $d_z \ge J$ and $B(X,Z)$ is full rank, we have that $
    \E[H_p(\delta, P) A(P, \delta)  \mid X, Z] = 0 
 $ almost surely.
  By completeness, since $A$ is full rank,\[
    H_p(\delta, P) A(P, \delta) = 0 \implies H_p(\delta, P) = 0 \text{ a.s. }
  \]
  Since $H_p$ is continuous in $p$ and $P$ is supported with
  positive density on $U_\delta$, $H_p (\delta, p) = 0$ on $U_\delta$. As a result,
  $H(\delta, \cdot)$ is constant on $U_\delta$ for almost every $\delta$. Hence there
  exist some  $H(\delta)$ where $ H(\delta) = H(\delta, P)
  $
  almost surely, verifying \cref{as:faithful_x}.

  \item \cref{lemma:leibniz} verifies that assumption (2) of \cref{prop:faith_x} holds.  Finally, assumption (1) of \cref{prop:faith_x} holds by choosing $T_0(d) = d$, since \[
    \sigma^{-1} \in \mathcal H, \quad \E[\sigma^{-1} \mid X, Z] = k(X) \in L^2_F(X) =
    \mathcal K,
  \]
  under the assumption that $\sigma^{-1}$ is square-integrable. \qedhere
\end{enumerate}
\end{proof}

\begin{lemma}
  \label{lemma:leibniz}
  Suppose \cref{as:model-p_x} holds. Let $\mathcal H, \mathcal F$ be defined as in 
  \cref{prop:model-p_x} and suppose $\sigma^{-1} \in \mathcal H$. 
  Define $
    H(\delta, p) = h(\sigma(\delta, p), p)
  $. Then $H \in \mathcal F$. 
\end{lemma}

\begin{proof}
  It suffices to show that differentiation in $z$ and expectation are interchangeable for
  $H$, since the other properties of $\mathcal F$ are assumed in $\mathcal H$. Since it
  suffices to show the claim entrywise over entries of $h$, we abuse notation and denote
  $h(s,p)$ as an entry and treat it as a scalar function.

  We let $f_z$ denote $D_z f$. Define $h^*(\delta, x, z, \tilde \omega) = h(\sigma(\delta,
  f(x,\delta ,z, \tilde \omega)), f(x,\delta, z,
\tilde \omega))$. Fix some $z_0$. Let \[
  h^*_z(\delta, X, Z, \tilde \omega) = \pr{h_s(S, P)' \sigma_p(\delta, P) + h_p(S, P)'}  f_z
  (X, \delta, Z,  \tilde \omega)
\] be the (transposed) gradient at $(\delta, X, Z, \tilde \omega)$. We show that if $h \in
\mathcal H$  then \[
  \diff{}{z}\E[h^*(\delta, X, Z, \tilde \omega) \mid X, Z] = \E[h^*_z(\delta, X, Z, \tilde \omega)
  \mid X, Z]. 
\]

It suffices to show that \[
  \lim_{t\to 0} \E\bk{\sup_{\norm{v} = 1}  \abs[\bigg]{  \frac{h^*(\delta, X, z_0+tv,
  \tilde \omega) - h^* (\delta, X,
  z_0, \tilde \omega)}
  {t} - h^*_z
  (\delta, X, z, \tilde \omega) v} \mid X} = 0. \numberthis \label{eq:L1error}
\]
\eqref{eq:L1error}  implies \[
  \lim_{t\to0} \sup_{\norm{v} = 1} \abs[\bigg]{\E\bk{\frac{h^*(\delta, X, z_0+tv, \tilde \omega)
  - h^* (\delta, X,
  z_0, \tilde \omega)}
  {t} \mid X} - \E[h^*_z(\delta, X, z, \tilde  \omega) \mid X]  v} = 0,
\]
which implies that $z \mapsto \E[h^*(\delta, X, z, \tilde \omega) \mid X] $ is (Frechet)
differentiable at $z_0$ and its gradient is equal to $\E[h^*_z(\delta, X, z,
\tilde \omega) \mid X]$.

Towards \eqref{eq:L1error}, at a fixed $(\delta, x, \tilde \omega)$, observe that since $h^*$ is
differentiable at $z_0$, we have pointwise convergence\[
  \lim_{t \to 0}\sup_{\norm{v} = 1}  \abs[\bigg]{  \frac{h^*(\delta, x, z_0+tv,
 \tilde  \omega) - h^* (\delta, x,
  z_0, \tilde \omega)}
  {t} - h^*_z
  (\delta, x, z_0, \tilde \omega) v} = 0.
\]
Thus it suffices to show the following and apply the dominated convergence theorem: For
all sufficiently small $t$, \[
   \sup_{\norm{v} = 1}  \abs[\bigg]{  \frac{h^*(\delta, X, z_0+tv,
  \tilde \omega) - h^* (\delta, X,
  z_0, \tilde \omega)}
  {t} - h^*_z
  (\delta, X, z_0, \tilde \omega) v} \le G_0(\delta, X, \tilde \omega)
  \numberthis \label{eq:dct_condition}
\]
where $\E[G_0(\delta, X, \tilde \omega)
  \mid X] < \infty.$ 

  Observe that \begin{align*}
  \norm{h^*_z(\delta, x, z_0, \tilde \omega)} &\le \norm{h_s(s, p)} \norm{\sigma_p(\delta, p)
    f_z
  (x, z_0, \delta, \tilde \omega)} +
    \norm{h_p (s, p)} \norm{f_z
  (x, z_0, \delta, \tilde \omega)} \\
  &\le \norm{h_s(s, p)} G_1(x, \delta,  \tilde \omega) + \norm{h_p (s, p)}G_2(x, \delta,  \tilde \omega)
  \end{align*}
  for $s = \sigma(\delta, x, z_0, \tilde  \omega), p = f(x, \delta, z_0,  \tilde \omega)$. Since the
  derivatives of $h$ and $G_1, G_2$ are all square-integrable, by Cauchy-Schwarz $
  \norm{h_s(s, p)} G_1(x, \delta, \tilde  \omega) + \norm{h_p (s, p)}G_2(x, \delta,  \tilde \omega)$ is an
  integrable dominating function.

Therefore it suffices to show that the difference quotient can be dominated \[
   \sup_{\norm{v} = 1}  \abs[\bigg]{  \frac{h^*(\delta, X, z_0+tv,
   \tilde \omega) - h^* (\delta, X,
  z_0,  \tilde \omega)}
  {t} } \le G_0(\delta, X,  \tilde \omega)
  \numberthis \label{eq:dct_condition_1}
\]

Towards \eqref{eq:dct_condition_1},  define $p_{tv} = f
(\delta,
x, z_0 + tv,  \tilde \omega)$ and $s_{tv} = \sigma(\delta, p_{tv}).$ By the mean-value theorem in
$h$, we have
\[h^*(\delta, x, z_0+tv,
   \tilde \omega) - h^* (\delta, X,
  z_0,  \tilde \omega) = h_s(\tilde s_{t,v}, \tilde p_{t,v}) (s_{tv} - s_0) + h_p(\tilde s_
  {t,v}, \tilde p_{t,v}) (p_{tv} - p_0) \numberthis \label{eq:expansion}
\]
where $(\tilde s_{t,v}, \tilde p_{t,v})$ is some point on the line segment connecting $
(s_0, p_0)$ with $(s_{tv}, p_{tv}).$

We can write $h_s(\tilde s_{tv}, \tilde p_{tv}) = h_s(s_0, p_0) + R_s(s_0, p_0)$ where
\[\norm{R_{s}(s_0, p_0)} \le L(s_0, p_0) t (G_1 (x,\delta,  \tilde \omega) + G_2(x, \delta,  \tilde \omega))\]
for all $t < \epsilon$. Similarly for $h_p$.
Thus \begin{align*}
\eqref{eq:expansion} &= h_s(s_0, p_0) (s_{tv} - s_0) + R_s(s_0, p_0) (s_{tv} - s_0) + h_p
(s_0, p_0) (p_{tv} - p_0) + R_p(s_0, p_0)  (p_{tv} - p_0)
\end{align*}
By the mean-value theorem applied to $z \mapsto \sigma(\delta, f(\delta, x, z,  \tilde \omega))$,
we have that \[
  s_{tv} - s_0 = \br{\sigma_p(\delta, \check p_{tv}) f_z(x, \check z, \delta,  \tilde \omega)}
  tv
\]
where $\check z$ is on the line segment between $z_0, z_0 + tv$ and $\check p_{tv} = f(x,
\check z, \delta,  \tilde \omega)$. We thus have \[
  \norm{s_{tv} - s_0} \le t G_1(x, \delta,  \tilde \omega). 
\]
Similarly, we have  \[
  \norm{p_{tv} - p_0} \le t G_2 (x, \delta,  \tilde \omega). 
\]

Thus, for all $t < \epsilon$, 
\begin{align*}
\sup_{\norm{v} =1} \norm[\bigg]{\frac{\eqref{eq:expansion}}{t}} &\le \norm{h_s(s_0,
p_0)} G_1 (x, \delta,
 \tilde \omega) + \norm{h_p(s_0, p_0)} G_2(x, \delta,  \tilde \omega) \\&\quad+ t L(s_0, p_0) \br{G_1(x,
\delta,  \tilde \omega) + G_2(x, \delta, \tilde  \omega)}^2
\end{align*}
The right-hand side is a function of $( \tilde \omega, \delta, x)$ that is integrable by
Cauchy--Schwarz. This concludes the proof.
\end{proof}

\begin{restatable}{lemma}{lemmaregular}
\label{lemma:regular}
  Let $(\tilde \omega, \delta) \indep Z \mid X $ and suppose $P$ is generated by 
  \eqref{eq:fn_form}. Assume the dimension of $z$ is at least $J$. Assume that
  \begin{enumerate}
  \item The codomains of $f_0, f_1, f_2$ are all $\R^J$.
 \item $f_0(\cdot, \delta)$ is invertible where $f_0^{-1}(\cdot, \delta)$ is
    continuously differentiable for every $\delta$ with invertible Jacobian $Q_0(p,
    \delta)
    \in \R^{J\times J}$. 
    \item $f_1$ is continuously differentiable in $z$ with full-rank Jacobian $Q_1(z, x)
    \in \R^ {J    \times d_z}$
  \end{enumerate}
  Then items 2 and 3 of \cref{as:model-p_x} are satisfied.
\end{restatable}

\begin{proof}
  Write $
    f_0^{-1}(p, \delta) =  f_1(z, x) + f_2(\delta, x, \tilde \omega).
 $
  Differentiate in $z$ implicitly to obtain \[
    Q_0(p, \delta) D_z f = Q_1(z, x).
  \]
  Since $Q_0$ is assumed to be invertible, we have \[
     D_z f = \underbrace{Q_0^{-1}(p, \delta)}_A \underbrace{Q_1(z,x)}_B.
  \]
  By assumption $A$ and $B$ are full rank.  
\end{proof}

\subsection{Faithfulness verification from discrete support} We verify completeness implies faithfulness if $X$ and $\delta$ are discrete with same-sized supports.

\begin{prop} 
\label{prop:discrete_x}
For every $F^*$, suppose the support of $\delta$ and $X$ are both finite sets
 of size $M \in \N$. Let 
\begin{enumerate}
    \item $\mathcal F$ be all integrable functions of $(\delta, P)$,
    \item  $\mathcal H$ be all integrable functions of $(S, P)$ invertible in $S$
    \item $\mathcal K$ be all $\R^J$-valued functions of $X$.
\end{enumerate}
 Then, with these choices, 
 \begin{enumerate}
    \item \cref
 {as:faithful_x} follows from \cref{as:complete_x} and \cref{as:exogenous_x}.
 \item The assumptions of \cref{prop:faith_x} are satisfied. 
 \end{enumerate}
\end{prop}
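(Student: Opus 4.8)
The plan is to prove the two claims separately, with claim (1) doing the real work and claim (2) then following almost mechanically from the stated choices of $\mathcal F$, $\mathcal H$, and $\mathcal K$. For claim (1), I would first reduce \cref{as:complete_x} to a finite-dimensional linear-algebra statement about $\delta \mid X$. Since $\sigma$ is invertible, completeness of $(S,P)\mid(X,Z)$ is equivalent to completeness of $(\delta,P)\mid(X,Z)$; restricting the test functions to those depending on $\delta$ alone and invoking \cref{as:exogenous_x} (so that $\E[g(\delta)\mid X,Z]=\E[g(\delta)\mid X]$) yields completeness of $\delta\mid X$. Enumerating the supports as $\{d_1,\dots,d_M\}$ and $\{x_1,\dots,x_M\}$, each value having positive probability, I form the $M\times M$ stochastic matrix $\mathbf P=(p_{ij})$ with $p_{ij}=\P(\delta=d_j\mid X=x_i)$. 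Completeness of $\delta\mid X$ says precisely that $\mathbf P\mathbf g=0$ implies $\mathbf g=0$, so the square matrix $\mathbf P$ has trivial kernel and is therefore invertible.

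Invertibility of $\mathbf P$ is exactly what lets me realize every target $k\in\mathcal K$ as a conditional expectation of a function of $\delta$. Given $H\in\mathcal F$ with $\E[H(\delta,P)\mid X,Z]=k(X)\in\mathcal K$, I would solve, coordinate by coordinate in $\R^J$, the linear system $\sum_j p_{ij}h_j=k(x_i)$ for unknowns $h_j=\tilde H(d_j)\in\R^J$; invertibility of $\mathbf P$ guarantees a solution and hence a function $\tilde H(\delta)$ with $\E[\tilde H(\delta)\mid X]=k(X)$. The difference $H(\delta,P)-\tilde H(\delta)$ is a bona fide function of $(S,P)$ through $\delta=\sigma^{-1}(S,P)$, and its conditional expectation given $(X,Z)$ vanishes (using \cref{as:exogenous_x} to drop $Z$ from $\E[\tilde H(\delta)\mid X,Z]$). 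Completeness then forces $H(\delta,P)=\tilde H(\delta)$ almost surely, which is exactly the conclusion of \cref{as:faithful_x}.

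For claim (2) I would verify the three requirements of \cref{prop:faith_x}. Here \cref{as:invertible_x,as:exogenous_x} are maintained and \cref{as:faithful_x} is claim (1). Condition (1) of \cref{prop:faith_x} holds with $T_0$ the identity: $\E[\delta\mid X,Z]=\E[\delta\mid X]$ lies in $\mathcal K$ because every function of $X$ does, while $\sigma^{-1}(\cdot,\cdot)\in\mathcal H$ since it is invertible in $S$ (with inverse $\sigma$) and integrable under \cref{as:invertible_x}. Condition (2) holds because for any $h\in\mathcal H$ the composition $h(\sigma(\delta,P),P)=h(S,P)$ is integrable and thus lies in $\mathcal F$. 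This gives all hypotheses of \cref{prop:faith_x}, so price counterfactuals are identified.

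The main obstacle is the invertibility of $\mathbf P$, and this is precisely where the equal-cardinality assumption is indispensable: completeness delivers injectivity of $\mathbf P$, but only because $\mathbf P$ is square does injectivity upgrade to surjectivity and hence to solvability of $\sum_j p_{ij}h_j=k(x_i)$ for arbitrary targets. If $\delta$ had strictly more support points than $X$, the system could fail to be solvable and a genuinely $P$-dependent $H$ might satisfy the moment restriction, breaking faithfulness; if strictly fewer, completeness itself would fail. The one point requiring care is to read ``$H(\delta,P)=\tilde H(\delta)$ almost surely'' through the invertible change of variables $\delta=\sigma^{-1}(S,P)$, so that the final completeness step is legitimately applied to a function of the observables $(S,P)$.
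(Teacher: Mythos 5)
Your proof is correct and follows essentially the same route as the paper's: both reduce completeness to invertibility of the square $M\times M$ matrix of conditional probabilities $\P(\delta = d_j \mid X = x_i)$, solve the resulting linear system to produce $\tilde H(\delta)$ matching $k(X)$, and then apply completeness once more (viewing $H(\delta,P)-\tilde H(\delta)$ as a function of $(S,P)$ via $\sigma^{-1}$) to conclude $H(\delta,P)=\tilde H(\delta)$, with the same choice $T_0 = \mathrm{id}$ for the second claim. Your write-up is in fact slightly more explicit than the paper's on two points it leaves implicit: the derivation of completeness of $\delta \mid X$ from \cref{as:complete_x} and \cref{as:exogenous_x}, and the role of \cref{as:exogenous_x} in dropping $Z$ from $\E[\tilde H(\delta)\mid X,Z]$.
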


\begin{proof}
  Note any real-valued $h(\delta)$ and $g(x)$ can be represented as $\R^M$ vectors. Consider \[
  \E[h(\delta)\mid X,Z] = \E[h(\delta) \mid X] \equiv g(X). 
  \]
If we represent both as vectors in $\R^M$, then we have $
  Q'_{\delta \mid X} h = g,$ 
for a matrix $Q_{\delta \mid X}$  whose $(i,j)$\th{} entry  is $\P_{F^*}
 (\delta=\delta_i \mid X=x_j)$, where we number values in the support of $\delta$ as
 $\delta_i$ and likewise for values of $X$. The matrix $Q_ {\delta \mid X}$ is square by
 assumption. Completeness implies that $Q'_{\delta \mid X}$ is full-rank, and thus
 invertible.

Now, consider any function where $\E[H(\delta, P) \mid X, Z] = k(X) \in \mathcal K$. For
each coordinate $j$, there is a function $H_{0j}(\delta)$ which can be represented as $H_
{0j} = (Q'_{\delta \mid X})^{-1} k_j$. Construct $H_0 = (H_{01},\ldots, H_{0J})'$. By
construction $\E[H_0(\delta)\mid X, Z] = k(X)$. Completeness implies that $H(\delta, P) =
H_0(\delta)$. This shows (1).

For (2), we can easily check that $\sigma^{-1} \in \mathcal H$ and $\E[\sigma^{-1} \mid X,
Z] = \E[\delta \mid X] \in \mathcal K$. Thus (1) in \cref{prop:faith_x} is satisfied with
$T_0(\delta) = \delta$. If $h(S, P) \in \mathcal H$ is integrable, then so is $H
(\delta, P) = h(\sigma (\delta, P), P)$ since $S = \sigma(\delta, P)$ a.s. Thus
$H \in \mathcal F$. This verifies (2) in 
\cref{prop:faith_x}.
\end{proof}

\subsection{Faithfulness verification from location-scale model for $\delta$}\label{subsec:model-delta}

\subsubsection{Mathematical preliminaries}

Let $\N$ denote the set of nonnegative integers and
$\mathbb{S}_+^{J}$ denote the set of $J\times J$ positive semidefinite matrices. For any
multi-index $\alpha = (\alpha_1, \ldots, \alpha_J)\in \N^J$ and function $f(x): \R^{J}
\mapsto \R^s$, where $x\in \R^J$ and $s$ may be different from $J$, we write $|\alpha| =
\sum_{j=1}^{J}\alpha_j$, $\alpha! = \prod_{j=1}^{J}\alpha_j !$, and
\[x^\alpha = x_1^{\alpha_1}x_2^{\alpha_2}\cdots x_J^{\alpha_J}, \quad D_x^\alpha f(x) = D_{x_1}^{\alpha_1}D_{x_2}^{\alpha_2}\cdots D_{x_J}^{\alpha_J} f(x).\]
Furthermore, for any function $g: \R^{J}\mapsto \R$, we define its Fourier transform as 
\[\hat{g}(\omega) = \int_{\R^{J}}e^{-i\langle \omega, u\rangle}g(u)du.\]
For a function class $\mathcal{L}$, $f\in \mathcal{L}^{\otimes J}$ iff $f = (f_1, \ldots, f_J)'$ where $f_j\in \mathcal{L}$.

Given a function $v: \R^J\mapsto \R$ for which $v$ is positive everywhere, let $\W^{s,p}(\R^J)$ denote the Bessel potential space of indices $(s, p)$: 
\[\W^{s,p}(\R^J; \R^J) = \left\{u: \R^J \to \R^J: (1 + \|\omega\|_2^2)^{s/2}\hat{u}_j
(\omega)\in L^p(\R^J), \,\, j \in [J]\right\}. \numberthis \label{eq:bessel}\] 
For any $s = m+\sigma$ where $m= \lfloor s\rfloor$ is the integer part and $\sigma\in [0,
1) $ is the fractional part, let $\|\cdot\|_{\W^{s,p}}$ denote the
associated norm with
\[\|u\|_{\W^{s,p}} = \sum_{|\alpha|\le m}\|D^\alpha u\|_{L^p} + \sum_{\alpha: |\alpha|=m}[D^\alpha u]_{W^{\sigma,p}},\] 
where 
\[[h]_{\W^{\sigma,p}}:= \left(\int_{\R^J}\int_{\R^J}\frac{\|h(x) - h(y)\|^p}{\|x-y\|^{\sigma p +J }}dx dy\right)^{1/p},\quad [h]_{\W^{\sigma,\infty}}:=\sup_{x\neq y}\frac{\|h(x)-h(y)\|}{\|x-y\|^\sigma}.\]

Define the following space:
\begin{align*}
\K^s(\R^J; \R^J) &\equiv \Lin(\R^J; \R^J) \oplus \W^{s,\infty} (\R^J;\R^J) \numberthis \label{eq:Ks}\\
&\equiv \{u: \R^J\to \R^J: u(x) = Ax + r(x), A\in \R^{J\times J}, r\in \W^{s,\infty}
(\R^J, \R^J)\}
\end{align*}
where $\Lin(\R^J; \R^J)$ is the space of all bounded linear functions that map
$\R^J$ to $\R^J$. In the following, for notational convenience, we suppress the spaces
$\R^J$ or $\R^J;\R^J$ in $\W^{s,\infty}$ and $\K^s$. For any $s_0\le s_1$, since $\W^{s_1,
\infty}\subset \W^{s_0, \infty}$, we have $\K^{s_1}
\subset \K^{s_0}$.

Last, for a linear operator $\cT$ that maps a Banach space $\mathcal{U}$ to a Banach space $\mathcal{V}$, let 
\[\|\cT\|_{\mathcal{U}\mapsto \mathcal{V}} = \sup_{u\in \mathcal{U}}\frac{\|\cT u\|_{\mathcal{V}}}{\|u\|_{\mathcal{U}}}.\]

\subsubsection{Assumptions}

\begin{as}
\label{as:location-scale-assns}
There exists $s ,M \in \N$ with $M = J+2s+1$ and $s > J+1$ such that the following holds:
  \begin{enumerate}%
\item There exist invertible $a, b: \R^J \to \R^J$ with 
\[a(\delta) = b(X) + \Sigma(X)\epsilon \numberthis \label{eq:location-scale}\]
where $\epsilon\indep X$ and $\Sigma: \R^J\mapsto \S_+^J$. 

\item Let $q$ denote the density function of $-\epsilon$ with respect to the Lebesgue measure,
which we assume to exist. There exist $0 < C_- < C_+ < \infty$ such
that, for any
$\upsilon\in \R^J$,
\[C_-(1 + \|\upsilon\|_2^2)^{-s/2} \le |\hat{q}(\upsilon)|\le C_+(1 + \|\upsilon\|_2^2)^{-s/2},\]
and, for any $\alpha \in \N^J$ with $|\alpha|\le M$,
\[\|D_\upsilon^{\alpha} \hat{q}(\upsilon)\|\le C_+(1 + \|\upsilon\|_2^2)^{-s/2-|\alpha|/2}\]
\item There exist $\Sigma_0 \in \S_+^J$, and $\psi \in (0, \lambda_{\min}(\Sigma_0) / 2)$
such that,%
\[\sup_{x\in \R^J}\|\Sigma(x) - \Sigma_0\|_{\op} + \int_{\R^J}\|\Sigma(x) - \Sigma_0\|_{\op}dx\le \psi,\]
and, for any $\gamma\in \N^J$ with $|\gamma|\le M$,
\[\sup_{x\in \R^J}\|D_x^{\gamma}\Sigma(x)\|_{\op} + \int_{\R^J} \|D_x^{\gamma}\Sigma(x)\|_{\op}dx\le \psi.\]
\item $b\in \K^{M}(\R^J)$ and $\sup_{x\in \R^J}\|
(D b(x))^{-1}\|_{\op} <\infty$.
\end{enumerate}
\end{as}

Intuitively, 
\begin{itemize}
  \item \cref{as:location-scale-assns}(1) assumes that $\delta \mid X$ follows a
location-scale model.
\item \cref{as:location-scale-assns}(2) assumes that this familly has
shape corresponding to some density $q(\cdot)$, and $|\hat q(v)| \rateeq \norm{v}^{-s}$
for large $v$ in signal space, with derivatives up to $M$ of the Fourier transform
satisfying $
\norm{D_v^\alpha \hat q (v)} \lesssim \norm{v}^{-s-|\alpha|}$.
\item \cref{as:location-scale-assns}(3) imposes that $\Sigma(x)$ is close in operator norm
to some constant $\Sigma_0$ and similarly for the derivatives of $\Sigma(x)$.
\item \cref{as:location-scale-assns}(4) imposes that $b \in \K^M$ and its inverse $b^
{-1}$ is
Lipschitz. 
\end{itemize}

\subsubsection{Faithfulness verification}

\begin{prop}
 Assume that \cref{as:location-scale-assns} holds with sufficiently small $\psi$
 \eqref{eq:psi_bound}. Let $M,s$ be
 as in \cref{as:location-scale-assns}. Let
  \begin{enumerate}
    \item $\mathcal F$ be all integrable functions of $(\delta, P)$
    \item $\mathcal H$ be all integrable functions of $(S, P)$ invertible in $S$
    \item $\mathcal K = \K^{2s}$
  \end{enumerate}
  Then, with these choices, 
  \begin{enumerate}
    \item \cref{as:faithful_x} follows from \cref{as:complete_x,as:exogenous_x}
    \item The assumptions of \cref{prop:faith_x} are satisfied.
  \end{enumerate}
\end{prop}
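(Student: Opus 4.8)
The plan is to reduce the whole statement to a single \emph{range} (surjectivity) property of the conditional-expectation operator $\cT : u \mapsto \big(x\mapsto \E[u(\delta)\mid X=x]\big)$, and then to feed this into completeness and exogeneity exactly as in the discussion preceding the proposition. For part (1), I would start from any $H\in\mathcal F$ with $\E[H(\delta,P)\mid X,Z]=k(X)\in\K^{2s}=\mathcal K$ and look for some $H_0(\delta)$ with $\E[H_0(\delta)\mid X]=k(X)$. Given such an $H_0$, \cref{as:exogenous_x} yields $\E[H_0(\delta)\mid X,Z]=\E[H_0(\delta)\mid X]=k(X)$, so $\E[H(\delta,P)-H_0(\delta)\mid X,Z]=0$; completeness of $(\delta,P)\mid(X,Z)$ (equivalent to \cref{as:complete_x} by invertibility of $\sigma$) then forces $H(\delta,P)=H_0(\delta)$ almost surely, which is \cref{as:faithful_x}. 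Thus everything in (1) rests on showing that $\cT$ maps onto $\K^{2s}$.

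To prove this surjectivity I would first linearize $\cT$ by a change of variables. Writing $\tilde u=u\circ a^{-1}$ and using \eqref{eq:location-scale}, $\cT u(x)=\E[\tilde u(b(x)+\Sigma(x)\xi)]$, so $\cT u=(\mathcal S\tilde u)\circ b$ where $\mathcal S\tilde u(t)=\E[\tilde u(t+\Sigma(b^{-1}(t))\xi)]$ is a variable-coefficient convolution. Since $b\in\K^{M}$ has a Lipschitz inverse with $\sup_x\norm{(Db(x))^{-1}}_{\op}<\infty$ (\cref{as:location-scale-assns}(4)), $b^{-1}$ is a sufficiently smooth diffeomorphism (via the inverse function theorem and $2s<M$), so $k\mapsto k\circ b^{-1}$ is a bounded isomorphism of $\K^{2s}$; solving $\cT u=k$ is therefore equivalent to solving $\mathcal S\tilde u=k\circ b^{-1}$ with right-hand side in $\K^{2s}$. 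For the constant-coefficient model $\Sigma\equiv\Sigma_0$, the corresponding operator $\mathcal S_0$ is convolution with the density $\kappa$ of $-\Sigma_0\xi$, whose Fourier symbol obeys $|\hat\kappa(\omega)|\asymp(1+\norm{\omega}^2)^{-s/2}$ by \cref{as:location-scale-assns}(2). I would construct $\mathcal S_0^{-1}$ as the Fourier multiplier with symbol $1/\hat\kappa$ and verify, using the derivative bounds in \cref{as:location-scale-assns}(2), that $\mathcal S_0$ gains exactly $s$ orders of smoothness, i.e.\ $\mathcal S_0:\K^{s}\to\K^{2s}$ is an isomorphism.

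For the heteroskedastic case I would treat $\mathcal S=\mathcal S_0+\cE$ as a perturbation, where $\cE$ encodes the effect of $\Sigma(b^{-1}(t))-\Sigma_0$. Using the operator-norm and integrated derivative bounds on $\Sigma-\Sigma_0$ from \cref{as:location-scale-assns}(3), I would show $\norm{\mathcal S_0^{-1}\cE}<1$ whenever $\psi$ is below the threshold \eqref{eq:psi_bound}; a Neumann series then makes $\mathcal S=\mathcal S_0(I+\mathcal S_0^{-1}\cE)$ invertible onto $\K^{2s}$, giving $\tilde u=\mathcal S^{-1}(k\circ b^{-1})\in\K^{s}$ and hence $u=\tilde u\circ a$, which has at most linear growth and is therefore integrable under the location-scale structure. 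This furnishes $H_0=u$ and completes (1). Part (2) is routine: take $T_0=a$, note $a$ is invertible, $\E[a(\delta)\mid X]=b(X)\in\K^{M}\subseteq\K^{2s}=\mathcal K$ (using the centering of $\xi$ and $M>2s$), that $a\circ\sigma^{-1}$ is invertible in $S$ and integrable so lies in $\mathcal H$, and that for any $h\in\mathcal H$ the map $H(\delta,p)=h(\sigma(\delta,p),p)$ is integrable because $S=\sigma(\delta,P)$ a.s., hence $H\in\mathcal F$.

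The main obstacle is the multiplier analysis in the $L^\infty$/H\"older--Zygmund scale $\W^{\cdot,\infty}$ underlying $\K^{\cdot}$, rather than the far easier $L^2$ scale: boundedness of $1/\hat\kappa$ and of the perturbation $\cE$ is not governed by the sup of the symbol alone but requires Mikhlin-type estimates on symbol derivatives, which is exactly why \cref{as:location-scale-assns}(2)--(3) control derivatives up to order $M$ and why the budget is set to $M=J+2s+1$ with $s>J+1$ (the extra $J+1$ derivatives supply the decay needed to integrate symbols over $\R^J$ and to embed into continuous functions). The second delicate point is bounding $\norm{\mathcal S_0^{-1}\cE}$ by a constant multiple of $\psi$: since $\cE$ is a variable-coefficient, pseudodifferential-type perturbation, this calls for commutator and product estimates showing that the $x$-dependence of $\Sigma$ contributes operator norm $O(\psi)$ uniformly, which is where the integrated bounds in \cref{as:location-scale-assns}(3) and the smoothness of $b^{-1}$ are used.
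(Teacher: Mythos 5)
Your proposal is correct and follows essentially the same route as the paper: you reduce part (1) to surjectivity of $u \mapsto \E[u(\delta)\mid X]$ onto $\K^{2s}$ (the paper's \cref{thm:main}), handle $a$, $b$, and $\Sigma_0$ by the same change of variables (the paper's reduction to \cref{thm:main_2} via \cref{lem:tilde_Sigma,lem:k_b_inv}), invert the constant-coefficient convolution by the Fourier multiplier $1/\hat q$ (the paper's \cref{lem:T0_inv}), and treat heteroskedasticity as a perturbation whose $O(\psi)$ operator norm yields a convergent Neumann series (the paper's \cref{lem:Delta_integral} together with Steps 1--3 of \cref{thm:main_2}, which do your ``commutator/product'' bookkeeping via Fa\`a di Bruno and Fourier-decay bounds), finishing part (2) with $T_0 = a$ exactly as the paper does.
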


\begin{proof}
  (1) follows immediately from \cref{thm:main} and \cref{as:complete_x},
  applied entrywise. For
  \cref{prop:faith_x}(1), observe that $b (X) = \E [a(\delta) \mid X, Z] \in \K^ {M}
  \subset \K^{2s} = \mathcal K$ 
  and $\sigma^{-1} \in
  \mathcal H$ with $T_0(\delta) = a(\delta)$. Likewise, \cref{prop:faith_x}(2) holds immediately.
\end{proof}

We now outline the remaining argument:
\begin{enumerate}
  \item \cref{thm:main} is the key result. Its proof argues that it is without loss to
  work with $\tilde \delta = \Sigma_0^{-1} \delta$, $\tilde X = \Sigma_0^{-1} b(X)$ such
  that $\tilde \delta = \tilde X + \tilde \Sigma (\tilde X) \epsilon$ and $\tilde \Sigma 
  (\tilde X)$ is centered around $I$. Moreover, it suffices to show that \[
    \E[u(\delta) \mid X] = k(X) \quad k \in \W^{2s,\infty}
  \]
  is solvable by some $u$. With these normalizations, \cref{thm:main_2} shows that there exists some $u(\delta)$
  such that $\E[u(\delta) \mid X ] = k(X)$, for scalar-valued functions $(u,k)$. The bulk
  of the argument justifies \cref{thm:main_2}.
  
  \item Suppose $\Sigma(x) = I$. Then $\E[u(\delta) \mid X] = (u \star q)(x)$ is a
  convolution. Thus we can take $u$ such that its Fourier transform is a ratio $\hat u = 
  \frac{\hat k} {\hat q}$. \Cref{lem:T0_inv} verifies that $u$ is a proper function. 
 
  \item Next, with $\Sigma(x) \neq I$, one could consider the deviation of the
  conditional expectation operator from the $\Sigma=I$ case: \[
    (\cT u)(x) \equiv (\cT_0 u)(x) + (\cE u)(x) = \br{\cT_0 \pr{\mathrm{Id} + \cT_0^{-1} \cE} u
    }(x)
  \]
  for $\cT$ the conditional expectation operator $\delta \mid X$, $\cT_0$ the
  conditional expectation operator under $\Sigma(x) = I$, and $\cE$ their difference. Now, we have the geometric
  expansion \[
    (\mathrm{Id} + \cT_0^{-1} \cE)^{-1} = \sum_{\ell=0}^\infty \pr{-\cT_0^{-1} \cE}^{\ell}
  \]
  upon verifying that the right-hand side converges. \Cref{lem:Delta_integral} derives the
  key condition for convergence.  Given a $k$, one can then construct $u$ as \[
    u = \pr{\sum_{\ell=0}^\infty \pr{-\cT_0^{-1} \cE}^{\ell}} \cT_0^{-1} k.
  \]

\item The key condition for convergence turns out to be a bound on the discrepancy in
Fourier space of the
integration kernels: $\hat q (\Sigma (x) \omega) - \hat q (\omega)$. The proof of 
\cref{thm:main_2} verifies that so long as $\Sigma(x)$ is sufficiently close to $I$ under
\cref{as:location-scale-assns}(3), this bound is attainable. 
\end{enumerate}

\begin{theorem}\label{thm:main}
Assume that \cref{as:location-scale-assns} holds with 
\[\psi < \psi^* \equiv \psi^*\left(J,s,C_{\pm},\lambda_{\min}(\Sigma_0), \|Db\|_{\W^
{M-1},\infty}, \|Db^{-1}\|_{L^\infty}\right) \numberthis \label{eq:psi_bound}\] 
for some constant $\psi^*$ that only depends on the parameters inside the parentheses.
Then for any $k\in \K^{2s}(\R^J, \R^J)$, there exists a proper function $u: \R^J \mapsto \R^J$ such that $\E[u
(\delta)\mid X] = k(X)$ almost surely. 
\end{theorem}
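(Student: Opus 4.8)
The plan is to solve the linear integral equation $(\cT u)(x) \equiv \E[u(\delta)\mid X = x] = k(x)$ by a Fourier-analytic perturbation argument, treating the heteroskedastic conditional-expectation operator $\cT$ as a small perturbation of a convolution operator whose inverse is explicit in Fourier space.

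First I would perform a sequence of reductions. Writing $\tilde\delta = \Sigma_0^{-1} a(\delta)$ and $\tilde X = \Sigma_0^{-1} b(X)$, the location-scale model \eqref{eq:location-scale} becomes $\tilde\delta = \tilde X + \tilde\Sigma(\tilde X)\xi$ with $\tilde\Sigma(\tilde X) = \Sigma_0^{-1}\Sigma(b^{-1}(\Sigma_0\tilde X))$ centered at the identity. Since $a,b,\Sigma_0$ are invertible and $b\in\K^M$ has a Lipschitz inverse with $M>2s$, the induced change of variables is a bijection that preserves the function class $\K^{2s}$, so solving $\E[u(\delta)\mid X]=k$ is equivalent to solving the normalized equation with a transformed right-hand side still in $\K^{2s}$. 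Because $\cT$ is linear and acts coordinatewise on the output, it suffices to treat scalar-valued $u,k$; and since $\K^{2s} = \Lin\oplus\W^{2s,\infty}$ with the affine component matched directly by an affine $u$, I reduce to the case $k\in\W^{2s,\infty}(\R^J)$.

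Next I would settle the homoskedastic base case $\Sigma\equiv I$. Here $\delta = X+\xi$ conditionally on $X$, so $(\cT_0 u)(x) = (u\star q)(x)$ is convolution with $q$, and on the Fourier side $\widehat{\cT_0 u} = \hat u\,\hat q$. I would invert by setting $\hat u = \hat k/\hat q$; the two-sided bound $C_-(1+\norm{\upsilon}^2)^{-s/2}\le |\hat q(\upsilon)| \le C_+(1+\norm{\upsilon}^2)^{-s/2}$ from \cref{as:location-scale-assns}(2) shows that division by $\hat q$ costs exactly $s$ orders of smoothness, so $k\in\W^{2s,\infty}$ yields a well-defined $u$ lying in the appropriate Bessel space, and one verifies $u$ is a proper function (the content of the convolution-inversion step). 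This establishes that $\cT_0$ is boundedly invertible on the relevant space.

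For general $\Sigma(x)$ I would factor $\cT = \cT_0 + \cE = \cT_0(\mathrm{Id}+\cT_0^{-1}\cE)$ and invert the second factor by the Neumann series $(\mathrm{Id}+\cT_0^{-1}\cE)^{-1} = \sum_{\ell\ge0}(-\cT_0^{-1}\cE)^\ell$, setting $u = \big(\sum_{\ell\ge0}(-\cT_0^{-1}\cE)^\ell\big)\cT_0^{-1}k$. The main obstacle, and the heart of the argument, is the operator-norm bound $\norm{\cT_0^{-1}\cE} < 1$ on the Bessel space. In Fourier terms $\cE$ corresponds to the kernel discrepancy $\hat q(\Sigma(x)\upsilon)-\hat q(\upsilon)$, and composing with $\cT_0^{-1}$ multiplies by $1/\hat q \asymp (1+\norm{\upsilon}^2)^{s/2}$; the task is to show this composite multiplier-type operator has small norm. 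This is exactly where the full strength of \cref{as:location-scale-assns} enters: the decay bounds on $D_\upsilon^\alpha\hat q$ up to order $M=J+2s+1$ supply the multiplier regularity needed for boundedness on $\W^{2s,\infty}(\R^J)$, while the smallness of $\sup_x\norm{\Sigma(x)-\Sigma_0}_{\op}$ and of the derivatives of $\Sigma$—all controlled by $\psi$ in \cref{as:location-scale-assns}(3)—makes the discrepancy, and hence $\norm{\cT_0^{-1}\cE}$, as small as required once $\psi<\psi^*$. Convergence of the series then produces a proper $u$ solving $\E[u(\delta)\mid X]=k(X)$, which completes the proof.
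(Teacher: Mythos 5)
Your plan coincides with the paper's own proof of \cref{thm:main}: the same change of variables reducing to $b=\mathrm{id}$, $\Sigma_0=I$, scalar-valued $k\in\W^{2s,\infty}$ (affine part split off, with composition lemmas guaranteeing the class $\K^{2s}$ is preserved under $b^{-1}$), the same Fourier inversion $\hat u=\hat k/\hat q$ of the homoskedastic convolution operator $\cT_0$ (the paper's \cref{lem:T0_inv}), and the same Neumann series $u=\sum_{\ell\ge 0}(-\cT_0^{-1}\cE)^{\ell}\cT_0^{-1}k$ whose convergence is driven by the smallness $\psi$ of $\Sigma(\cdot)-\Sigma_0$ (the paper's \cref{lem:Delta_integral} and Steps 1--3 of \cref{thm:main_2}). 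Two minor points: with your factorization $\cT=\cT_0(\mathrm{Id}+\cT_0^{-1}\cE)$ the contraction must be established on $\W^{s,\infty}$---the space where $\cT_0^{-1}k$ lives---rather than on $\W^{2s,\infty}$ as you state, and your final assertion that the assumptions force $\|\cT_0^{-1}\cE\|<1$ is precisely where the paper spends its remaining technical effort (Taylor expansion of $\hat q(\Sigma(x)\omega)-\hat q(\omega)$, Fa\`{a} di Bruno derivative bounds, and the Fourier-decay estimate yielding the key integral condition \eqref{eq:Deltahat_integral}).
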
 

\begin{proof}
We first note that it suffices to prove the claim for scalar-valued $k\in \K^{2s}(\R^J;
  \R)$ since we can concatenate the solutions. We argue that it is sufficient to prove
  \cref{thm:main_2}. The reason is that we can change variables and write $\tilde X$ for
  $b(X)$, $\tilde \epsilon$ for $\Sigma_0
  \epsilon$, and $\tilde \Sigma(\tilde X) = (\Sigma \circ b^{-1}(\tilde X)) \Sigma_0^
  {-1}$. Likewise, it is without loss to redefine $\delta$ as $a(\delta)$. 

  We need to show that, for sufficiently small choices of $\psi$, $\tilde \Sigma(\cdot) - I$ is
  suitably bounded by some $\tilde \psi$ so
  that \cref{as:location-scale-assns}(3) is satisfied for \cref{thm:main_2}. 
  \Cref{lem:tilde_Sigma} bounds $\tilde \Sigma(\cdot) - I$ in terms of $\Sigma(\cdot) -
  \Sigma_0$, where the bound depends only on the additional quantities $\|Db\|_{\W^
  {M-1,\infty}}, \|Db^ {-1}\|_ {L^\infty}, \lambda_{\min}^{-1}(\Sigma_0)$. Thus, there exists 
\[\psi^* := \psi^*\left(J,s,C_{\pm},\lambda_{\min}(\Sigma_0), \|Db\|_{\W^{M-1},\infty}, \|Db^{-1}\|_{L^\infty}\right)\] 
such that, if $\psi\le \psi^*$, we have that \begin{align*}
\|\tilde{\Sigma}(\cdot) - I\|_{\W^{M,\infty}} + \|\tilde{\Sigma}(\cdot) - I\|_{\W^{M,1}}\le \td{\psi}^*(J,s, C_{\pm}).
\end{align*}
required by \cref{thm:main_2}.

Now, having checked the conditions, we can apply \cref{thm:main_2} to $\tilde X$. Given $k
\in \K^ {2s}$, since
$b$ is invertible, we can write $
  k(X) = (k \circ b^{-1} \circ b)(X) = (k\circ b^{-1}) (\tilde X).
$ 
By \cref{lem:k_b_inv}, $k\circ b^{-1} \in \K^{2s}$. By definition, \[
  (k\circ b^{-1})(\tilde x) = A \tilde x + \tilde k(\tilde x) \quad \tilde k(\tilde x) \in
  \W^{2s, \infty}. 
\]
By \cref{thm:main_2}, we can then find $\tilde u(\delta)$ for which \[
  \E[\tilde u(\delta) \mid \tilde X] = \tilde k(\tilde X).
\]
Let $u(\delta) = A\delta + \tilde u(\delta)$, 
we then have \[
  \E[u(\delta) \mid \tilde X] = \E[u(\delta) \mid X] = (k\circ b^{-1})(\tilde X) = k(X).
\]
\end{proof}

\begin{theorem}\label{thm:main_2}
Suppose  \cref{as:location-scale-assns} holds with $a(\delta) = \delta, b(x) = x,
\Sigma_0 = I$ and
\[\psi < \psi^* \equiv \psi^*\left(J,s,C_{\pm}\right) \numberthis
\label{eq:psi_bound_2}\] 
for some constant $\psi^*$ that only depends on the parameters inside the parentheses,
defined in \eqref{eq:explicit_psi}.
Then for any $k\in \W^{2s, \infty}$, there exists a proper function $u$ such that
$\E [u (\delta)\mid X] = k(X)$ almost surely.
\end{theorem}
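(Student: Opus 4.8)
The plan is to realize the conditional-expectation operator as a small, invertible perturbation of a convolution and to invert it by a Neumann series in Fourier space. Since $a$ is invertible, after replacing $u$ by $u\circ a^{-1}$ I may assume $a=\mathrm{Id}$, so that $\delta\mid X=x$ has the law of $x+\Sigma(x)\xi$ and the target equation becomes $(\cT u)(x):=\E[u(x+\Sigma(x)\xi)]=k(x)$. Using $\E[e^{i\langle\omega,\Sigma(x)\xi\rangle}]=\hat q(\Sigma(x)\omega)$ (as $-\xi\sim q$ and $\Sigma(x)$ is symmetric) and expanding $u$ through its Fourier transform gives
\[
(\cT u)(x)=\frac{1}{(2\pi)^J}\int \hat u(\omega)\,e^{i\langle\omega,x\rangle}\,\hat q(\Sigma(x)\omega)\,d\omega,
\]
so $\cT$ is a pseudodifferential operator with symbol $\hat q(\Sigma(x)\omega)$. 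Setting $\Sigma(x)\equiv I$ yields the baseline operator $\cT_0 u=u\star q$, a Fourier multiplier acting by $\hat u\mapsto\hat u\,\hat q$. First I would invert $\cT_0$: the lower bound $|\hat q(\omega)|\ge C_-(1+\|\omega\|_2^2)^{-s/2}$ in \cref{as:location-scale-assns}(2) makes $1/\hat q$ a legitimate multiplier of order $s$, while the tail bounds on $D^\alpha\hat q$ for $|\alpha|\le M$ control its derivatives; together these show that $\cT_0^{-1}$, defined by $\widehat{\cT_0^{-1}k}=\hat k/\hat q$, maps $\W^{2s,\infty}$ boundedly into $\W^{s,\infty}$ and sends $k$ to a proper function.

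Second, I would write $\cT=\cT_0+\cE$, where $\cE$ has the variable symbol $\hat q(\Sigma(x)\omega)-\hat q(\omega)$, and factor $\cT=\cT_0(\mathrm{Id}+\cT_0^{-1}\cE)$. The candidate solution is
\[
u=\Big(\sum_{\ell=0}^\infty(-\cT_0^{-1}\cE)^{\ell}\Big)\cT_0^{-1}k,
\]
which solves $\cT u=k$ provided the series converges in operator norm on $\W^{s,\infty}$, i.e.\ provided $\|\cT_0^{-1}\cE\|_{\W^{s,\infty}\to\W^{s,\infty}}<1$. Everything therefore reduces to bounding this norm by something small when $\psi$ is small. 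The governing heuristic is a Fourier-space discrepancy estimate: writing $\hat q(\Sigma(x)\omega)-\hat q(\omega)=\int_0^1\nabla\hat q\big((I+t(\Sigma(x)-I))\omega\big)^{\!\top}(\Sigma(x)-I)\omega\,dt$ and combining $\|\Sigma(x)-I\|_{\op}\lesssim\psi$ from \cref{as:location-scale-assns}(3) with the decay of $\nabla\hat q$ shows $|\hat q(\Sigma(x)\omega)-\hat q(\omega)|\lesssim\psi\,|\hat q(\omega)|$, so that the symbol of $\cT_0^{-1}\cE$ is pointwise $O(\psi)$; the corresponding bounds on its derivatives up to order $M=J+2s+1$ follow in the same way from \cref{as:location-scale-assns}(2)--(3).

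The main obstacle is upgrading these pointwise symbol estimates to a genuine operator-norm bound, because $\cE$ is not translation invariant and $\cT_0^{-1}\cE$ is a true pseudodifferential composition rather than a multiplier. I would handle this by estimating the Schwartz kernel of $\cT_0^{-1}\cE$ directly and invoking a Schur test: integration by parts in $\omega$ in the oscillatory representation trades symbol decay for decay of the kernel off the diagonal, and the surplus of $J+1$ derivatives built into $M=J+2s+1$ (recall $s>J+1$) is exactly what makes the kernel absolutely integrable in one argument, uniformly in the other, with total mass $\lesssim\psi$. Choosing $\psi^*=\psi^*(J,s,C_{\pm})$ so that this bound is strictly below $1$ then guarantees convergence of the Neumann series and produces a proper $u\in\W^{s,\infty}$ with $\E[u(\delta)\mid X]=k(X)$ almost surely; the routine points---that each term stays in $\W^{s,\infty}$ and that summation commutes with $\cT_0$---follow from the same uniform bounds.
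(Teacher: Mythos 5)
Your proposal follows the same architecture as the paper's proof: reduce to solving $(\cT u)(x)=k(x)$ for the operator with symbol $\hat q(\Sigma(x)\omega)$, invert the baseline convolution $\cT_0$ as a Fourier multiplier using the two-sided bound on $\hat q$ (the paper's \cref{lem:T0_inv}), factor $\cT=\cT_0(\mathrm{Id}+\cT_0^{-1}\cE)$, solve by Neumann series, and derive smallness of $\cT_0^{-1}\cE$ from $\psi$ together with the derivative surplus in $M=J+2s+1$. Where you diverge is the execution of the key operator bound. The paper never touches the spatial Schwartz kernel of $\cT_0^{-1}\cE$: it runs the Schur-type estimate entirely in frequency space, computing $\widehat{\cE g}(\upsilon)=\int \hat\Delta_\omega(\omega-\upsilon)\hat g(\omega)\,d\omega$ and verifying the condition \eqref{eq:Deltahat_integral}, $\sup_\upsilon\int|\hat\Delta_\omega(\omega-\upsilon)|\,|\hat q(\omega)|/|\hat q(\upsilon)|^2\,d\omega\le 1-\eta$, where the decay of $\hat\Delta_\omega$ is extracted from the $x$-smoothness of $\Sigma$ via the Fa\`a di Bruno formula (\cref{lem:hatq}) and the derivatives-in-$L^1$-to-Fourier-decay bound (\cref{prop:fourier_bound}); the integral converges precisely because $M-2s=J+1>J$, the same surplus you identify. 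This frequency-space formulation buys two things your kernel route leaves open: \emph{(i)} it works in the weighted norm $\|g\|_{\hat q}=\|\hat g/\hat q\|_{L^\infty}$, which the paper identifies with the $\W^{s,\infty}$ norm, so invariance of the space under $\cT_0^{-1}\cE$ and convergence of the series in $\W^{s,\infty}$ are automatic, whereas a spatial Schur test naturally yields only $L^\infty\to L^\infty$ control (enough for bare existence of a proper $u$, but not for the $\W^{s,\infty}$ mapping claims in your last sentence without further argument); and \emph{(ii)} it avoids making sense of the composed kernel at all, which is delicate since $\cT_0^{-1}$ has order $+s$ and no integrable kernel, so the kernel of $\cT_0^{-1}\cE$ exists only as an oscillatory integral and its uniform absolute integrability---the step you assert via integration by parts in $\omega$---is exactly the part that would still need to be carried out in detail. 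In short: same strategy and same quantitative mechanism, with the decisive estimate done in the dual (frequency) representation by the paper, which is the cleaner and fully rigorous route.
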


\begin{proof}

By \cref{lem:T0_inv,lem:Delta_integral}, it remains to prove 
\eqref{eq:Deltahat_integral}. 

\noindent \smallskip \textbf{Step 1: Bound $L^1$ norm of $\Delta_\omega(x)$ and its derivatives by
$(1+\norm{\omega}^2)^{-s/2}$.}

By the Taylor expansion,
\[\Delta_{\omega}(x) = \langle \nabla \hat{q}((\lambda \Sigma(x) + (1 - \lambda)I)\omega), (\Sigma(x) - I)\omega\rangle\]
for some $\lambda\in (0, 1)$ that depends on $x$. By \cref{as:location-scale-assns}(3),
$\psi < 1/2$ and
\[\|(\lambda \Sigma(x) + (1 - \lambda)I)\omega\|_2 \ge (1 - \psi) \|\omega\|_2\ge  \|\omega\|_2/2.\]
By \cref{as:location-scale-assns}(2), 
\begin{align*}
\|\nabla \hat{q}((\lambda \Sigma(x) + (1 - \lambda)I)\omega)\|_2&\le C_+(1 +  \|\omega\|_2^2/4)^{-s/2-1/2}\\
& \le C_+2^{s+1}(1 + \|\omega\|_2^2)^{-s/2-1/2}.
\end{align*}
Then 
\begin{align*}
|\Delta_\omega(x)|&\le C_+2^{s+1}(1 + \|\omega\|_2^2)^{-s/2-1/2}\cdot \|\Sigma(x) - I\|_{\op}\|\omega\|_2\\
& \le C_+2^{s+1} (1 + \|\omega\|_2^2)^{-s/2}\cdot \|\Sigma(x) - I\|_{\op}
\end{align*}
By \cref{as:location-scale-assns}(3),
$\|\Delta_{\omega}\|_{L^1} \le \psi C_+2^{s+1}(1 + \|\omega\|_2^2)^{-s/2}.$ 
By \cref{as:location-scale-assns}(2) and \cref{lem:hatq}, for any $\alpha\in \N^J$
with $0 < |\alpha|\le M$
\[\|D_x^\alpha\Delta_\omega\|_{L^1} = \|D_x^\alpha \hat{q}(\Sigma(x)\omega)\|_{L^1}\le \psi C_+ C_{J,s} (1 + \|\omega\|_2^2)^{-s/2}.\]

\noindent \smallskip \textbf{Step 2: Bound the Fourier transform $|\hat \Delta_\omega|$ with $L^1$
norm of $\Delta_\omega$ derivatives.}

By \cref{prop:fourier_bound},
\begin{align*}
|\hat{\Delta}_{\omega}(\upsilon)|&\le \psi \td{C} (1 + \|\omega\|_2^2)^{-s/2}(1 + \|\upsilon\|_2)^{-M}\\
& \le \psi \td{C} (1 + \|\omega\|_2^2)^{-s/2}(1 + \|\upsilon\|_2^2)^{-M/2},
\end{align*}
where $\td{C} = c_{J,M} C_+ C_{J,s}.$

\noindent \smallskip \textbf{Step 3: Bound the key condition \eqref{eq:Deltahat_integral}.}

By \cref{as:location-scale-assns}(2),
\begin{align*}
& \left|\int_{\R^J} |\hat{\Delta}_{\omega}(\omega - \upsilon)|\frac{|\hat{q}(\omega)|}{|
\hat{q}(\upsilon)|^2}d\omega\right|\\
& \le \psi \frac{\td{C}C_+}{C_-^2}\int_{\R^J} \frac{(1 + \|\omega - \upsilon\|_2^2)^{-M/2}(1 + \|\omega\|_2^2)^{-s}}{(1 + \|\upsilon\|_2^2)^{-s}}d\omega\\
& = \psi \frac{\td{C}C_+}{C_-^2}\int_{\R^J} \frac{(1 + \|\upsilon\|_2^2)^{s}}{(1 + \|\omega - \upsilon\|_2^2)^{M/2}(1 + \|\omega\|_2^2)^{s}}d\omega.
\end{align*}
Note that 
\[1 + \|\upsilon\|_2^2 \le 1 + 2(\|\omega\|_2^2 + \|\omega - \upsilon\|_2^2) \le 2(1 + \|\omega\|_2^2)(1 + \|\omega- \upsilon\|_2^2).\]
Thus, 
\begin{align*}
& \left|\int_{\R^J} |\hat{\Delta}_{\omega}(\omega - \upsilon)|\frac{|\hat{q}(\omega)|}{|
\hat{q}(\upsilon)|^2}d\omega\right|\\
& \le \psi \frac{2^{s}\td{C}C_+}{C_-^2}\int_{\R^J} (1 + \|\omega - \upsilon\|_2^2)^{-(\frac{M}{2}-s)}d\omega\\
& = \psi \frac{2^{s}\td{C}C_+}{C_-^2}\int_{\R^J} (1 + \|\upsilon\|_2^2)^{-(\frac{M}
{2}-s)}d\upsilon \tag{Change of variables}\\
& = \psi \frac{2^{s}\td{C}C_+}{C_-^2}\int_{\R^J} (1 + \|\upsilon\|_2^2)^{-(J+1)/2}d\upsilon\\
& = \psi \frac{2^{s}\td{C}C_+}{C_-^2}\frac{\pi^{(J+1)/2}}{\Gamma((J + 1)/2)}.
\end{align*}
This can be made less than $1$ because we can choose  
\[\psi < \td{\psi}^*(J,s,C_{\pm}) =\frac{C_-^2}{2^{s}c_{J,M} C_+ C_{J,s}C_+}\frac{\Gamma(
(J + 1)/2)}{\pi^{(J+1)/2}}. \numberthis \label{eq:explicit_psi}\]
The bound is uniform in $\omega$. Thus, \eqref{eq:Deltahat_integral} is proved. The proof is then completed.   
\end{proof}

\begin{lemma}\label{lem:T0_inv}
For $g: \R^J\mapsto \R^J$, let $\cT$ be the operator defined by
\[(\cT_0 g)(x) \equiv  \int_{\R^J} q(x - \delta)g(\delta)d\delta.\]
Under \cref{as:location-scale-assns}(2) and \cref{as:location-scale-assns}(3), $\W^
{2s,\infty}$ has a unique preimage in $\W^ {s,\infty}$ under $\cT_0$.
\end{lemma}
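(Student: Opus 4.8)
The plan is to diagonalize $\cT_0$ on the Fourier side. Since $(\cT_0 g)(x) = \int_{\R^J} q(x-\delta)\,g(\delta)\,d\delta$ is convolution with $q$ (acting componentwise, so it suffices to treat each scalar coordinate and concatenate, exactly as in the proof of \cref{thm:main}), the convolution theorem gives $\widehat{\cT_0 g}(\omega) = \hat q(\omega)\,\hat g(\omega)$, and solving $\cT_0 u = k$ becomes the pointwise algebraic relation $\hat q(\omega)\,\hat u(\omega) = \hat k(\omega)$. The two-sided bound in \cref{as:location-scale-assns}(2), $C_-(1+\|\omega\|_2^2)^{-s/2} \le |\hat q(\omega)| \le C_+(1+\|\omega\|_2^2)^{-s/2}$, is the crux: it says $\hat q$ never vanishes and decays exactly like $(1+\|\omega\|_2^2)^{-s/2}$, so division by $\hat q$ is well defined and costs precisely $s$ orders of Fourier decay. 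Read through the defining condition $u\in\W^{t,\infty} \iff (1+\|\omega\|_2^2)^{t/2}\hat u(\omega)\in L^\infty$, this is the shift that should carry $\W^{s,\infty}$ onto $\W^{2s,\infty}$.

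For the uniqueness part, I would argue as follows: if $u_1,u_2\in\W^{s,\infty}$ satisfy $\cT_0 u_1 = \cT_0 u_2$, then $\hat q\,(\hat u_1 - \hat u_2) = 0$ pointwise, and since $\hat q(\omega)\neq 0$ for every $\omega$ by the lower bound, $\hat u_1 = \hat u_2$ almost everywhere, hence $u_1 = u_2$. Here I would use that $s>J+1$ makes $\hat u_i = \big[(1+\|\omega\|_2^2)^{s/2}\hat u_i\big]\cdot(1+\|\omega\|_2^2)^{-s/2}$ a product of an $L^\infty$ function and an $L^1$ function, so $\hat u_i\in L^1$; each $u_i$ is then a genuine continuous function recovered by Fourier inversion, which legitimizes the pointwise manipulations.

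For existence, given $k\in\W^{2s,\infty}$ I would define $u$ by $\hat u(\omega) = \hat k(\omega)/\hat q(\omega)$, well defined because $\hat q$ is nonvanishing, and then verify membership directly:
\[
(1+\|\omega\|_2^2)^{s/2}|\hat u(\omega)| = \frac{(1+\|\omega\|_2^2)^{s/2}}{|\hat q(\omega)|}\,|\hat k(\omega)| \le C_-^{-1}(1+\|\omega\|_2^2)^{s}|\hat k(\omega)|,
\]
whose right-hand side is bounded in $\omega$ precisely because $k\in\W^{2s,\infty}$. This gives $(1+\|\omega\|_2^2)^{s/2}\hat u\in L^\infty$, i.e.\ $u\in\W^{s,\infty}$; as $s>J$ forces $(1+\|\omega\|_2^2)^{-s/2}\in L^1$, $\hat u$ is integrable and $u$ is recovered by inversion, with $\widehat{\cT_0 u} = \hat q\,\hat u = \hat k$ yielding $\cT_0 u = k$. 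The upper bound in \cref{as:location-scale-assns}(2) gives the reverse inclusion $\cT_0(\W^{s,\infty})\subseteq\W^{2s,\infty}$, so $\cT_0$ is in fact a bijection between the two spaces.

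The step I expect to be delicate is reconciling the Fourier-side characterization of $\W^{t,\infty}$ used above with the physical-space H\"older--Zygmund norm also recorded in the preliminaries. If one insists on the latter, the one-line weight bound must be upgraded to a multiplier statement: writing $u$ as the Fourier multiplier operator with symbol $m(\omega) = (1+\|\omega\|_2^2)^{s/2}/\hat q(\omega)$ applied to $(1-\Delta)^{-s/2}k$, I would verify a Mikhlin--H\"ormander condition $\|D_\omega^\alpha m(\omega)\|\lesssim (1+\|\omega\|_2)^{-|\alpha|}$ for $|\alpha|\le M$ directly from the derivative bounds $\|D_\omega^\alpha \hat q(\omega)\|\le C_+(1+\|\omega\|_2^2)^{-s/2-|\alpha|/2}$ in \cref{as:location-scale-assns}(2) (via the quotient rule together with the lower bound on $|\hat q|$), which yields boundedness of $m(D)$ on the relevant H\"older class. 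I would also note that \cref{as:location-scale-assns}(3) plays no role for $\cT_0$ itself---the operator is pure convolution with $q$---and enters only through the ambient normalization of \cref{thm:main_2}; the lower bound in part (2) is what does the real work.
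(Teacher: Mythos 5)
Your proposal is correct and follows essentially the same route as the paper's proof: diagonalize $\cT_0$ via the convolution theorem, set $\hat u = \hat k/\hat q$, and use the two-sided decay bounds in \cref{as:location-scale-assns}(2) to place $u$ in $\W^{s,\infty}$ under the Fourier-side definition \eqref{eq:bessel}. Your additions---the explicit uniqueness argument from nonvanishing of $\hat q$, justifying pointwise Fourier inversion via $\hat u \in L^1$ (where the paper instead cites H\"ormander), and the observation that \cref{as:location-scale-assns}(3) is never actually used---are all accurate refinements rather than a different method.
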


\begin{proof}
  By \cref{lem:symbol}, 
\begin{equation}\label{eq:T0_pdo}
(\cT_0 g)(x) = (2\pi)^{-J}\int_{\R^J} e^{i\langle x, \omega\rangle}\hat{q}(\omega)\hat{g}(\omega)d\omega.
\end{equation}
Note that $\cT_0$ is a multiplication operator, i.e., 
\[\widehat{\cT_0 g}(\omega) = \hat{q}(\omega) \hat{g}(\omega).\]
For any $k\in \W^{2s,\infty}$, let $u$ be the function with 
\begin{equation}\label{eq:T0_inv_fourier}
\hat{u}(\omega) = \frac{\hat{k}(\omega)}{\hat{q}( \omega)}.
\end{equation}
\Cref{as:location-scale-assns}(2) implies 
\begin{align*}
|\hat{u}(\omega)|& = \underbrace{\frac{|\hat{k}(\omega)|}{|\hat{q}^2(\omega)|}}_{\in
L^\infty(\R^J)}\cdot \underbrace{| \hat{q} (\omega)|}_{\in \W^{s,\infty}}.
\end{align*}
Thus, $u\in \W^{s,\infty}$. 
This also implies that $u$ is a
tempered distribution (which can be identified with a function). By Theorem 7.1.10 of \cite{hormander2003analysis}, $
\widehat{\cT_0 u} = \hat{k}\Longleftrightarrow \cT_0 u = k.
$
Thus, $
\cT_0^{-1}k \in \W^{s,\infty}, \quad \forall k\in \W^{2s,\infty}.$
\end{proof}

\begin{lemma}\label{lem:Delta_integral}
Let
$\cT$ be the operator \[
  (\mathcal T g)(x) = \int_{\R^J} q\pr{\Sigma(x)^{-1} (x-t)} g(t) \,dt = \E_{\delta = X +
  \Sigma(X) \epsilon, \epsilon \sim q(\cdot)} [g(\delta) \mid X=x]. \numberthis \label{eq:t_def}
\] 
For any $\omega\in \R^J$, let 
$\Delta_\omega(x) = \hat{q}(\Sigma(x)\omega) - \hat{q}(\omega).$
Assume that 
\begin{equation}\label{eq:Deltahat_integral}
\max_{\upsilon \in \R^J}\int_{\R^J} |\hat{\Delta}_{\omega}(\omega - \upsilon)|\frac{|
\hat{q}(\omega)|}{|\hat{q}(\upsilon)|^2}d\omega \le 1 - \eta
\end{equation}
for some constant $\eta > 0$. Under \cref{as:location-scale-assns}(1)--(3), $\W^{2s,
\infty}$ has a unique preimage in $\W^
{s, \infty}$ under $\cT$.
\end{lemma}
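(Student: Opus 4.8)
The plan is to realize $\cT$ as a smoothing perturbation of the convolution operator $\cT_0$ of \cref{lem:T0_inv} and to invert it through a Neumann series. First I would split $\cT = \cT_0 + \cE$ with $\cE \equiv \cT - \cT_0$. Since $\cT$ is the conditional-expectation operator for $\delta = X + \Sigma(X)\xi$, passing to Fourier variables shows that $\cT$ acts through the $x$-dependent symbol $\hat q(\Sigma(x)\omega)$ and $\cT_0$ through the constant symbol $\hat q(\omega)$; hence $\cE$ carries the symbol $\Delta_\omega(x) = \hat q(\Sigma(x)\omega) - \hat q(\omega)$. A Fubini computation then gives the kernel representation $\widehat{\cE u}(\upsilon) = (2\pi)^{-J}\int_{\R^J}\hat\Delta_\omega(\upsilon-\omega)\,\hat u(\omega)\,d\omega$, where $\hat\Delta_\omega$ is the Fourier transform of $x\mapsto\Delta_\omega(x)$, which is integrable by the $L^1$ bounds on $\Delta_\omega$ and its derivatives under \cref{as:location-scale-assns}(3). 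I would then factor $\cT = \cT_0(\mathrm{Id} + \mathcal A)$ with $\mathcal A \equiv \cT_0^{-1}\cE$; this is legitimate because \cref{lem:T0_inv} provides a bounded inverse $\cT_0^{-1}:\W^{2s,\infty}\to\W^{s,\infty}$, and since $\cE$ gains $s$ orders of smoothness while $\cT_0^{-1}$ loses them, $\mathcal A$ maps $\W^{s,\infty}$ to itself.

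The crux is the contraction estimate for $\mathcal A$ on $\W^{s,\infty}$. I would use the Fourier characterization \eqref{eq:bessel} directly, so that the relevant norm is (comparable to) $\|u\|_\star \equiv \sup_{\omega}|\hat u(\omega)|/|\hat q(\omega)|$, which is equivalent to the $\W^{s,\infty}$ norm because $|\hat q(\omega)| \asymp (1+\|\omega\|_2^2)^{-s/2}$ by \cref{as:location-scale-assns}(2). Writing $\widehat{\mathcal A u}(\upsilon) = (2\pi)^{-J}\hat q(\upsilon)^{-1}\int \hat\Delta_\omega(\upsilon-\omega)\hat u(\omega)\,d\omega$ and bounding $|\hat u(\omega)| \le \|u\|_\star|\hat q(\omega)|$ inside the integral gives
\[
\frac{|\widehat{\mathcal A u}(\upsilon)|}{|\hat q(\upsilon)|} \le (2\pi)^{-J}\,\|u\|_\star \int_{\R^J}|\hat\Delta_\omega(\upsilon-\omega)|\,\frac{|\hat q(\omega)|}{|\hat q(\upsilon)|^2}\,d\omega .
\]
Taking the supremum over $\upsilon$ and invoking the standing hypothesis \eqref{eq:Deltahat_integral}—using that $|\hat\Delta_\omega(\upsilon-\omega)| = |\hat\Delta_\omega(\omega-\upsilon)|$ since $\Delta_\omega$ is real, and absorbing the normalization constant into the convention of \eqref{eq:Deltahat_integral}—yields $\|\mathcal A u\|_\star \le (1-\eta)\|u\|_\star$, i.e. $\|\mathcal A\|_{\W^{s,\infty}\to\W^{s,\infty}}\le 1-\eta < 1$.

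With the contraction established the remainder is routine. The Neumann series $(\mathrm{Id}+\mathcal A)^{-1} = \sum_{\ell=0}^\infty(-\mathcal A)^\ell$ converges in operator norm on $\W^{s,\infty}$. Given $k\in\W^{2s,\infty}$ I set $u = (\mathrm{Id}+\mathcal A)^{-1}\cT_0^{-1}k \in \W^{s,\infty}$, a proper function, and verify $\cT u = \cT_0(\mathrm{Id}+\mathcal A)u = \cT_0\cT_0^{-1}k = k$. Uniqueness of the preimage follows because $\cT = \cT_0(\mathrm{Id}+\mathcal A)$ is injective on $\W^{s,\infty}$: $\mathrm{Id}+\mathcal A$ is invertible, and $\cT_0$ is injective as multiplication by the nonvanishing symbol $\hat q$.

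The main obstacle is the contraction estimate together with the Fourier bookkeeping that supports it: justifying the interchange of integration orders that produces the kernel representation of $\cE$, confirming that $x\mapsto\Delta_\omega(x)$ is integrable so that $\hat\Delta_\omega$ and all manipulations are well defined (this is exactly where the $L^1$ decay of $\Delta_\omega$ and its $x$-derivatives from \cref{as:location-scale-assns}(3) enters), and matching my bound to the precise form of \eqref{eq:Deltahat_integral}, including the passage between the signed integrand appearing there and the absolute-value version used in the estimate. Because I work throughout with the Fourier/Bessel-potential definition \eqref{eq:bessel} rather than the Gagliardo seminorm, the norm $\|\cdot\|_\star$ is (up to the constants $C_\pm$) literally the governing norm, so the equivalence step is immediate and everything else—boundedness of $\mathcal A$, convergence of the series, and the final verification—follows mechanically from \cref{lem:T0_inv} and \eqref{eq:Deltahat_integral}.
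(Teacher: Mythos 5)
Your proposal follows essentially the same route as the paper's own proof: the decomposition $\cT = \cT_0 + \cE$, the Fourier-symbol representation of $\cE$ through $\hat{\Delta}_\omega$, the contraction bound $\|\cT_0^{-1}\cE\| \le 1-\eta$ in the weighted norm $\sup_\omega |\hat{u}(\omega)|/|\hat{q}(\omega)|$ (the paper's $\|\cdot\|_{\hat{q}}$) obtained directly from \eqref{eq:Deltahat_integral}, and the Neumann-series construction $u = \left(\sum_{\ell \ge 0}(-\cT_0^{-1}\cE)^{\ell}\right)\cT_0^{-1}k$. If anything you are slightly more complete than the paper, since you explicitly argue uniqueness via injectivity of $\cT_0(\mathrm{Id}+\cT_0^{-1}\cE)$ and track the $(2\pi)^{-J}$ normalization, points the paper's proof glosses over.
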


\begin{proof}
  
Define $\cE = \cT - \cT_0$. By \cref{lem:symbol} and \eqref{eq:T0_pdo}, for any $g\in \W^{s,\infty}$,  
\[(\cE g)(x) = (2\pi)^{-J}\int_{\R^J} e^{i\langle x, \omega\rangle}\Delta_\omega(x)\hat{g}(\omega)d\omega.\]
Then,
\begin{align}
\widehat{(\cE g)}(\upsilon) &= \int_{\R^J} e^{-i\langle x, \upsilon\rangle} \lb (2\pi)^{-J}\int_{\R^J} e^{i\langle x, \omega\rangle}\Delta_\omega(x)\hat{g}(\omega)d\omega \rb dx\nonumber\\
& = \int_{\R^J}\int_{\R^J} (2\pi)^{-J}e^{i\langle x, \omega - \upsilon\rangle}\Delta_\omega(x)\hat{g}(\omega)  dx d\omega\nonumber\\
& = \int_{\R^J} \hat{\Delta}_{\omega}(\omega - \upsilon)\hat{g}(\omega)d\omega.\label{eq:cEg}
\end{align}
By \cref{as:location-scale-assns}(2) and the standard definition of $\W^{s,\infty}$, for
any $k\in \W^{s,\infty} (\R^J)$,
\[\|k\|_{\W^{s,\infty}} < \infty \Longleftrightarrow \left\|\frac{\hat{k}}{
\hat{q}}\right\|_{L^\infty} \equiv \|k\|_{\hat{q}} < \infty.\]
Then we have
\[\W^{s,\infty}(\R^J) = \left\{k: \left\|k\right\|_{\hat{q}} < \infty\right\}.\]
By \eqref{eq:T0_inv_fourier}, we have that $\widehat{\mathcal T_0^{-1} \rho} = \hat \rho /
\hat q$ for any $\rho \in \W^{2s,\infty}$. 
We first verify that $\cE g\in \W^{2s,\infty}$. 
By \eqref{eq:cEg}, 
\begin{align}
\sup_{\upsilon\in \R^J}\frac{|\widehat{(\cE g)}(\upsilon)|}{|\hat{q}(\upsilon)|^2} & = \sup_{\upsilon\in \R^J} \int_{\R^J} \frac{|\hat{\Delta}_{\omega}(\omega - \upsilon)\hat{g}(\omega)|}{|\hat{q}^2(\upsilon)|}d\omega\nonumber\\
& = \sup_{\upsilon\in \R^J} \int_{\R^J} |\hat{\Delta}_{\omega}(\omega - \upsilon)|\frac{|\hat{q}(\omega)|}{|\hat{q}(\upsilon)|^2} \frac{|\hat{g}(\omega)|}{|\hat{q}(\omega)|}d\omega\nonumber\\
& = \sup_{\upsilon\in \R^J} \int_{\R^J} |\hat{\Delta}_{\omega}(\omega - \upsilon)|\frac{|\hat{q}(\omega)|}{|\hat{q}(\upsilon)|^2} d\omega \cdot \left\|\frac{\hat{g}}{\hat{q}}\right\|_{L^\infty}\nonumber\\
& \le (1 - \eta) \|g\|_{\hat{q}}\label{eq:1-eta}\\
& < \infty,\nonumber
\end{align}
where \eqref{eq:1-eta} uses \eqref{eq:Deltahat_integral} and the last line follows from $g\in \W^{s,\infty}$. 
Applying \eqref{eq:T0_inv_fourier},
\begin{align*}
 \|\cT_0^{-1}\cE g\|_{\hat{q}} &= \left\|\frac{\widehat{\pr{\cT_0^{-1}\cE g}}}{
 \hat{q}}\right\|_{L^\infty}\\
& = \sup_{\upsilon\in \R^J} \int_{\R^J} \frac{|\hat{\Delta}_{\omega}(\omega - \upsilon)\hat{g}(\omega)|}{|\hat{q}^2(\upsilon)|}d\omega\\
& \le (1 - \eta) \|g\|_{\hat{q}},
\end{align*}
where the last line follows from \eqref{eq:1-eta}. This implies that 
\[\|\cT_0^{-1}\cE\|_{\|\cdot\|_{\hat{q}} \leftrightarrow \|\cdot\|_{\hat{q}}} = \|\cT_0^{-1}\cE\|_{\|\cdot\|_{\W^{s,\infty}} \leftrightarrow \|\cdot\|_{\W^{s,\infty}}} \le 1 - \eta.\]
As a result, the following Neumann series converges on $\W^{s,\infty}$:
\[\sum_{\ell\ge 0}(- \cT_0^{-1}\cE)^{\ell}.\]
By \cref{lem:T0_inv}, for any $k\in \W^{2s,\infty}$, $\cT_0^{-1}k\in \W^{s,\infty}$. Let 
\[g = \sum_{\ell\ge 0}(- \cT_0^{-1}\cE)^{\ell}\cT_0^{-1}k.\]
Then $\|g\|_{\W^{s,\infty}} \le \eta^{-1}\|\cT_0^{-1}k\|_{\W^{s,\infty}} <
\infty$ and
\[(\mathrm{Id} + \cT_0^{-1}\cE)g = \cT_0^{-1}k.\]
We thus have that 
\[\cT g= (\cT_0 + \cE)g = \cT_0(\mathrm{Id} + \cT_0^{-1}\cE)g = \cT_0 \cT_0^{-1} k = k.\]

\end{proof}

\begin{lemma}\label{lem:hatq}
Under \cref{as:location-scale-assns}(2)--(3), 
for any $\alpha\in \N^J$
with $|\alpha|\le M$ and
$\omega\in \R^J$,
\[\sup_{x\in \R^J}|D_x^{\alpha}\hat{q}(\Sigma(x)\omega)| + \int_{\R^J}|\partial_x^{\alpha}\hat{q}(\Sigma(x)\omega)|dx\le \psi C_+ C_{J,s} (1 + \|\omega\|_2^2)^{-s/2},\]
where 
\[C_{J,s} = \max_{\alpha: |\alpha|\le M}2^{s+1+|\alpha|}
\sum_{\kappa\in \N^J, 1\le |\kappa|\le |\alpha| }
\sum_{\substack{(k_{\gamma,j}) \in \mathbb{N} \\
\sum_{\gamma} k_{\gamma,j} = \kappa_j, \,\forall j \\
\sum_{\gamma,j} k_{\gamma,j}\,\gamma = \alpha}}
\frac{\alpha!}{\displaystyle\prod_{\gamma} k_{\gamma,j}!\,(\gamma!)^{k_{\gamma,j}}}.\]
\end{lemma}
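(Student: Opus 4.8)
The plan is to treat $x \mapsto \hat q(\Sigma(x)\omega)$ as a composition $\hat q \circ v$ with the vector-valued inner map $v(x) = \Sigma(x)\omega$, and to expand its $\alpha$-th mixed $x$-derivative with the multivariate Faà di Bruno formula. First I would record that $v_j(x) = (\Sigma(x)\omega)_j$, so $D_x^\gamma v_j(x) = (D_x^\gamma \Sigma(x)\,\omega)_j$ and hence $|D_x^\gamma v_j(x)| \le \|D_x^\gamma \Sigma(x)\|_{\op}\,\|\omega\|_2$. The Faà di Bruno expansion then writes $D_x^\alpha[\hat q\circ v](x)$ as a finite sum indexed by a signal-space multi-index $\kappa$ with $1\le|\kappa|\le|\alpha|$ together with the combinatorial data $(k_{\gamma,j})$ subject to $\sum_\gamma k_{\gamma,j}=\kappa_j$ and $\sum_{\gamma,j}k_{\gamma,j}\gamma=\alpha$, of terms
\[
\frac{\alpha!}{\prod_\gamma k_{\gamma,j}!\,(\gamma!)^{k_{\gamma,j}}}\,(D_v^\kappa \hat q)(\Sigma(x)\omega)\prod_{\gamma,j}\bigl(D_x^\gamma v_j(x)\bigr)^{k_{\gamma,j}}.
\]
This reproduces exactly the combinatorial weight appearing in the definition of $C_{J,s}$, so matching the claimed constant reduces to bounding each such term, in both sup and $L^1$, by $\psi\,C_+\,2^{s+|\kappa|}(1+\|\omega\|_2^2)^{-s/2}$ and then summing.

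For the analytic estimates I would use the normalization $\Sigma_0=I$ that is in force whenever this lemma is invoked (inside \cref{thm:main_2}): \cref{as:location-scale-assns}(3) then gives $\|\Sigma(x)-I\|_{\op}\le\psi<1/2$, so $\|\Sigma(x)\omega\|_2\ge(1-\psi)\|\omega\|_2\ge\|\omega\|_2/2$ and therefore $1+\|\Sigma(x)\omega\|_2^2\ge\tfrac14(1+\|\omega\|_2^2)$. Feeding this into the tail bound \cref{as:location-scale-assns}(2) on $D_v^\kappa\hat q$ yields, uniformly in $x$,
\[
|(D_v^\kappa \hat q)(\Sigma(x)\omega)| \le C_+\,2^{s+|\kappa|}\,(1+\|\omega\|_2^2)^{-s/2-|\kappa|/2}.
\]
The derivative factors contribute $\prod_{\gamma,j}|D_x^\gamma v_j|^{k_{\gamma,j}}\le\|\omega\|_2^{|\kappa|}\prod_{\gamma,j}\|D_x^\gamma\Sigma(x)\|_{\op}^{k_{\gamma,j}}$ (using $\sum_{\gamma,j}k_{\gamma,j}=|\kappa|$), and the factor $\|\omega\|_2^{|\kappa|}(1+\|\omega\|_2^2)^{-|\kappa|/2}\le 1$ cancels the extra $-|\kappa|/2$ in the exponent, leaving the target rate $(1+\|\omega\|_2^2)^{-s/2}$.

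It then remains to absorb the product of $\Sigma$-derivatives into a single factor of $\psi$. For the supremum part I would bound each of the $|\kappa|\ge 1$ factors by $\sup_x\|D_x^\gamma\Sigma(x)\|_{\op}\le\psi$, giving $\psi^{|\kappa|}\le\psi$ since $\psi<1$. For the $L^1$ part I would single out one factor (possible because $|\kappa|\ge 1$), bound the remaining $|\kappa|-1$ factors by $\psi$ in sup norm, and integrate the chosen one via $\int\|D_x^\gamma\Sigma(x)\|_{\op}\,dx\le\psi$; since the $D_v^\kappa\hat q$ factor is already uniformly bounded in $x$, this again yields $\psi^{|\kappa|}\le\psi$. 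Adding the sup and $L^1$ contributions produces the prefactor $2\cdot 2^{s+|\kappa|}\le 2^{s+1+M}$ (using $|\kappa|\le|\alpha|\le M$), and summing over $\kappa$ and $(k_{\gamma,j})$ with the stated coefficient $\alpha!/\prod_\gamma k_{\gamma,j}!\,(\gamma!)^{k_{\gamma,j}}$ reconstructs $C_{J,s}$ verbatim.

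The main obstacle I anticipate is bookkeeping rather than analysis: stating the multivariate Faà di Bruno formula with the precise index set $(\kappa,(k_{\gamma,j}))$ and coefficient so that it lines up termwise with the definition of $C_{J,s}$, and checking that the constraints $\sum_\gamma k_{\gamma,j}=\kappa_j$ and $\sum_{\gamma,j}k_{\gamma,j}\gamma=\alpha$ are exactly what governs how the vector-valued derivatives of $v$ feed into $D_v^\kappa\hat q$. The three estimates that do the analytic work---the lower bound $\|\Sigma(x)\omega\|_2\ge\|\omega\|_2/2$, the tail cancellation $\|\omega\|_2^{|\kappa|}(1+\|\omega\|_2^2)^{-|\kappa|/2}\le 1$, and the single-$\psi$ extraction from the $\Sigma$-derivative product---are all routine once the expansion is in hand.
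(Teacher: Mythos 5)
Your proposal is correct and follows essentially the same route as the paper's own proof: multivariate Fa\`{a} di Bruno applied to $\hat q\circ(\Sigma(\cdot)\omega)$, the tail bound from \cref{as:location-scale-assns}(2) combined with the lower bound $\|\Sigma(x)\omega\|_2\ge\|\omega\|_2/2$, the cancellation $\|\omega\|_2^{|\kappa|}(1+\|\omega\|_2^2)^{-|\kappa|/2}\le 1$, and the extraction of a single $\psi$ factor (pure sup bounds for the supremum term; one $L^1$-integrated factor times sup bounds on the rest, via $\|ab\|_{L^1}\le\|a\|_{L^1}\|b\|_{L^\infty}$, for the integral term). If anything, your derivation of the lower bound via $\|\Sigma(x)\omega\|_2\ge(1-\psi)\|\omega\|_2$ from $\|\Sigma(x)-I\|_{\op}\le\psi$ is stated more precisely than the paper's corresponding step.
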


\begin{proof}
  
Fix $\omega\in \R^J$. Let $g(x) = \Sigma(x)\omega$. For any $\gamma$ with $1\le |\gamma|\le |\alpha|$,
\[D_x^{\gamma}g(x)= \lb D_x^{\gamma}\Sigma(x)\rb\omega.\]
Thus, 
\[\|D_x^{\gamma}g(x)\|_{\op}\le \|D_x^{\gamma}\Sigma(x)\|_{\op} \|\omega\| \]
Furthermore, by \cref{as:location-scale-assns}(3),
\[|D_\upsilon^\kappa \hat{q}(\Sigma(x)\omega)| \le C_+(1 + \|\Sigma(x)\omega\|_2^2)^{-s/2 - |\kappa|/2}.\]
By \cref{as:location-scale-assns}(2), 
\[\|\Sigma(x)\|_{\op}\ge 1 - \psi \ge \frac{1}{2}.\]
Then 
\begin{align*}
|D_\upsilon^\kappa \hat{q}(\Sigma(x)\omega)| &\le C_+(1 + \|\omega\|_2^2/4)^{-s/2 - |\kappa|/2}\\
& \le C_+2^{s+|\kappa|}(1 + \|\omega\|_2^2)^{-s/2 - |\kappa|/2}\\
& \le C_+2^{s+|\alpha|}(1 + \|\omega\|_2^2)^{-s/2 - |\kappa|/2}\\
& \triangleq C_1(1 + \|\omega\|_2^2)^{-s/2 - |\kappa|/2}
\end{align*}
By the multivariate Fa\`{a} di Bruno formula (\cref{prop:faa_di_bruno}), 
\begin{align*}
& |D_x^{\alpha}\hat{q}(\Sigma(x)\omega)| \\
& \le \sum_{\kappa\in \N^J, 1\le |\kappa|\le |\alpha|}|D_\upsilon^\kappa \hat{q}(\Sigma(x)\omega)|
\sum_{\substack{(k_{\gamma, j}) \in \mathbb{N} \\
\sum_{\gamma} k_{\gamma,j} = \kappa_j, \,\, \forall j\\
\sum_{\gamma, j} k_{\gamma, j}\,\gamma = \alpha}}
\frac{\alpha!}{\displaystyle\prod_{\gamma, j} k_{\gamma, j}!\,(\gamma!)^{k_{\gamma, j}}}
\prod_{\gamma, j} |D_x^{\gamma}g_j(x)|^{\,k_{\gamma, j}}\\
& \le C_+2^{s+|\alpha|}\sum_{\kappa\in \N^J, 1\le |\kappa|\le |\alpha|}(1 + \|\omega\|_2^2)^{-s/2-|\kappa|/2}\\
& \qquad 
\sum_{\substack{(k_{\gamma, j}) \in \mathbb{N} \\
\sum_{\gamma} k_{\gamma, j} = \kappa_j, \, \forall j \\
\sum_{\gamma, j} k_{\gamma, j}\,\gamma = \alpha}}
\frac{\alpha!}{\displaystyle\prod_{\gamma, j} k_{\gamma, j}!\,(\gamma!)^{k_{\gamma, j}}}
\prod_{\gamma, j} (\|D_x^{\gamma}\Sigma(x)\|_{\op}\|\omega\|_2)^{\,k_{\gamma,j}}\\
& \le C_+2^{s+|\alpha|} (1 + \|\omega\|_2^2)^{-s/2}\sum_{\kappa\in \N^J, 1\le |\kappa|\le |\alpha|} (1 + \|\omega\|_2^2)^{-|\kappa|/2}\|\omega\|_2^{|\kappa|}\\
& \qquad \cdot \sum_{\substack{(k_{\gamma,j}) \in \mathbb{N} \\
\sum_{\gamma} k_{\gamma,j} = \kappa_j, \,\forall j \\
\sum_{\gamma, j} k_{\gamma,j}\,\gamma = \alpha}}
\frac{\alpha!}{\displaystyle\prod_{\gamma} k_{\gamma,j}!\,(\gamma!)^{k_{\gamma,j}}}
\prod_{\gamma,j} \|D_x^{\gamma,j}\Sigma(x)\|_{\op}^{\,k_{\gamma,j}}\\
& \le C_+2^{s+|\alpha|} (1 + \|\omega\|_2^2)^{-s/2}\sum_{\kappa\in \N^J, 1\le |\kappa|\le |\alpha| }
\sum_{\substack{(k_{\gamma,j}) \in \mathbb{N} \\
\sum_{\gamma} k_{\gamma,j} = \kappa_j, \,\forall j \\
\sum_{\gamma,j} k_{\gamma,j}\,\gamma = \alpha}}
\frac{\alpha!}{\displaystyle\prod_{\gamma,j} k_{\gamma,j}!\,(\gamma!)^{k_{\gamma,j}}}
\prod_{\gamma,j} \|D_x^{\gamma}\Sigma(x)\|_{\op}^{\,k_{\gamma,j}}.
\end{align*}
Let
\[C_1 = \sum_{\kappa\in \N^J, 1\le |\kappa|\le |\alpha| }
\sum_{\substack{(k_{\gamma,j}) \in \mathbb{N} \\
\sum_{\gamma} k_{\gamma,j} = \kappa_j, \,\forall j \\
\sum_{\gamma,j} k_{\gamma,j}\,\gamma = \alpha}}
\frac{\alpha!}{\displaystyle\prod_{\gamma} k_{\gamma,j}!\,(\gamma!)^{k_{\gamma,j}}}.\]
By \cref{as:location-scale-assns}(3), 
\[\prod_{\gamma,j} \|D_x^{\gamma}\Sigma(x)\|_{\op}^{\,k_{\gamma,j}}\le \psi^{|\kappa|}\le \psi.\]
Thus, 
\[|D_x^{\alpha}\hat{q}(\Sigma(x)\omega)| \le \psi C_+ 2^{s+|\alpha|} C_1 (1 + \|\omega\|_2^2)^{-s/2}.\]

 Similarly, to bound the $L^1$ norm of $D_x^{\alpha}\hat{q}(\Sigma(x)\omega)$, we just need to show that 
\[\left\|\prod_{\gamma,j} \|D_x^{\gamma}\Sigma(x)\|_{\op}^{\,k_{\gamma,j}}\right\|_{L^1}\le \psi.\]
This is because at least one $k_\gamma \ge 1$ and $\|ab\|_{L^1}\le \|a\|_{L^1}\|b\|_{L^\infty}$. The proof is then completed by setting $C_{J,s} = 2^{s+|\alpha|+1}C_+ C_1$.
\end{proof}

\subsection{Verifying completeness from faithfulness}

\begin{prop}
  Let $\mathcal F$ be all integrable functions of $(\delta, P)$ and $\mathcal K$ be all
  integrable functions of $X$. Suppose that for all $\E[|f(\delta)|] < \infty$,  $\E[f
  (\delta) \mid X] = 0$ implies $f = 0$. Then \cref{as:faithful_x} implies 
  \cref{as:complete_x}. 
\end{prop}

\begin{proof}
  Let $h$ be an integrable and (without loss) scalar function of $\delta, P$. Thus 
  \cref{as:faithful_x} implies that \[\E[h \mid X, Z] = 0 \in \mathcal K \implies h
  (\delta, P) = h
  (\delta).\]
We further have that $\E[h(\delta) \mid X] = 0$ implies $h(\delta) = 0$. Thus $h(\delta,
P) = 0$ and hence $(\delta, P) \mid (X, Z)$ is complete. Since any integrable function of
$S,P$ can be rewritten as an integrable function of $(\delta, P)$, we have that $(S, P)
\mid (X, Z)$ is complete as well.
\end{proof}

\end{appendix}
\newpage

\begin{appendix}

\begin{center}
\textbf{Online Appendix for ``Nonparametric Identification of Demand \\ without Exogenous
Product
Characteristics''}

Borusyak, Chen, Hull, and Lei
\medskip

February 2026
\end{center}
\vspace{-1cm}

\setcounter{section}{1}

\DoToC

\medskip

\section{Additional results and discussions}

\bigskip

\subsection{Faithfulness in causal graphs}
\label{sub:faithfulness_in_causal_graphs}

Given a candidate $H(\delta, P)=H$, consider the induced directed
acyclic graph (DAG) of the variables $(X,P,Z,\delta,H)$. Depending on whether or not $H(\delta,P)=H(\delta)$, there are two possible DAGs shown in \cref{fig:dag}.\footnote{Note that $H$ is here a deterministic function of $(\delta,P)$ while each node in a DAG is typically assumed to have its own independent random variation.} 
In the causal discovery literature \citep{spirtes2000causation}, the joint distribution of $(X, P, Z, \delta, H) \sim Q_H$ is said to be \emph{faithful} to a  graph if, whenever two variables $V_1$ and $V_2$  are conditionally independent given some set of variables $C$, the correponding  vertices in the DAG are \emph{d-separated} by a set of vertices corresponding to $C$.
 
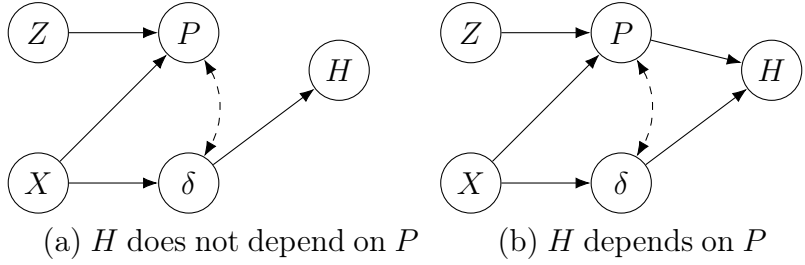
\begin{figure}[!htb]
  {
\begin{center}
  \begin{tikzpicture}[
  >={Latex[length=2mm]},
  every node/.style={circle, draw, minimum size=8mm, inner sep=0pt}
]

\node (Z) at (0,-0.5)   {$Z$};
\node (X) at (0,-2.5)   {$X$};
\node (P) at (2,-0.5)   {$P$};
\node (d) at (2,-2.5)   {$\delta$};
\node (H) at (4,-1.0)   {$H$};

\draw[->] (X) -- (P);     %
\draw[->] (Z) -- (P);     %
\draw[->] (X) -- (d);     %
\draw[->] (d) -- (H);     %

\draw[<->, dashed, bend left=30] (P) to (d);
\end{tikzpicture}\hspace{2em}
\begin{tikzpicture}[
  >={Latex[length=2mm]},
  every node/.style={circle, draw, minimum size=8mm, inner sep=0pt}
]

\node (Z) at (0,-0.5)   {$Z$};
\node (X) at (0,-2.5)   {$X$};
\node (P) at (2,-0.5)   {$P$};
\node (d) at (2,-2.5)   {$\delta$};
\node (H) at (4,-1.0)   {$H$};

\draw[->] (X) -- (P);     %
\draw[->] (Z) -- (P);     %
\draw[->] (X) -- (d);     %
\draw[->] (P) -- (H);     %
\draw[->] (d) -- (H);     %

\draw[<->, dashed, bend left=30] (P) to (d);
\end{tikzpicture}

(a) $H$ does not depend on $P$ \hspace{2em} (b) $H$ depends on $P$
\end{center}}
\caption{DAG for $(X,P,Z,\delta, H(\delta, P))$.}
\label{fig:dag}

\end{figure}

Suppose $Q_H$ is known to be faithful to one of the two graphs, (a) or (b). Then, upon observing $Z
 \indep H
\mid X$, we can infer that $Q_H$ is faithful to (a) and not (b), meaning $H$ is constant in $P$. This is similar to the type of inference that \cref{as:faithfulness} allows.

In this sense, \cref{as:faithfulness} essentially requires that every $Q_H$ is faithful to
one of (a) and (b), depending on whether $H$ varies with $P$---although 
\cref{as:faithfulness} only speaks to whether the edge $P \to H$ is faithfully reflected by
 $Q_H$.\footnote{\cref{as:faithfulness} is stronger in that it precludes the dependence of $H$ on $P$ from its \emph{mean}-independence with $Z$ given $X$ only, though our general identification
 argument accommodates the analogue of \cref{as:faithfulness} with full independence
 instead.}

\subsection{Faithfulness verification without proxies}
\label{sec:no_x}

\subsubsection{Faithfulness verification from functional form}
Here we consider identification under the added restriction that $P$ combines linearly with the utility index, and show that this suffices for faithfulness (given a sufficiently strong instrument $Z$) even absent variation in $X$.  We interpret these results as faithfulness verification without a proxy for $\delta$. The results follow identically to Proposition 1 in \cite{borusyak2025estimating}.\footnote{The difference relative to \cite{borusyak2025estimating} is that we maintain the index restriction on $X$.}

Consider a version of \Cref{eq:demand} where $\delta$ and $P$ enter additively\[
  S = \sigma(\delta(\bar{X},\xi) + P, \tilde X).
\]
We consider an extreme case where $X$ has no variation and drop it (in addition to $\tilde{X}$, as before) from the model: $S = \sigma(\delta + P)$. This is equivalent to an analysis which conditions fully on $\bar X$. Under this model, we have that $\delta =
\sigma^{-1}(S) - P $. We can then consider all functions $h(S,P) = h(S) -P$ that are mean-independent of $Z$. For any such $h$, we have that $h(S,P) = H(\delta, P) = f
(P+\delta) - P$ for $f = h(\sigma(\cdot))$.

Faithfulness here says that for all $H(\delta,P) = f(P+\delta) - P$, \[
  \E[H(\delta,P) \mid Z] = 0 \implies H(\delta, P) = f(P + \delta) - P = H(\delta). 
\]
Suppose that $Z$ is a strong instrument in the sense that $(P+\delta) \mid Z$ is complete.\footnote{Since we condition on
$\bar X$, this should be read as conditional completeness given $\bar X$.} Then \[
  \E[f(P+\delta) \mid Z] = \E[P \mid Z]
\]
has a unique solution in $f$. Since $f(t) = \sigma^{-1}(t) - \E[\delta]$ is such a
solution, we conclude that it is the unique solution. The corresponding $H$ satisfies \[
  H(\delta, P) = \sigma^{-1}(P+\delta) - P - \E[\delta] = \delta - \E[\delta],
\]
and is thus solely a function of $\delta$. Thus, in this example, faithfulness follows without any variation in $X$. This is because the functional form restriction on how $\delta$ and $P$ interact in $\sigma$ suffices to rule out the kind of complex interactions between $\delta$ and $P$ that a strong proxy $X$ is otherwise helpful for ruling out.

\subsubsection{Identification in triangular models as faithfulness verification}
\label{sub:triangular_model}

Here we connect faithfulness to the literature on nonparametric identification of triangular models: i.e., \citet{imbensnewey},
\citet{torgovitsky2015identification}, and \citet{d2015identification}. We show that arguments in these papers can be viewed as verifying
faithfulness without a strong proxy $X$ by imposing instead that
outcomes are scalar and monotone in $\delta$. 

Consider a special case of our model where $J = 1$ and let \[
  S = \sigma(\delta, P), \quad P = f(Z, \omega).
\]
Suppose that $\sigma(\cdot, P)$ is strictly increasing and $f(Z, \cdot)$ is strictly
increasing (and their respective arguments are one-dimensional). Normalize the marginal
distribution of $\omega$ to $\Unif[0,1]$. Since $\omega$ is the conditional quantile of $P
\mid Z$, we can assume that $\omega$ is observed. Lastly, assume that $\delta$ is
continuously distributed and $\delta \mid \omega$ is also continuously distributed.

Under these restrictions, the identified set for $\sigma$ is (Theorem 1,
\citet{torgovitsky2015identification}): \[
  \Theta_I = \br{h^{-1}(\delta, p) \text{ strictly increasing in $\delta$} : (h(S, P),
  \omega)
  \indep Z}.
\]
Analogous to \cref{lemma:main}, for identifying price counterfactuals, it suffices to show
that, for $h^ {-1}$ strictly increasing in $\delta$ and $H(\delta, P) \equiv h(\sigma
(\delta, P), P)$, the following analogue of faithfulness holds: \[ 
(H(\delta, P), \omega) \indep Z \implies
H(\delta, P) 
\text{ does not depend on $P$. } \numberthis \label{eq:triangular_faithful}
\]

Let $H = H(\delta, P)$. \citet{imbensnewey} observe that \[
  F_{H \mid P, \omega}(t \mid p, \omega) = \int \one(H(\delta, p) \le t) \,dF_{\delta
  \mid \omega} ( \delta
  \mid \omega). 
\]
If the support of $\omega \mid P$ is $[0,1]$ for all $P$ (Assumption 2, 
\citet{imbensnewey}), we can
then integrate the above
with respect to the marginal distribution of $\omega$: \[
  \int_0^1 F_{H \mid P, \omega}(t \mid p, \omega) d\omega = \int \one(H(\delta, p) \le t)
  \, dF_{\delta}(\delta) = \P_{\delta \sim F_\delta}(H(\delta, p) \le t) = F_\delta\pr{H^
  {-1}(t, p)}.
  \numberthis \label{eq:imbens_newey_eq}
\]
Note that since $(H, \omega) \indep Z$ and $P = f(Z, \omega)$, we have that \[H \indep P
\mid \omega.\] Thus $F_{H \mid P, \omega}(t \mid p, \omega) = F_{H \mid \omega}(t \mid
\omega)$ does not depend on $p$. Thus, the left-hand side of \cref{eq:imbens_newey_eq}
does not depend on $p$. The only way for this to occur is if $H^{-1}(t,p)$ does not depend
on $p$ either, since $F_\delta$ is strictly increasing. This argument is in equation
(5)--(7) of \citet{imbensnewey} preceding their Theorem 3. We can thus view this argument
as verifying the version of faithfulness \eqref{eq:triangular_faithful}.

\citet{torgovitsky2015identification} avoids the need in \eqref{eq:imbens_newey_eq} to
integrate over $\omega$,\footnote{A similar argument appears in \citet{d2015identification}.} thus allowing for
instruments with few support
points.\footnote{$\omega$ is the quantile of $P$ in the $P\mid Z$ distribution. If there
are only finitely many values for $Z$, then each price value has only finitely many
quantiles that can be candidates for $\omega$, meaning that $\omega \mid P$ cannot have full
support.} Suppose that $Z$ instead takes two values $ \br{z_0, z_1}$. Let $\pi (p) =
f(z_1, f^{-1}(p,z_0))$ such that for some identical value of $\omega$, \[
  \pi(p) = f( z_1, \omega), \quad p = f(z_0, \omega).
\]
\citet{torgovitsky2015identification} observes that, since $Z \indep (\delta, \omega)$, $H\indep P\mid \omega$ and hence
\[
  \P(\delta \le H^{-1}(t, \pi(p)) \mid \omega) = F_{H\mid P, \omega}(t \mid \pi(p), \omega)
  = F_{H \mid P, \omega} (t \mid p, \omega) = \P(\delta \le H^{-1}(t, p) \mid \omega).
\]
This implies that for any $p$, for all $t$,  \[
  H^{-1}(t, \pi(p)) = H^{-1}(t, p). 
\]
\citet{torgovitsky2015identification} imposes further assumptions that ensure that there
is some value $p_0$ such that for any value $p$, the sequence $\pi(p), \pi(\pi(p)),
\ldots$ approaches
$p_0$.\footnote{One example is to assume that $p_0$ is the left support end point of $P$
and that $f(z_1, \omega) < f(z_0, \omega)$ for all $\omega > 0$ with $p_0 = f(z_0, 0) =
f(z_1, 0)$. Then we have $\pi (p) <
p$ approaching $p_0$.} Thus, \[
   H^{-1}(t, \pi(p)) =  H^{-1}(t, \pi(\pi(p))) = H^{-1}(t, \pi(\pi(\pi(p)))) = \cdots
\]
equals $H^{-1}(t, p_0)$ given continuity. This ensures $H^
{-1}(t, p) = H^{-1}(t, p_0)$, proving \eqref{eq:triangular_faithful}. 

\subsection{Example of faithfulness without completeness}
\label{sub:fwithoutC}
Combining restrictions on $P$  and $\delta$  analogous to those in  \cref{subsec:model-p-lambda,subsec:model-delta}, we can construct a broad class of DGPs under which faithfulness holds yet $\delta\mid X$ can be incomplete. In particular, this implies the failure of joint completeness of $(P, \delta)$ given $(X, Z)$ (i.e., \cref{as:completeness}). 

We use \cref{as:exogenous,as:p-index_x}, and the following additional assumptions. To simplify notation, we assume that $\tilde{X}$ is conditioned on and treated as fixed. We also write $\Lambda$ for $\lambda = \lambda(X,Z)$ defined in \cref{as:p-index_x} to distinguish between the random variable $\lambda$ and the value that $\lambda$ takes.

\begin{as}\label{ass:location_model}
There exists a measurable $a:\R^J\to\R^J$ and a noise vector $\epsilon\in\R^J$ such that
\[
\delta=a(X)+\epsilon,\qquad \epsilon\perp (X,\Lambda).
\]
Assume $\check{X}=a(X)$ has a positive density everywhere and $(-\epsilon)$ has a density $p_\epsilon$ with the characteristic function $
\varphi_\epsilon(\omega):=\E[e^{i\langle \omega, \epsilon\rangle}], \, \omega\in\R^J$. 
\end{as}

\begin{as}\label{ass:P}
Let $q(p\mid \lambda,d)$ denote the conditional density of $P\mid(\Lambda=\lambda,\delta=d)$.
Fixing $\lambda_0\in\R^k$,

\begin{enumerate}[label=(\arabic*)]
\item There exists a reference density $\mu(p)>0$ and a constant $c_->0$ such that
\[
q(p\mid \lambda,d)\ \ge\ c_-\,\mu(p)\qquad \forall (p,\lambda',d)\in \R^{3J}.
\]
\item There exists a measurable function $\zeta:\R^J\to[0,\infty)$ such that
\[
\bigl|q(p\mid \lambda,d)-q(p\mid \lambda_0,d)\bigr|\ \le\ \zeta(d)\,\mu(p)\qquad \forall(p,\lambda,d).
\]
\item Let $p_\delta$ be the marginal density of $\delta$.
There exists $a_0>0$ such that
\[\sup_{d\in\R^J}\frac{e^{a_0\norm{d}}\zeta(d)}{p_\delta(d)}\ <\ \infty.
\]
\end{enumerate}
\end{as}

\begin{as}[Completeness of $P\mid \Lambda$ given $\delta$]
  \label{ass:complete}
  For any function $g: \R^{2J}\mapsto \R$,
  \[\E[g(\delta,P)\mid \delta,\Lambda]= 0\quad\text{a.s.}\Longrightarrow g(\delta,P)=0\quad\text{a.s.}\]
\end{as}

\begin{rmk}\label{rmk:necessity}
Under \cref{as:p-index_x}, \cref{ass:complete} is necessary for faithfulness, such that we retain the need for $Z$ to be a strong instrument for price. We prove this result in \Cref{subsec:proof_necessity}. 
\end{rmk}

\begin{theorem}\label{thm:unrestricted-faithfulness}
Under \Cref{as:exogenous,as:p-index_x,ass:location_model,ass:P,ass:complete}, $(\delta,P)\mid (X,\Lambda)$ is faithful. Furthermore, if these exists $\omega_0\in \R^J$ such that $\varphi_\epsilon(\omega_0) = 0$, then $\delta\mid X$ is not complete and hence  \Cref{as:complete_x} fails.
\end{theorem}

\noindent The proof is deferred to \Cref{subsec:proof_unrestricted-faithfulness}. 

We next provide a concrete class of examples that satisfies \Cref{ass:location_model,ass:P,ass:complete}, with the proof presented in \Cref{subsec:proof_example}.

\begin{prop}\label{prop:example}
Assume that 
\begin{enumerate}[label=(\alph*)]
\item The conditional density of $P$ is given by
\begin{equation}\label{eq:qdef}
q(p\mid \lambda,d)
=\exp\big\{\langle \theta(\lambda,d), S(p)\rangle-A(\theta(\lambda,d))\big\}\,\mu(p), \,\, \theta(\lambda,d)=\zeta(d) t(\lambda)
\end{equation}
where $S(p)$ is bounded, non-constant, and continuous, $\mu(p)$ is a carrier measure with positive density everywhere, 
\[
\zeta(d)=e^{-c\|d\|^2}, \quad c > 1/4,
\]
and $A(\theta)$ is the log-normalizer
\[
A(\theta)=\log\int_{\mathbb{R}^J}\exp\big(\langle \theta,  S(u)\rangle\big)\,\mu(u)\,du.
\]
\item $X\sim N(0, I_J), \nu\sim N(0,I_J)$ and $\eta=(\eta_1,\dots,\eta_J)$ with $\eta_j\stackrel{\mathrm{indep.}}{\sim}\mathrm{Unif}([-b_j,b_j])$ for some $b_j > 0$. Assume $(\nu,\eta)$ are independent of $(X,\Lambda)$ and 
\[
\epsilon=\nu+\eta,\qquad \delta=X+\epsilon.
\]
\end{enumerate}
Then \Cref{ass:location_model,ass:P,ass:complete} hold. Moreover, $\delta\mid X$ is incomplete.
\end{prop}

\begin{rmk}\label{rem:fragility}
Note that $b_j$ can be arbitrarily small in (b). \Cref{prop:example} then shows that, under the $\lambda$-index and additive $\delta$ model, completeness is sensitive to infinitesimal perturbations in $\delta$ while faithfulness is not. 
\end{rmk}

\subsection{Example of completeness without faithfulness}
\label{sub:failure_of_faithfulness}

We assume $J=1$ for this example, though analogous constructions exist for $J > 1$.
Suppose $Z, X > 0$ and \[ P \mid \delta, Z, X \sim \Norm(r(\delta, X) \tau
(\delta, Z, X), \tau^2(\delta, Z, X)).
\]
This construction is robust to taking monotone transformations of $(\delta, P)$, and
thus we may view $P$ as log price instead---for instance---if we would like price to be
strictly positive. Let $\delta$ be supported on $[1,2]$ with PDF \[
f(\delta \mid x) = \exp(r(\delta, x)^2/2) a(x) \one(\delta \in [1,2])
\]
for \[
  a(x)^{-1} = \int_1^2 e^{r(\delta, x)^2/2} \,d \delta. 
\]

\begin{prop}\label{prop:comp_not_faith}
  Let $r(\delta, x) = x \delta$ and $\tau(\delta, z, x) = z$. Then the above construction
  specifies $(\delta, P) \mid (Z, X)$  for which \cref{as:completeness,as:exogenous}
  hold but \cref{as:faithfulness} fails. 
\end{prop}

We present the proof in  \Cref{subsec:proof_comp_not_faith}.

\subsection{Bertrand--Nash pricing}\label{sub:BertrandNash}
Here we show how Bertrand--Nash pricing implies \Cref{eq:bertrand-nash}; see also Appendix A in \cite{berryhaile}. Consider the case of single-product firms.\footnote{The argument extends to multi-product firms if the ``ownership matrix'' is conditioned upon in the same way characteristics $\tilde X$ are.} As is well-known, the first-order condition for profit maximization of firm $j$ with a constant marginal cost $C_j$ is
$$\sigma_j(\delta,P)+(P_j-C_j)\frac{\partial \sigma_j(\delta,P)}{\partial P_j}=0.$$
Provided the equilibrium is unique, the solution for the vector of $P$ can therefore be written as a function of $\delta$ and $C$: i.e.,  \Cref{eq:bertrand-nash} holds.

\section{Auxiliary proofs}
\subsection{Auxiliary lemmas for \cref{thm:main}}

\begin{lemma}\label{lem:B2}
If $\hat{q}(\upsilon) = (1 + \|\upsilon\|_2^2)^{-s/2}$ for some $s > J$, then 
\cref{as:location-scale-assns}(2) holds.

\end{lemma}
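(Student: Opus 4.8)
The plan is to verify the two displayed estimates that constitute \cref{as:location-scale-assns}(2) directly, treating the two-sided bound on $|\hat q|$ as immediate and devoting the real work to the derivative bound. Write $\rho(\upsilon) = 1 + \|\upsilon\|_2^2$, so that $\hat q(\upsilon) = \rho(\upsilon)^{-s/2}$. The two-sided bound is trivial: since $\hat q$ is positive, $|\hat q(\upsilon)| = \rho(\upsilon)^{-s/2}$, so any constants with $C_- \le 1 \le C_+$ and $C_- < C_+$ (e.g.\ $C_- = 1/2$, $C_+ = 2$) satisfy the first line. I would also record, for the sake of the standing hypothesis that $q$ is a genuine density, that $s > J$ guarantees $\hat q \in L^1(\R^J)$, so that $q$ is its bounded continuous inverse Fourier transform; that $q\ge 0$ and $\int q = \hat q(0) = 1$ follows from identifying $\rho^{-s/2}$ as the Fourier transform of a Matérn / Bessel-potential kernel, which is nonnegative. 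This is the only place $s > J$ is used; the estimates themselves hold for any $s>0$.

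The core step is the derivative bound. I would establish, by induction on $|\alpha|$, the structural claim that
\[
  D_\upsilon^\alpha \hat q(\upsilon) = \sum_{(\beta,k)} c_{\beta,k}\, \upsilon^\beta\, \rho(\upsilon)^{-s/2 - k},
\]
a finite sum in which every term obeys the exact balance $2k - |\beta| = |\alpha|$, with $0 \le |\beta| \le |\alpha|$ and $|\alpha|/2 \le k \le |\alpha|$. The base case $\alpha = 0$ is clear. For the inductive step, applying $\partial_{\upsilon_j}$ to a term $\upsilon^\beta \rho^{-s/2-k}$ produces two contributions: differentiating $\rho^{-s/2-k}$ gives $-(s+2k)\,\upsilon_j\upsilon^\beta \rho^{-s/2-(k+1)}$, which raises both $k$ and $|\beta|$ by one; differentiating $\upsilon^\beta$ lowers $|\beta|$ by one and leaves $k$ fixed. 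In either case $2k - |\beta|$ increases by exactly one, matching the increment of $|\alpha|$, so the invariant propagates.

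Given this structure the bound follows termwise. Using $\|\upsilon\|_2 \le \rho^{1/2}$, hence $|\upsilon^\beta| \le \|\upsilon\|_2^{|\beta|} \le \rho^{|\beta|/2}$, each term satisfies
\[
  \bigl| c_{\beta,k}\, \upsilon^\beta\, \rho^{-s/2-k} \bigr| \le |c_{\beta,k}|\, \rho^{-s/2 - k + |\beta|/2} = |c_{\beta,k}|\, \rho^{-s/2 - (2k-|\beta|)/2} = |c_{\beta,k}|\, \rho^{-s/2 - |\alpha|/2},
\]
which is exactly the required decay $(1 + \|\upsilon\|_2^2)^{-s/2 - |\alpha|/2}$. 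Summing the finitely many coefficients appearing in $D_\upsilon^\alpha \hat q$ and maximizing over the finitely many $\alpha$ with $|\alpha| \le M$ yields a single constant $C_+$ valid for both the upper bound on $|\hat q|$ and all the derivative bounds, completing the verification.

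The only mild obstacle is the bookkeeping in the inductive step, namely tracking the invariant $2k - |\beta| = |\alpha|$ through the product and chain rules. This is a short computation rather than a genuine difficulty: once the structural lemma is established, the estimates are immediate from the elementary inequality $\|\upsilon\|_2 \le \rho^{1/2}$, so no delicate analysis is required.
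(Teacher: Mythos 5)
Your proof is correct, and its overall strategy is the same as the paper's: prove by induction on $|\alpha|$ a structural representation of $D_\upsilon^\alpha \hat q$ as a finite sum of monomials times powers of $(1+\|\upsilon\|_2^2)$, then bound each term using $\|\upsilon\|_2 \le (1+\|\upsilon\|_2^2)^{1/2}$. The difference is in the bookkeeping, and here your version is actually the more careful one. The paper's induction asserts that the polynomial $F_{\alpha,\ell}$ multiplying $(1+\|\upsilon\|_2^2)^{-s/2-\ell}$ is homogeneous of degree $\ell$; as stated this is both inaccurate (e.g.\ for $\alpha = 2e_1$ one gets $\partial_1^2 \hat q = -s\rho^{-s/2-1} + s(s+2)\upsilon_1^2\rho^{-s/2-2}$ with $\rho = 1+\|\upsilon\|_2^2$, so the $\ell=1$ polynomial is a constant, degree $0$, not $1$) and, taken at face value, too weak to yield the required decay, since degree-$\ell$ growth against $\rho^{-s/2-\ell}$ only gives $\rho^{-s/2-\ell/2}$, which for $\ell < |\alpha|$ falls short of $\rho^{-s/2-|\alpha|/2}$. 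Your invariant $2k - |\beta| = |\alpha|$ (equivalently, degree $2\ell - |\alpha|$ at level $\ell$) is the correct one, it propagates cleanly through both the chain-rule and product-rule contributions, and it makes every term bounded by exactly $\rho^{-s/2-|\alpha|/2}$. You also carry out the final termwise estimate explicitly, which the paper leaves implicit, and your side remarks (that $s>J$ is only needed so that $q$ is a genuine density, via the Bessel-kernel identification) align with the paper's Remark following the lemma. In short: same route, but your accounting closes a gap that the paper's statement of the induction leaves open.
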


\begin{rmk}
\label{rmk:radial_gamma}
Since $\hat{q}(0) = 1$, $\hat{q}$ is the characteristic function of a density. In fact, for any $s > J$, $q$ has the following expression
\[q(x) = \Theta_{J,s}\|x\|_2^{(s - J)/2}K_{(s-J)/2}(\|x\|_2)\]
where $K_{v}$ is the modified Bessel function of the second kind and $\Theta_{J,s}$ is the normalizing constant. For large $x$, 
\[q(x) \sim \|x\|_2^{(s-  J -1)/2}\exp\{-\|x\|_2\}.\]
Thus, $q$ behaves like a radial Gamma distribution.
\end{rmk}

\begin{proof}
We will prove that
\[D_\upsilon^\alpha \hat{q}(\upsilon) = \sum_{\ell=1}^{|\alpha|}(1 + \|\upsilon\|_2^2)^{-s/2-\ell}F_{\alpha, \ell}(\upsilon),\]
where $F_{\alpha, \ell}$ is a homogeneous polynomial of $\upsilon$ of order $\ell$. We prove this claim by induction on $|\alpha|$. When $\ell = 1$, 
\[D_\upsilon^{e_j} \hat{q}(\upsilon) = (1 + \|\upsilon\|_2^2)^{-s/2-1}\cdot (-s\upsilon_j)\]
when $e_j$ is the $j$-th canonical basis. Suppose the claim holds for $|\alpha|-1$. For any given $\alpha$, assume WLOG that $\alpha_1 \ge 1$. Let $\td{\alpha} = \alpha - e_1$. Then 
\begin{align*}
D_\upsilon^{\alpha}\hat{q}(\upsilon) &= D_\upsilon^{e_1}D_\upsilon^{\td{\alpha}}\hat{q}(\upsilon) = D_\upsilon^{e_1}\sum_{\ell=1}^{|\td{\alpha}|}(1 + \|\upsilon\|_2^2)^{-s/2-\ell}F_{\td{\alpha}, \ell}(\upsilon)\\
& = \sum_{\ell=1}^{|\td{\alpha}|}(1 + \|\upsilon\|_2^2)^{-s/2-\ell-1}\cdot -(s+2\ell)\upsilon_1 F_{\td{\alpha}, \ell}(\upsilon)\\
& \qquad + \sum_{\ell=1}^{|\td{\alpha}|}(1 + \|\upsilon\|_2^2)^{-s/2-\ell}\cdot D_\upsilon^{e_1} F_{\td{\alpha}, \ell}(\upsilon)\\
& = \sum_{\ell=1}^{|\td{\alpha}|}(1 + \|\upsilon\|_2^2)^{-s/2-\ell}\cdot \lb D_\upsilon^{e_1} F_{\td{\alpha}, \ell}(\upsilon)-(s+2\ell-1)\upsilon_1 F_{\td{\alpha}, \ell-1}(\upsilon)\rb,
\end{align*}
where $F_{\td{\alpha},0}(\upsilon) = 0$ for notational convenience. 
Clearly, $D_\upsilon^{e_1} F_{\td{\alpha}, \ell}(\upsilon)-(s+2\ell-1)\upsilon_1 F_{\td{\alpha}, \ell-1}(\upsilon)$ is a homogeneous polynomial of order $\ell$. Thus, the claim holds for $|\alpha|$. By induction the proof is completed.
\end{proof}

\begin{lemma}[Multivariate Fa\`{a} di Bruno formula; see e.g., Theorem 6.8 of \cite{schumann2019multivariate}]\label{prop:faa_di_bruno}
Let $g(x) = (g_1(x), \ldots, g_n(x)): \R^m \mapsto \R^n$ and $f(y): \R^n \mapsto \R$ be two functions that have $M$ derivatives over $\R^n$. Then 
\[D_x^{\alpha}(f\circ g)(x) = \sum_{\kappa\in \N^n, 1\le |\kappa|\le |\alpha|}D_y^\kappa f(g(x)) B_{\alpha, \kappa}((D_x^\gamma g_j(x))_{\gamma\in \N^n \setminus \{\mathbf{0}\}, 1\le j \le n}),\]
where $B_{\alpha, \kappa}$ is the multivariate Bell polynomial defined as 
\[
B_{\alpha,\kappa}\bigl((z_{\gamma,j})_{\gamma\in \N^n \setminus \{\mathbf{0}\},\;1\le j\le n}\bigr)
=
\sum_{\substack{(k_{\gamma,j}) \in \mathbb{N} \\
\sum_{\gamma} k_{\gamma,j} = \kappa_j\ \forall j \\
\sum_{\gamma,j} k_{\gamma,j}\,\gamma = \alpha}}
\frac{\alpha!}{\displaystyle\prod_{\gamma,j} k_{\gamma,j}!\,(\gamma!)^{k_{\gamma,j}}}
\prod_{\gamma,j} z_{\gamma,j}^{\,k_{\gamma,j}}.
\]
\end{lemma}

\begin{lemma}[Corollary 3.3.10 of \cite{grafakos2008classical}]\label{prop:fourier_bound} 
Let $f: \R^n\mapsto \R$ with $D^\alpha f\in L^1$ for any $\alpha\in \N^n$ with $|\alpha|\le M$. Then, for any $\upsilon\in \R^J$, 
\[|\hat{f}(\upsilon)|\le c_{n,M} \max\left\{\|f\|_{L^1}, \max_{|\alpha| = M}\|D^\alpha f\|_{L^1}\right\}(1 + \|\upsilon\|)^{-M},\]
where $c_{n,M}$ is a constant that only depends on $n$ and $M$.
\end{lemma}

\begin{lemma}\label{lem:symbol}
Let $q$ be a density that satisfies \cref{as:location-scale-assns}(2). For some
$\Sigma (x)$ taking values in $\S_+^J$, let $\cT$ be defined as in \eqref{eq:t_def}
and let $q_x(\zeta) = \det(\Sigma(x))^{-1} q\pr{\Sigma(x)^{-1}\zeta}$.
Then, $\cT$ is a pseudo-differential
operator with symbol $\hat{q}_x(\omega)$ in Kohn--Nirenberg quantization, i.e.,
\[(\cT g)(x) = \frac{1}{(2\pi)^J}\int_{\R^J}e^{i\langle x, \omega\rangle}\hat{q}_x(\omega)\hat{g}(\omega)d\omega.\]
\end{lemma}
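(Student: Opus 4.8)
The plan is to recognize that, for each fixed $x$, $\cT$ acts as the convolution of $g$ against the frozen kernel $q_x$ evaluated on the diagonal, and then to pass to the Fourier side where convolution becomes multiplication by the symbol. First I would record that the kernel in \eqref{eq:t_def} is exactly $q_x$: since $\delta = x + \Sigma(x)\xi$ and $-\xi$ has density $q$ by \cref{as:location-scale-assns}(2), the conditional density of $\delta$ given $X=x$ is $\det(\Sigma(x))^{-1}q(\Sigma(x)^{-1}(x-\delta)) = q_x(x-\delta)$, so that $(\cT g)(x) = \int_{\R^J} q_x(x-t)\,g(t)\,dt$. Because $q_x$ is itself a probability density, $\|q_x\|_{L^1} = 1 < \infty$, and this single observation drives the whole argument.

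Next I would substitute the Fourier inversion formula $g(t) = (2\pi)^{-J}\int_{\R^J} e^{i\langle t,\omega\rangle}\hat g(\omega)\,d\omega$ into this integral. The inversion holds pointwise because in every downstream application $g \in \W^{s,\infty}$, so $|\hat g(\omega)|\le \|g\|_{\hat q}\,|\hat q(\omega)| \lesssim (1+\|\omega\|_2^2)^{-s/2}$ by \cref{as:location-scale-assns}(2), and $s > J$ makes $\hat g$ integrable. With $q_x\in L^1$ and $\hat g\in L^1$, the integrand $q_x(x-t)\,e^{i\langle t,\omega\rangle}\,\hat g(\omega)$ is dominated in absolute value by $|q_x(x-t)|\,|\hat g(\omega)|$, whose double integral factors as $\|q_x\|_{L^1}\,\|\hat g\|_{L^1} < \infty$; Fubini therefore licenses interchanging the $t$- and $\omega$-integrals.

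I would then evaluate the inner integral in $t$. The substitution $u = x-t$ gives $\int_{\R^J} q_x(x-t)\,e^{i\langle t,\omega\rangle}\,dt = e^{i\langle x,\omega\rangle}\int_{\R^J} q_x(u)\,e^{-i\langle u,\omega\rangle}\,du = e^{i\langle x,\omega\rangle}\,\hat q_x(\omega)$, which is just the translation--modulation identity for the Fourier transform. A further linear change of variables $u\mapsto \Sigma(x)u$, together with the symmetry of $\Sigma(x)\in\S_+^J$, makes the symbol explicit: $\hat q_x(\omega) = \hat q(\Sigma(x)\omega)$, the same quantity entering $\Delta_\omega(x)$ in \cref{lem:Delta_integral}. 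Reassembling, $(\cT g)(x) = (2\pi)^{-J}\int_{\R^J} e^{i\langle x,\omega\rangle}\,\hat q_x(\omega)\,\hat g(\omega)\,d\omega$, which is precisely the Kohn--Nirenberg quantization with symbol $\hat q_x(\omega)$.

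The computation is routine Fourier analysis, so there is no deep obstacle; the care lies entirely in two bookkeeping steps. First, the Fubini justification rests on integrability of both $q_x$ (automatic, since it is a density) and $\hat g$ (imported from $\W^{s,\infty}$-membership and the tail bound on $\hat q$). Second, I would track the determinant and the sign/symmetry conventions carefully so the symbol lands on exactly $\hat q(\Sigma(x)\omega)$ rather than a transposed or determinant-weighted variant; getting this normalization right is what allows the lemma to plug cleanly into the symbol calculus used in \cref{lem:Delta_integral} and \cref{thm:main_2}.
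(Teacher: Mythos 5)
Your proof is correct and takes essentially the same approach as the paper's: both identify $\cT$ as convolution with the frozen kernel $q_x$ and convert it to the Kohn--Nirenberg form via Fourier inversion plus an interchange of integrals. The only cosmetic difference is that you expand $g$ by Fourier inversion and evaluate the $t$-integral as $e^{i\langle x,\omega\rangle}\hat q_x(\omega)$, whereas the paper expands $q_x$ and recognizes the $\delta$-integral as $\hat g(\omega)$; your version additionally makes explicit the Fubini and integrability justifications (and the identity $\hat q_x(\omega)=\hat q(\Sigma(x)\omega)$) that the paper leaves implicit.
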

\begin{proof}
By the assumptions on $q(\cdot)$, 
\begin{align*}
(\mathcal{T}g)(x) & = \int_{\R^J}q_x(x-\delta)g(\delta)d\delta\\
 & = \frac{1}{(2\pi)^J}\int_{\R^J}\int_{\R^J}e^{i\langle x - \delta, \omega\rangle}
 \hat{q}_x(\omega)g(\delta)d\omega d\delta\,\,\tag{Inverse Fourier transform}\\
 & = \frac{1}{(2\pi)^J}\int_{\R^J}\int_{\R^J}e^{i\langle x, \omega\rangle}\hat{q}_x(\omega)e^{-i\langle \delta, \omega\rangle}g(\delta)d\delta d\omega\\
  & = \frac{1}{(2\pi)^J}\int_{\R^J}\int_{\R^J}e^{i\langle x, \omega\rangle}\hat{q}_x(\omega)\hat{g}(\omega) d\omega.
\end{align*}
\end{proof}

\begin{lemma}\label{lem:k_b_inv}
Assume $M\ge 2s \ge 1$. If $k\in \K^{2s}$, then  and $k \circ b^{-1}\in \K^{2s}$. 
\end{lemma}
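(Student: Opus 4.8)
The plan is to reduce the claim to two facts: that $b^{-1}$ itself lies in $\K^{M}\subseteq\K^{2s}$, and that $\K^{2s}$ is stable under right-composition with such a map. Throughout I would write $k=A_k(\cdot)+r_k$ with $A_k$ a bounded linear map and $r_k\in\W^{2s,\infty}$, and $b=A_b(\cdot)+r_b$ with $A_b$ linear and $r_b\in\W^{M,\infty}$; recall from \cref{as:location-scale-assns}(4) that $b$ is invertible and $\sup_{x}\|(Db(x))^{-1}\|_{\op}<\infty$. Since $s,M\in\N$, both $2s$ and $M$ are integers, so for $p=\infty$ membership in $\W^{2s,\infty}$ (resp.\ $\W^{M,\infty}$) is governed by uniform boundedness of all derivatives up to order $2s$ (resp.\ $M$)---the top-order seminorm reducing to a supremum of differences of the top derivative. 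I would use this reduction throughout.

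First I would show $b^{-1}\in\K^{M}$. The key preliminary is that $A_b$ is invertible: since $r_b$ is bounded, the image of $b$ is contained in a fixed neighborhood of $\mathrm{range}(A_b)$, so the surjectivity of $b$ forces $\mathrm{range}(A_b)=\R^J$, i.e.\ $A_b$ is invertible. Taking $A_{b^{-1}}:=A_b^{-1}$ and setting $\rho:=b^{-1}-A_b^{-1}(\cdot)$, the identity $b(b^{-1}(y))=y$ rearranges to $\rho=-A_b^{-1}(r_b\circ b^{-1})$, which is bounded because $r_b$ is. To upgrade boundedness to $\rho\in\W^{M,\infty}$, I would invoke the inverse function theorem: $b$ has uniformly bounded derivatives of orders $1,\dots,M$ (the linear part contributes only $A_b$, and $r_b\in\W^{M,\infty}$), while $Db^{-1}=(Db\circ b^{-1})^{-1}$ is bounded by hypothesis, so the multivariate Faà di Bruno formula (\cref{prop:faa_di_bruno}) expresses the derivatives of $b^{-1}$ up to order $M$ as finite sums of products of bounded derivatives of $b$ and powers of the bounded $(Db)^{-1}$, hence they are all bounded. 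The same bookkeeping applied to $r_b\circ b^{-1}$ yields $\rho\in\W^{M,\infty}$, so $b^{-1}\in\K^{M}\subseteq\K^{2s}$.

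Next I would carry out the composition. Expanding gives $k\circ b^{-1}=A_kA_b^{-1}(\cdot)+\bigl(A_k\rho+r_k\circ b^{-1}\bigr)$, whose linear part $A_kA_b^{-1}$ is again linear, so it suffices to place the parenthesized remainder in $\W^{2s,\infty}$. The term $A_k\rho$ is immediate, since applying the bounded linear map $A_k$ to $\rho\in\W^{M,\infty}\subseteq\W^{2s,\infty}$ preserves membership. For $r_k\circ b^{-1}$, membership in $\W^{2s,\infty}$ again reduces (as $2s$ is an integer) to uniform boundedness of derivatives up to order $2s$; by Faà di Bruno each such derivative is a finite sum of products of derivatives of $r_k$ (of order $\le 2s$, evaluated at $b^{-1}$, bounded because $r_k\in\W^{2s,\infty}$) with derivatives of $b^{-1}$ (of order $\le 2s\le M$, bounded by the previous step), and is therefore bounded. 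Hence $r_k\circ b^{-1}\in\W^{2s,\infty}$ and $k\circ b^{-1}\in\K^{2s}$.

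The main obstacle is the derivative bookkeeping underlying both steps. Although the surjectivity argument that pins down $A_b^{-1}$ as the linear part of $b^{-1}$ is short, verifying that all derivatives of $b^{-1}$ up to order $M$ are uniformly bounded---and then that composition with $r_k$ preserves boundedness up to order $2s$---requires carefully organizing the multivariate Faà di Bruno expansions and combining them with the uniform bound on $(Db)^{-1}$ from \cref{as:location-scale-assns}(4). These estimates are conceptually routine but notationally heavy, and they are exactly where the hypotheses $M\ge 2s$ and the Lipschitz-inverse condition on $b$ are used.
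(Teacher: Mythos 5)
Your proof is correct and takes essentially the same route as the paper: the paper's own two-line proof cites \cref{lem:f_inv_K} together with \cref{as:location-scale-assns}(4) to conclude $b^{-1}\in\K^{M}$ (this is \eqref{eq:b_inv_norm}) and then applies the composition stability result \cref{lem:composition_Ks} using $M\ge 2s$, and your argument simply re-derives the content of those two lemmas inline---the invertibility of the linear part $A_b$ via surjectivity, the identity $\rho=-A_b^{-1}(r_b\circ b^{-1})$, and the Fa\`{a} di Bruno bookkeeping for derivatives of the inverse and of the composition. The only narrowing is that you treat $2s$ and $M$ as integers (so the paper's fractional-smoothness steps in those auxiliary lemmas are skipped), which is harmless in context since \cref{as:location-scale-assns} fixes $s,M\in\N$.
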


\begin{proof}
  By \eqref{eq:b_inv_norm}, $b^{-1}\in \K^M$. Since $M \ge 2s$, \cref{lem:composition_Ks} implies $k\circ b^{-1}\in \K^{2s}$.
\end{proof}

\begin{lemma}\label{lem:tilde_Sigma}
Under \cref{as:location-scale-assns}, 
\begin{align*}
&\|\tilde{\Sigma}(\cdot) - I\|_{\W^{M,\infty}} + \|\tilde{\Sigma}(\cdot) - I\|_{\W^{M,1}}\\
& \le B_M\Big(\|\Sigma(\cdot) - \Sigma_0\|_{\W^{M,\infty}} + \|\Sigma(\cdot) - \Sigma_0\|_{\W^{M,1}}, \|Db\|_{\W^{M-1,\infty}}, \|Db^{-1}\|_{L^\infty}, \lambda_{\min}^{-1}(\Sigma_0)\Big)
\end{align*}
for some function $B_M$ that only depends on $M$. In particular, 
\[\lim_{y\rightarrow 0}B_{M}\left(y, \|Db\|_{\W^{M-1,\infty}}, \|Db^{-1}\|_{L^\infty}, \lambda_{\min}^{-1}(\Sigma_0)\right) = 0.\]
\end{lemma}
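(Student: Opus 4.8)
The plan is to reduce the claim to a chain-rule (Fa\`a di Bruno) estimate for the composition $\Sigma\circ b^{-1}$. Recall from the proof of \cref{thm:main} that $\tilde{\Sigma}(y)=\Sigma(b^{-1}(y))\,\Sigma_0^{-1}$, so that
\[
\tilde{\Sigma}(y)-I=\bigl(\Sigma(b^{-1}(y))-\Sigma_0\bigr)\Sigma_0^{-1}=G(b^{-1}(y))\,\Sigma_0^{-1},\qquad G:=\Sigma-\Sigma_0.
\]
Since $\Sigma_0^{-1}$ is a constant matrix with $\|\Sigma_0^{-1}\|_{\op}=\lambda_{\min}^{-1}(\Sigma_0)$, right-multiplication by it scales every $\W^{M,\infty}$ and $\W^{M,1}$ norm by at most $\lambda_{\min}^{-1}(\Sigma_0)$. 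Because $M$ is an integer, both norms are simply sums of the $L^\infty$ (resp.\ $L^1$) norms of the derivatives $D^\alpha(\cdot)$ over $|\alpha|\le M$. It therefore suffices to bound $\|D^\alpha(G\circ b^{-1})\|_{L^\infty}$ and $\|D^\alpha(G\circ b^{-1})\|_{L^1}$ for each $|\alpha|\le M$, linearly in the derivatives of $G$ and with constants depending only on $M$, $\|Db\|_{\W^{M-1,\infty}}$, and $\|Db^{-1}\|_{L^\infty}$.

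The first ingredient is uniform control of the derivatives of $\phi:=b^{-1}$ up to order $M$. Differentiating the inverse-function identity $D\phi(y)=\bigl(Db(\phi(y))\bigr)^{-1}$ and repeatedly applying the matrix-inverse rule $D(A^{-1})=-A^{-1}(DA)A^{-1}$ together with the chain rule, one shows by induction on $|\beta|$ that $D^\beta\phi$ is a universal polynomial expression in $\bigl(Db(\phi)\bigr)^{-1}$ and in the derivatives of $Db$ of order at most $|\beta|-1$, evaluated at $\phi$. A derivative of $\phi$ of order $M$ thus involves derivatives of $Db$ of order at most $M-1$, which is exactly the regularity supplied by $\|Db\|_{\W^{M-1,\infty}}$; combined with $\|Db^{-1}\|_{L^\infty}=\sup_x\|(Db(x))^{-1}\|_{\op}$ this yields
\[
\max_{|\beta|\le M}\|D^\beta\phi\|_{L^\infty}\le C_1\!\left(\|Db\|_{\W^{M-1,\infty}},\,\|Db^{-1}\|_{L^\infty}\right),
\]
which is precisely the statement that $b^{-1}\in\K^M$ recorded in \eqref{eq:b_inv_norm}, and which I would invoke directly.

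With these bounds in hand, apply the multivariate Fa\`a di Bruno formula (\cref{prop:faa_di_bruno}) entrywise to each scalar entry of $G$, writing for $|\alpha|\le M$,
\[
D^\alpha(G\circ b^{-1})(y)=\sum_{1\le|\kappa|\le|\alpha|}(D^\kappa G)(\phi(y))\,B_{\alpha,\kappa}\bigl((D^\gamma\phi(y))\bigr).
\]
The key structural observation is that $G$ enters each summand through the single factor $D^\kappa G$, so the expression is \emph{linear} in the derivatives of $G$. For the $L^\infty$ estimate, bounding each Bell-polynomial factor by products of $\|D^\gamma\phi\|_{L^\infty}$ gives $\|D^\alpha(G\circ b^{-1})\|_{L^\infty}\le C_2\,\|G\|_{\W^{M,\infty}}$. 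For the $L^1$ estimate, change variables $y=b(x)$ (a bijection of $\R^J$), so $dy=|\det Db(x)|\,dx$ with $|\det Db(x)|\le\|Db\|_{\W^{M-1,\infty}}^{J}$; since the Bell-polynomial factors evaluated at $b(x)$ are uniformly bounded, this yields $\|D^\alpha(G\circ b^{-1})\|_{L^1}\le C_3\,\|G\|_{\W^{M,1}}$. Collecting the two bounds over $|\alpha|\le M$, reinserting the factor $\lambda_{\min}^{-1}(\Sigma_0)$, and noting $\|G\|_{\W^{M,\cdot}}=\|\Sigma(\cdot)-\Sigma_0\|_{\W^{M,\cdot}}$ produces a bound of the form $B_M(y,\ldots)=\lambda_{\min}^{-1}(\Sigma_0)\,P_M\bigl(\|Db\|_{\W^{M-1,\infty}},\|Db^{-1}\|_{L^\infty}\bigr)\,y$ with $y=\|\Sigma(\cdot)-\Sigma_0\|_{\W^{M,\infty}}+\|\Sigma(\cdot)-\Sigma_0\|_{\W^{M,1}}$. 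This is linear in $y$, so $\lim_{y\to0}B_M(y,\ldots)=0$, giving the ``in particular'' claim.

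I expect the main obstacle to be the inductive derivative bound for $\phi=b^{-1}$: keeping track of which orders of $Db$-derivatives appear (ensuring none exceeds $M-1$) and verifying that the resulting constant depends only on $\|Db\|_{\W^{M-1,\infty}}$ and $\|Db^{-1}\|_{L^\infty}$. If \eqref{eq:b_inv_norm} already packages this as $b^{-1}\in\K^M$ with a quantitative norm bound, that step can simply be cited; otherwise it must be carried out explicitly. The only other point requiring minor care is the $L^1$ change of variables, where one uses boundedness of $\det Db$ (and not its inverse), so that no lower bound on the Jacobian is needed.
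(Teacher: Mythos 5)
Your proposal is correct and follows essentially the same route as the paper: factor out $\Sigma_0^{-1}$ at cost $\lambda_{\min}^{-1}(\Sigma_0)$, invoke $b^{-1}\in\K^M$ (the paper's \eqref{eq:b_inv_norm}, obtained from \cref{lem:f_inv_K}), and control the composition $(\Sigma(\cdot)-\Sigma_0)\circ b^{-1}$ via the Fa\`a di Bruno formula together with a change of variables $y=b(x)$ for the $L^1$ part---which is precisely the content of the paper's \cref{lem:ug_M_infty}, which you re-derive inline rather than cite. Your explicit remark that each Fa\`a di Bruno summand contains exactly one factor $D^\kappa G$, so the bound is linear in $\|\Sigma(\cdot)-\Sigma_0\|_{\W^{M,\cdot}}$, is a slightly cleaner way to get the limit claim than the paper's observation that $\bar S_k\rightarrow 0$ as the argument tends to zero.
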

\begin{proof}
First, we note that 
\begin{align*}
&\|\tilde{\Sigma}(\cdot) - I\|_{\W^{M,\infty}} + \|\tilde{\Sigma}(\cdot) - I\|_{\W^{M,1}}\\
& \le \lb\|\Sigma\circ b^{-1} - \Sigma_0\|_{\W^{M,\infty}} + \|\Sigma\circ b^{-1} - \Sigma_0\|_{\W^{M,1}}\rb \lambda_{\min}^{-1}(\Sigma_0)\\
& = \lb\|(\Sigma(\cdot) - \Sigma_0)\circ b^{-1}\|_{\W^{M,\infty}} + \|(\Sigma(\cdot) - \Sigma_0)\circ b^{-1}\|_{\W^{M,1}}\rb \lambda_{\min}^{-1}(\Sigma_0)
\end{align*}
By \cref{as:location-scale-assns}(3), $\Sigma(\cdot) - \Sigma_0\in \W^{M,\infty}$. By 
\cref{lem:f_inv_K}
and \cref{as:location-scale-assns}(4), 
\begin{equation}\label{eq:b_inv_norm}
b^{-1}\in \K^M\Longrightarrow \|Db^{-1}\|_{\W^{M-1,\infty}} < \infty.
\end{equation}
Furthermore, 
\begin{equation}\label{eq:b_infty}
b\in \K^M\Longrightarrow Db\in L^\infty.
\end{equation}
By \cref{lem:ug_M_infty}, \eqref{eq:b_inv_norm}, and \eqref{eq:b_infty},
\[\left\|(\Sigma(\cdot) - \Sigma_0)\circ b^{-1}\right\|_{\W^{M,\infty}}\le B_{M,\infty}(\|\Sigma(\cdot) - \Sigma_0\|_{\W^{M,\infty}}, \|D b^{-1}\|_{\W^{M-1,\infty}}, \|Db\|_{L^\infty}),\]
and 
\[\left\|(\Sigma(\cdot) - \Sigma_0)\circ b^{-1}\right\|_{\W^{M,1}}\le B_{M,1}(\|\Sigma(\cdot) - \Sigma_0\|_{\W^{M,1}}, \|D b^{-1}\|_{\W^{M-1,\infty}}, \|Db\|_{L^\infty}).\]
The proof is completed by defining 
\begin{align*}
& B_M\Big(\|\Sigma(\cdot) - \Sigma_0\|_{\W^{M,\infty}} + \|\Sigma(\cdot) - \Sigma_0\|_{\W^{M,1}}, \|Db\|_{\W^{M-1,\infty}}, \|Db^{-1}\|_{L^\infty}, \|\Sigma_0^{-1}\|_{\op}\Big)\\
& =  \bigg\{B_{M,\infty}(\|\Sigma(\cdot) - \Sigma_0\|_{\W^{M,\infty}}, \|D b^{-1}\|_{\W^{M-1,\infty}}, \|Db\|_{L^\infty})\\
& \qquad + B_{M,1}(\|\Sigma(\cdot) - \Sigma_0\|_{\W^{M,1}}, \|D b^{-1}\|_{\W^{M-1,\infty}}, \|Db\|_{L^\infty})\bigg\}\lambda_{\min}^{-1}(\Sigma_0).
\end{align*}
\end{proof}

\begin{lemma}\label{lem:f_inv_K}
Let $s\ge 1$ and let
$
f(x)=Ax+u(x)\in \K^s.
$
Assume 
\begin{equation}\label{eq:Df_inv}
\bigl\|(D f)^{-1}\bigr\|_{L^\infty}<\infty.
\end{equation}
Then $f^{-1}\in \K^s$. 
\end{lemma}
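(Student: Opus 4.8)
The plan is to establish the three facts that together give $f^{-1}\in\K^s$: that $f$ is a global $C^1$-diffeomorphism of $\R^J$, that the linear part $A$ is invertible, and that the nonlinear remainder of $f^{-1}$ lies in $\W^{s,\infty}$. Throughout write $f(x)=Ax+u(x)$ with $u\in\W^{s,\infty}$, and recall that $u$ is in particular bounded.

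\emph{Global invertibility and invertibility of $A$.} Since $\|(Df)^{-1}\|_{L^\infty}<\infty$, the Jacobian $Df(x)=A+Du(x)$ is invertible at every $x$ with a uniform bound on its inverse. First I would invoke Hadamard's global inverse function theorem: a $C^1$ local diffeomorphism $\R^J\to\R^J$ with uniformly bounded $(Df)^{-1}$ is proper, hence a covering map, and since $\R^J$ is simply connected it is a $C^1$-diffeomorphism onto $\R^J$. Let $g=f^{-1}$. Surjectivity then forces $A$ to be invertible: were $A$ singular, its range would be a proper subspace $V\subsetneq\R^J$, so $f(x)=Ax+u(x)$ would stay within the slab $\{y:\operatorname{dist}(y,V)\le\|u\|_{L^\infty}\}$, contradicting surjectivity onto $\R^J$.

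\emph{Boundedness of the remainder.} Set $B=A^{-1}$ and $v(y)=g(y)-By$. Applying $f$ to the identity $f(g(y))=y$ gives $Ag(y)+u(g(y))=y$, hence $g(y)=A^{-1}y-A^{-1}u(g(y))$ and therefore the clean identity $v(y)=-A^{-1}(u\circ g)(y)$. Since $u\in L^\infty$ and $A^{-1}$ is fixed, $v\in L^\infty$ with $\|v\|_{L^\infty}\le\|A^{-1}\|_{\op}\|u\|_{L^\infty}$. It then remains only to control the derivatives of $v$.

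\emph{Bootstrapping the derivatives.} Differentiating $f\circ g=\mathrm{Id}$ gives $Dg(y)=(Df(g(y)))^{-1}=\Phi(g(y))$, where $\Phi(x)=(A+Du(x))^{-1}$. Because matrix inversion $\iota(M)=M^{-1}$ is $C^\infty$ with derivatives bounded on the region $\{M:\|M^{-1}\|\le\|(Df)^{-1}\|_{L^\infty},\ \|M\|\le\|A\|_{\op}+\|Du\|_{L^\infty}\}$ in which $Df=A+Du$ takes values, composing with $Df$ (whose nonlinear part $Du$ lies in $\W^{s-1,\infty}$) gives $\Phi=\iota\circ Df\in A^{-1}+\W^{s-1,\infty}$. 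I would then prove $v\in\W^{k,\infty}$ by induction on $k=1,\dots,s$. The base case $k=1$ is immediate since $Dv=Dg-A^{-1}$ is bounded by hypothesis. For the inductive step, $v\in\W^{k-1,\infty}$ means $g\in\K^{k-1}$, so all derivatives $D^\gamma g$ with $1\le|\gamma|\le k-1$ are bounded; composing the $\W^{s-1,\infty}$ map $\Phi-A^{-1}$ with $g$ via the multivariate Fa\`a di Bruno formula (\cref{prop:faa_di_bruno}) together with the composition estimates (\cref{lem:composition_Ks}, \cref{lem:ug_M_infty}) then yields $(\Phi-A^{-1})\circ g\in\W^{k-1,\infty}$. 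Hence $Dg=A^{-1}+(\Phi-A^{-1})\circ g\in A^{-1}+\W^{k-1,\infty}$, so $Dv\in\W^{k-1,\infty}$, and combined with $v\in L^\infty$ this gives $v\in\W^{k,\infty}$. Taking $k=s$ yields $f^{-1}(y)=A^{-1}y+v(y)$ with $v\in\W^{s,\infty}$, i.e.\ $f^{-1}\in\K^s$.

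I expect the main obstacle to be this last bootstrap. Since $g=f^{-1}$ has linear growth, one cannot work in $\W^{s,\infty}$ directly but must track the splitting $g=A^{-1}(\cdot)+v$ and show only the remainder $v$ is smooth; moreover the derivatives of $g$ are defined only implicitly through $Dg=\Phi\circ g$, so each order of smoothness must be gained inductively. The boundedness of the higher Fa\`a di Bruno terms relies essentially on the uniform invertibility hypothesis, which keeps the derivatives of matrix inversion bounded along the values taken by $Df$; the index bookkeeping (ensuring $|\gamma|\le k-1\le s-1$ at every stage) is where care is required.
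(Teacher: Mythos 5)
Your proposal is correct in substance and follows the same skeleton as the paper's proof---Step 1 ($A$ invertible because $e\cdot f=e\cdot u$ is bounded whenever $A^{T}e=0$, contradicting surjectivity) and Step 2 (the identity $v(y)=-A^{-1}u(g(y))$, giving $v\in L^\infty$) are essentially identical---but your bootstrap is organized differently. The paper differentiates the identity $Df(g(y))\,Dg(y)=I$ repeatedly, obtaining $\{(Df)\circ g\}\,D^{m}g=R_m$ with $R_m$ a polynomial in lower-order data via Leibniz and Fa\`a di Bruno, and then solves $D^{m}g=((Df)\circ g)^{-1}R_m$ at each stage. You instead establish once and for all that $\Phi=(Df)^{-1}$ lies in $A^{-1}+\W^{s-1,\infty}$ (smoothness of matrix inversion on the bounded, uniformly invertible range of $Df$), and run the induction purely through composition estimates on $Dv=(\Phi-A^{-1})\circ g$. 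The two arguments rest on the same ingredients (Fa\`a di Bruno plus the uniform bound on $(Df)^{-1}$), but yours separates the analytic content (regularity of $\Phi$ as a function on the domain side) from the induction, which is arguably cleaner; you also add a genuine improvement by deriving global invertibility of $f$ from Hadamard's theorem, whereas the paper's proof tacitly assumes $f$ is invertible (harmless in its application, where the lemma is applied to the invertible map $b$, but your version makes the lemma self-contained).

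Two caveats. First, do not cite \cref{lem:ug_M_infty} inside this proof: that lemma's own proof invokes \cref{lem:f_inv_K} (to get $g^{-1}\in\K^{M}$ for the change of variables when $p<\infty$), so the citation is circular as written. You only need the $p=\infty$ case of the composition bound, which is exactly Step 2 of \cref{lem:composition_Ks} (Fa\`a di Bruno with every factor bounded) and involves no inverse; cite that instead. Second, your induction runs over integers $k=1,\dots,s$, so as written it only proves the claim for integer $s$, while the lemma allows fractional $s=m+\sigma$ with $\sigma\in(0,1)$; the paper devotes its Step 4 to this case. Your framework absorbs the same fix---after reaching $v\in\W^{m,\infty}$, expand $D^{m-1}\{(\Phi-A^{-1})\circ g\}$ by Fa\`a di Bruno and control each factor using \cref{lem:composition_C_sigma} (composition of a $\W^{\sigma,\infty}$ function with the Lipschitz map $g$), \cref{lem:inversion_C_sigma}, and \cref{lem:product_C_sigma}, noting that your claim $\Phi\in A^{-1}+\W^{s-1,\infty}$ already carries the fractional regularity---but this step needs to be stated rather than absorbed into ``taking $k=s$.''
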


\begin{proof}

  We split the proof into a few steps. 

\smallskip\textbf{Step 1: the linear part $A$ is invertible}

\noindent To contradiction, if $A$ were not full rank, then there exists $0\neq e\in 
\mathrm{Ran} (A)^\perp$ (equivalently $A^T e=0$). For all $x\in \R^J$,
\[
e\cdot f(x)=e\cdot(Ax+u(x))=e\cdot u(x).
\]
Since $u\in \W^{s,\infty}\subset L^\infty$, the right-hand side is bounded; hence $e\cdot f(\R^J)$ is bounded. If $f$ is surjective, $f(\R^J)=\R^J$, but $e\cdot y$ is unbounded on $\R^J$, a contradiction. Thus $A$ is full rank.

\smallskip \textbf{Step 2: isolate the linear part of the inverse}

\noindent Let $g=f^{-1}$. Using $y=f(g(y))=Ag(y)+u(g(y))$ we obtain
\[
g(y)=A^{-1}y-A^{-1}u(g(y))=:A^{-1}y+v(y),
\qquad v(y):=-A^{-1}u(g(y)).
\]
Since $u\in L^\infty$, we have $v\in L^\infty$ and
\[
\norm{v}_{L^\infty}\le \norm{A^{-1}}\,\norm{u}_{L^\infty}.
\]
Hence $g\in \Lin\oplus L^\infty$. It remains to show $v\in \W^{s,\infty}$.

\smallskip\textbf{Step 3: prove the result for integer $s$}

\noindent Differentiating $f(g(y))=y$ gives
\begin{equation}\label{eq:DfDg}
Df(g(y))\,Dg(y)=I,
\end{equation}
Then
\[Dg(y)=\bigl(Df(g(y))\bigr)^{-1}.\]
The assumption $\|(Df)^{-1}\|_{L^\infty}<\infty$ implies $Dg\in L^\infty$. Since $g(y) = A^{-1}y + v(y)$, we have $Dv = Dg - A^{-1}\in L^{\infty}$. Thus, $v\in \W^{1,\infty}$. 

We now prove $v\in \W^{m,\infty}$ for all $1\le m\le s$ by induction. Suppose for some $2\le m\le s$ we prove that $v\in \W^{m-1,\infty}$. Then
\begin{equation}\label{eq:induction_hypothesis}
  D^j v\in L^\infty, \quad j = 1,\ldots, m-1.
  \end{equation}
Differentiating \eqref{eq:DfDg} $m$ times we obtain, by the Leibniz rule and the multivariate Fa\`{a} di Bruno formula, 
\[\{(Df)\circ g\}\cdot  D^m g = R_m((D^1 f)\circ g, \ldots, (D^m f)\circ g, D^1 g, \ldots, D^{m-1}g),\]
where $D^k g, (D^k f)\circ g\in (\R^J)^{\otimes k}$ is the $k$-th differentials and $R_m$ is a tensor in $(\R^J)^{\otimes m}$ for which each coordinate is a polynomial of the entries of $(D^1 f)\circ g, \ldots, (D^m f)\circ g, D^1 g, \ldots, D^{m-1}g$. Since $f\in \K^s$, $D^k\in L^\infty$. Thus, by the induction hypothesis \eqref{eq:induction_hypothesis}, $R_m \in L^{\infty}$. By \eqref{eq:Df_inv},
\[D^m g = (Df (g))^{-1}R_m \in L^{\infty}.\]
Since $Dv = Dg - A^{-1}$, $D^m v = D^m g\in L^{\infty}$. Thus, $v\in \W^{m,\infty}$. The induction argument then implies $v\in \W^{s,\infty}$.

\smallskip\textbf{Step 4: prove the result for non-integer $s$}

\noindent Let $s=m+\sigma$ with $m=\lfloor s\rfloor\in\mathbb N$ and $\sigma\in(0,1)$. For any $1\le k\le m$, $f\in \K^s$ implies
\[D^k f\in \W^{\sigma,\infty}.\]
In Step 3, we have proved that $g\in \W^{m,\infty}\in \W^{1,\infty}$ and thus $g$ is
Lipschitz. By \cref{lem:composition_C_sigma},
\[(D^k f)\circ g\in \W^{\sigma,\infty}, \quad k = 1, \ldots, m.\]
By \eqref{eq:Df_inv} and \cref{lem:inversion_C_sigma},
\[((D f)\circ g)^{-1}\in \W^{\sigma,\infty}.\]
Recall that each coordinate of $R_m$ is a polynomial of the entries of $(D^1 f)\circ g,
\ldots, (D^m f)\circ g, D^1 g, \ldots, D^{m-1}g$. Since $g\in \W^{m,\infty}$, $D^1 g, \ldots, D^{m-1}g$ are all Lipschitz and hence in $\W^{\sigma,\infty}$. Since each monomial is a product, \cref{lem:product_C_sigma} implies 
\[R_m \in \W^{\sigma,\infty}.\]
By \cref{lem:product_C_sigma} again, we conclude that
\[D^m v = D^m g - A^{-1}I(m = 1)\in \W^{\sigma,\infty}.\]
Therefore, $g\in \K^s$. 
\end{proof}

\begin{lemma}\label{lem:composition_Ks}
  If $f,g\in \K^s$ for some $s\ge 1$, then $f\circ g\in \K^s$.
\end{lemma}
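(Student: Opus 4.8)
The plan is to write each map in its linear-plus-remainder form and track the two pieces separately under composition. Write $f(x) = Ax + u(x)$ and $g(x) = Bx + w(x)$ with $A, B \in \R^{J \times J}$ and $u, w \in \W^{s,\infty}$. Then
\[
  f(g(x)) = A g(x) + u(g(x)) = AB\,x + \bigl(A w(x) + u(g(x))\bigr),
\]
so $f \circ g$ has linear part $AB \in \Lin(\R^J;\R^J)$ and candidate remainder $r := A w + u\circ g$. It suffices to show $r \in \W^{s,\infty}$. The term $A w$ is immediate, since left-multiplication by the constant matrix $A$ preserves $\W^{s,\infty}$; the whole difficulty is to show $u \circ g \in \W^{s,\infty}$.

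First I would record that $g$ is globally Lipschitz: because $s \ge 1$ we have $w \in \W^{1,\infty}$, so $Dg = B + Dw \in L^\infty$. This Lipschitz bound is exactly what lets the auxiliary Hölder lemmas apply to $u \circ g$. For integer $s = m$, I would expand $D^\alpha(u \circ g)$ for each $|\alpha| \le m$ via the multivariate Fa\`a di Bruno formula (\cref{prop:faa_di_bruno}): it is a finite sum of products of factors $(D^\kappa u)\circ g$ with $1 \le |\kappa| \le |\alpha|$ and factors $D^\gamma g$ with $1 \le |\gamma| \le |\alpha|$. Since $u \in \W^{m,\infty}$ gives $D^\kappa u \in L^\infty$ and $g \in \K^m$ gives $D^\gamma g \in L^\infty$ for $1 \le |\gamma| \le m$ (note $Dg = B + Dw$ and $D^\gamma g = D^\gamma w$ for $|\gamma| \ge 2$), every such product is bounded; together with $u \circ g \in L^\infty$ this gives $u \circ g \in \W^{m,\infty}$.

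For non-integer $s = m + \sigma$ with $m = \lfloor s \rfloor$ and $\sigma \in (0,1)$, the same expansion places all derivatives up to order $m$ in $L^\infty$ by the integer argument, and it remains only to put each top-order derivative $D^\alpha(u \circ g)$, $|\alpha| = m$, in $\W^{\sigma,\infty}$. In the Fa\`a di Bruno expansion of such a term, each factor $(D^\kappa u)\circ g$ is the composition of a map in $\W^{\sigma,\infty}$ (the top derivative of $u$ when $|\kappa| = m$, a Lipschitz lower derivative otherwise) with the Lipschitz map $g$, so \cref{lem:composition_C_sigma} places it in $\W^{\sigma,\infty}$; likewise each $D^\gamma g$ with $1 \le |\gamma| \le m$ lies in $\W^{\sigma,\infty}$. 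By \cref{lem:product_C_sigma}, products of such bounded $\sigma$-H\"older factors are again bounded and $\sigma$-H\"older, and a finite sum preserves this, so $D^\alpha(u\circ g) \in \W^{\sigma,\infty}$. Hence $u \circ g \in \W^{s,\infty}$, so $r \in \W^{s,\infty}$ and $f \circ g \in \K^s$. I expect the main obstacle to be precisely this non-integer case: managing the Fa\`a di Bruno bookkeeping while checking, factor by factor, that the relevant derivatives land in $\W^{\sigma,\infty}$ and invoking the composition and product lemmas in the right places---the same H\"older-algebra machinery used in Steps 3--4 of \cref{lem:f_inv_K}.
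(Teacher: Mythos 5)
Your proposal is correct and follows essentially the same route as the paper's proof: the same linear-plus-remainder decomposition reducing the claim to $u\circ g\in \W^{s,\infty}$, the same Fa\`a di Bruno expansion for the integer case, and the same treatment of the fractional part via \cref{lem:composition_C_sigma} and \cref{lem:product_C_sigma}. If anything, you are slightly more explicit than the paper in recording that $g$ is globally Lipschitz (so the H\"older composition lemma applies) and that lower-order derivatives are bounded and Lipschitz, hence $\sigma$-H\"older.
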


\begin{proof}

  We split the proof into a few steps.
  
\smallskip \textbf{Step 1: reduction to the nonlinear part}

\noindent Let
\[
f(x)=Ax+u(x),\qquad g(x)=Bx+v(x),
\]
with $u,v\in \W^{s,\infty}$. Then
\[
(f\circ g)(x)=A(Bx+v(x))+u(Bx+v(x))=(AB)x+Av(x)+u\circ g(x).
\]
Since $Av\in \W^{s,\infty}$, it suffices to show $u\circ g\in \W^{s,\infty}$.

\smallskip \textbf{Step 2: proof for integer $s$}

\noindent Since $g\in \K^s$, $D^k g\in \W^{s-k, \infty}$. By the multivariate Fa\`{a} di Bruno formula, for each integer $k\le s$,
\begin{equation}\label{eq:Dk}
D^k (u\circ g) = S_k((D^1 u)\circ g, \ldots, (D^k u)\circ g, D^1 g, \ldots, D^k g),
\end{equation}
where $S_k$ is a tensor in $(\R^J)^{\otimes k}$ for which each coordinate is a polynomial of the entries of $(D^1 f)\circ g, \ldots, (D^k f)\circ g, D^1 g, \ldots, D^k g$. In particular, each monomial involves at least one coordinate of $D^j f\circ g$ for some $j$. 

Since $k\le s$,
\[(D^j u)\circ g\in L^\infty,\,\, D^j g\in L^\infty, \quad j=1,\ldots, k.\]
Thus,
\[D^k (u\circ g)\in L^\infty.\]
Since this holds for all $k\le s$, we conclude that $u\circ g\in \W^{s,\infty}$.

\smallskip \textbf{Step 3: proof for non-integer $s$}

\noindent Let $s=m+\sigma$ with $m=\lfloor s\rfloor\in\mathbb N$ and $\sigma\in(0,1)$. In Step 2, we have already proved that
\[D^m (u\circ g) = S_m((D^1 u)\circ g, \ldots, (D^m u)\circ g, D^1 g, \ldots, D^m g).\]
Since $u, g\in \W^{m+\sigma,\infty}$,
\[(D^1 u)\circ g, \ldots, (D^m u)\circ g, D^1 g, \ldots, D^m g\in \W^{\sigma,\infty}.\]
By \cref{lem:product_C_sigma}, $D^m (u\circ g)\in \W^{\sigma,\infty}$. This implies $u\circ g\in \W^{s,\infty}$.
\end{proof}

\begin{lemma}\label{lem:ug_M_infty}
If $u\in \W^{M,\infty}$ for some integer $M$ and $g\in \K^M$ and $g$ is invertible. Then 
\[\|u\circ g\|_{M, \infty}\le B_{M}(\|u\|_{M,\infty}, \|Dg\|_{M-1, \infty}, \|Dg^{-1}\|_
{L^\infty}),\]
for some function $B_{M}$ that depends on $M$.
\end{lemma}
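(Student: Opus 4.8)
Since $M$ is an integer, $\|\cdot\|_{\W^{M,p}}$ reduces to the classical Sobolev norm $\sum_{|\alpha|\le M}\|D^\alpha\cdot\|_{L^p}$, so the plan is to bound $\|D^\alpha(u\circ g)\|_{L^p}$ for each $|\alpha|\le M$ and sum. The two tools are the multivariate Fa\`a di Bruno formula (\cref{prop:faa_di_bruno}) to expand the derivatives of the composition, and a bi-Lipschitz change of variables to transfer the $L^p$ norm of $(D^\kappa u)\circ g$ back to that of $D^\kappa u$.

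First I would expand, for $1\le|\alpha|\le M$,
\[
D^\alpha(u\circ g)=\sum_{1\le|\kappa|\le|\alpha|}\bigl((D^\kappa u)\circ g\bigr)\,B_{\alpha,\kappa}\bigl((D^\gamma g)_\gamma\bigr),
\]
where each Bell polynomial $B_{\alpha,\kappa}$ is a fixed polynomial in the derivatives $D^\gamma g$ with $1\le|\gamma|\le|\alpha|\le M$, and every monomial contains at most $|\kappa|\le M$ such factors. Writing $g(x)=Ax+r(x)$ with $r\in\W^{M,\infty}$, we have $Dg=A+Dr$ and $D^\gamma g=D^\gamma r\in L^\infty$ for $|\gamma|\ge 2$; in all cases $\|D^\gamma g\|_{L^\infty}\le \|Dg\|_{\W^{M-1,\infty}}$ for $1\le|\gamma|\le M$. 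Hence each factor $B_{\alpha,\kappa}$ is bounded in $L^\infty$ by a constant (depending only on $M,J$) times $\bigl(1+\|Dg\|_{\W^{M-1,\infty}}\bigr)^{M}$.

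It remains to control $(D^\kappa u)\circ g$ in $L^p$. Since $g$ is invertible and $\|Dg^{-1}\|_{L^\infty}<\infty$, the inverse $g^{-1}$ is Lipschitz, so $g$ is a bi-Lipschitz homeomorphism and I can change variables $y=g(x)$:
\[
\int_{\R^J}|(D^\kappa u)(g(x))|^p\,dx=\int_{\R^J}|D^\kappa u(y)|^p\,|\det Dg^{-1}(y)|\,dy\le \|Dg^{-1}\|_{L^\infty}^{J}\,\|D^\kappa u\|_{L^p}^p,
\]
using $|\det Dg^{-1}|\le\|Dg^{-1}\|_{\op}^{J}$, hence $\|(D^\kappa u)\circ g\|_{L^p}\le\|Dg^{-1}\|_{L^\infty}^{J/p}\|u\|_{\W^{M,p}}$. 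Applying H\"older's inequality to each Fa\`a di Bruno term (the Bell factor in $L^\infty$, the composition factor in $L^p$), summing the finitely many terms, and handling the zeroth-order term $\|u\circ g\|_{L^p}$ by the same substitution, yields a bound of the form $C_{M,J,p}\,\|Dg^{-1}\|_{L^\infty}^{J/p}\bigl(1+\|Dg\|_{\W^{M-1,\infty}}\bigr)^{M}\|u\|_{\W^{M,p}}$, which defines the claimed $B_{M,p}$. For $p=\infty$ the change of variables is unnecessary and the dependence on $\|Dg^{-1}\|_{L^\infty}$ disappears.

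The main obstacle is technical rather than conceptual: \cref{prop:faa_di_bruno} is a pointwise chain rule, whereas $u$ only has weak derivatives in $L^p$. The plan is to mollify, taking smooth $u_\varepsilon\to u$ in $\W^{M,p}$, apply the classical formula to $u_\varepsilon\circ g$, and pass to the limit---the change-of-variables estimate shows $(D^\kappa u_\varepsilon)\circ g\to(D^\kappa u)\circ g$ in $L^p$ for each $\kappa$, so the expansion and all the bounds survive the limit. Invertibility of $g$ together with $\|Dg^{-1}\|_{L^\infty}<\infty$ is precisely what makes the pullback $u\mapsto u\circ g$ a bounded operation on $L^p$, while $g\in\K^M$ supplies the $L^\infty$ control on the derivatives of $g$; together they deliver $B_{M,p}$, which may be taken polynomial in its three arguments.
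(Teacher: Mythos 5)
Your proposal is correct and follows essentially the same route as the paper: a multivariate Fa\`a di Bruno expansion of $D^\alpha(u\circ g)$, $L^\infty$ control of the Bell-polynomial factors via $Dg\in\W^{M-1,\infty}$, and the change of variables $y=g(x)$ with $|\det Dg^{-1}|\le\|Dg^{-1}\|_{L^\infty}^{J}$ to bound $\|(D^\kappa u)\circ g\|_{L^p}$, finishing with H\"older's inequality term by term. Your mollification step to justify applying the pointwise chain rule to weak $L^p$ derivatives is a welcome refinement of a technicality the paper's proof passes over silently, but it does not change the substance of the argument.
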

\begin{proof}
Since $g\in \K^M$, $Dg\in \W^{M-1,\infty}$. By definition, 
\[\|u\circ g\|_{\W^{M,\infty}} = \sum_{\alpha:|\alpha|\le M}\|D^\alpha (u\circ g)\|_{L^\infty}.\]
Since $g\in \K^M$ and $g$ is invertible, \cref{lem:f_inv_K} implies $g^{-1}\in \K^M\in
L^\infty$. 
In addition,
\[\|(D^j u)\circ g\|_{L^\infty}\le \|D^j u\|_{L^\infty}.\]
For any $k\le M$, each monomial in $S_k$ defined by \eqref{eq:Dk} involves at least one coordinate of $(D^j u)\circ g$. Note that $a\in L^\infty, b\in \W^{1,\infty}$ imply $\|ab\|_{L^\infty}\le \|a\|_{L^\infty}\|b\|_{L^\infty}$. Letting $a = D^j u, b = g$, each monomial is in $L^\infty$. Therefore, we can find $\bar{S}_k$ such that 
\[\left|S_k((D^1 u)\circ g, \ldots, (D^k u)\circ g, D^1 g, \ldots, D^k g)\right|\le \bar{S}_k(\|u\|_{M,\infty}, \|Dg\|_{\W^{M-1,\infty}}, \|Dg^{-1}\|_{L^\infty}).\]
In particular, $\bar{S}_k\rightarrow 0$ when $\|u\|_{M,\infty}\rightarrow 0$. Thus, 
\begin{align*}
\|u\circ g\|_{\W^{M,\infty}}&\le M^J \max_{0\le k\le M}\bar{S}_k(\|u\|_{\W^{M,\infty}}, \|Dg\|_{\W^{M-1,\infty}})\\
& := B_M(\|u\|_{\W^{M,\infty}}, \|Dg\|_{\W^{M-1,\infty}}, \|Dg^{-1}\|_{L^\infty}).
\end{align*}
\end{proof}

\begin{lemma}[Composition with a Lipschitz map preserves $\W^{\sigma,\infty}$ regularity]
\label{lem:composition_C_sigma}
If $h\in \W^{\sigma,\infty}$ for some $\sigma\in [0, 1)$ and $\phi:\R^J\to\R^J$ is Lipschitz with Lipschitz constant $\Lip(\phi)$, then $h\circ\phi\in \W^{\sigma,\infty}$ and
\[
[h\circ\phi]_{\W^{\sigma,\infty}}
\le
[h]_{\W^{\sigma,\infty}}\,\Lip(\phi)^\sigma,
\]
\end{lemma}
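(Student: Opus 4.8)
The plan is to work directly from the definition of the Hölder-type seminorm
\[
[h]_{\W^{\sigma,\infty}} = \sup_{x \neq y} \frac{\|h(x) - h(y)\|}{\|x-y\|^\sigma},
\]
and to bound the analogous difference quotient for $h \circ \phi$ by factoring the increment of $h\circ\phi$ through the increment of $\phi$. Since $\sigma \in [0,1)$ puts us in the case $m = \lfloor \sigma \rfloor = 0$, the full $\W^{\sigma,\infty}$ norm is $\|u\|_{L^\infty} + [u]_{\W^{\sigma,\infty}}$, so it suffices to control the seminorm together with the (trivial) $L^\infty$ part.

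First I would fix $x \neq y$ in $\R^J$ and estimate $\|h(\phi(x)) - h(\phi(y))\| / \|x-y\|^\sigma$. If $\phi(x) = \phi(y)$, the numerator vanishes and the quotient is $0$, so this case is immediate. Otherwise $\phi(x) \neq \phi(y)$ and I multiply and divide by $\|\phi(x) - \phi(y)\|^\sigma$ to write
\[
\frac{\|h(\phi(x)) - h(\phi(y))\|}{\|x-y\|^\sigma}
= \frac{\|h(\phi(x)) - h(\phi(y))\|}{\|\phi(x) - \phi(y)\|^\sigma}
\cdot \left(\frac{\|\phi(x) - \phi(y)\|}{\|x-y\|}\right)^\sigma.
\]
The first factor is at most $[h]_{\W^{\sigma,\infty}}$ by definition of the seminorm. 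For the second factor, the Lipschitz bound $\|\phi(x) - \phi(y)\| \le \Lip(\phi)\,\|x-y\|$ together with monotonicity of $t \mapsto t^\sigma$ on $[0,\infty)$ (valid since $\sigma \ge 0$) gives the bound $\Lip(\phi)^\sigma$.

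Taking the supremum over all $x \neq y$ then yields
\[
[h \circ \phi]_{\W^{\sigma,\infty}} \le [h]_{\W^{\sigma,\infty}}\,\Lip(\phi)^\sigma < \infty,
\]
which is the claimed inequality; membership $h\circ\phi \in \W^{\sigma,\infty}$ follows by combining this with $\|h\circ\phi\|_{L^\infty} \le \|h\|_{L^\infty} < \infty$. The only point requiring any care is the degenerate case $\phi(x) = \phi(y)$, where one cannot divide by $\|\phi(x)-\phi(y)\|$; but the numerator is then zero, so the estimate holds trivially and there is no genuine obstacle. The argument is purely metric and uses nothing about $\phi$ beyond its Lipschitz modulus.
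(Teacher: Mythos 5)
Your proof is correct and takes essentially the same route as the paper's: both bound $\|h(\phi(x))-h(\phi(y))\|$ by $[h]_{\W^{\sigma,\infty}}\,\|\phi(x)-\phi(y)\|^\sigma$ and then apply the Lipschitz bound raised to the power $\sigma$, finishing by taking suprema. Your extra care with the degenerate case $\phi(x)=\phi(y)$ and the $L^\infty$ part is fine but not needed beyond what the paper's one-line chain of inequalities already delivers.
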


\begin{proof}
For $x\neq y$,
\[
\|h(\phi(x))-h(\phi(y))\|
\le [h]_{\W^{\sigma,\infty}}\,\|\phi(x)-\phi(y)\|^\sigma
\le [h]_{\W^{\sigma,\infty}}\,\Lip(\phi)^\sigma\,\|x-y\|^\sigma.
\]
Taking the supremum over $x\neq y$ gives the claim.
\end{proof}

\begin{lemma}[Inversion preserves $\W^{\sigma,\infty}$ regularity]
\label{lem:inversion_C_sigma}
Let $M:\R^J\to \S_+^J$ be such that $M\in \W^{\sigma,\infty}$ for some $\sigma\in [0,1)$ and $\|M^{-1}\|_{L^\infty}\le C$. Then $M^{-1}\in \W^{\sigma,\infty}$ and
\[
[M^{-1}]_{\W^{\sigma,\infty}}\le C^2\,[M]_{\W^{\sigma,\infty}}.
\]
\end{lemma}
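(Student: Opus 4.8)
The plan is to reduce the entire claim to the elementary first-order identity for matrix inverses, which linearizes the difference $M^{-1}(x)-M^{-1}(y)$ in terms of $M(x)-M(y)$. For any two invertible matrices $M(x),M(y)$ one has the algebraic identity
\[
M^{-1}(x) - M^{-1}(y) = M^{-1}(x)\,\bigl(M(y) - M(x)\bigr)\,M^{-1}(y),
\]
which is verified by expanding the right-hand side and telescoping. The hypothesis $\|M^{-1}\|_{L^\infty}\le C$ guarantees that $M(x)$ is invertible at every $x$ with a uniformly bounded inverse, so this identity is available pointwise.

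First I would fix the matrix norm entering $[\cdot]_{\W^{\sigma,\infty}}$ to be the operator norm $\|\cdot\|_{\op}$ and use that it is submultiplicative, i.e. $\|ABC\|_{\op}\le\|A\|_{\op}\|B\|_{\op}\|C\|_{\op}$. Applying this to the identity above, together with the uniform bound $\|M^{-1}\|_{L^\infty}\le C$, gives for every $x\neq y$
\[
\|M^{-1}(x) - M^{-1}(y)\|_{\op} \le \|M^{-1}(x)\|_{\op}\,\|M(x) - M(y)\|_{\op}\,\|M^{-1}(y)\|_{\op} \le C^2\,\|M(x) - M(y)\|_{\op}.
\]
Then I would bound $\|M(x)-M(y)\|_{\op}\le [M]_{\W^{\sigma,\infty}}\,\|x-y\|^\sigma$ straight from the definition of the seminorm, divide through by $\|x-y\|^\sigma$, and take the supremum over $x\neq y$ to conclude $[M^{-1}]_{\W^{\sigma,\infty}}\le C^2[M]_{\W^{\sigma,\infty}}$. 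Finiteness of this seminorm, combined with the hypothesized $\|M^{-1}\|_{L^\infty}\le C<\infty$, yields $M^{-1}\in\W^{\sigma,\infty}$.

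There is essentially no substantive obstacle here: because $\sigma\in[0,1)$ the relevant norm involves only the Hölder seminorm (no derivatives to track), so the single resolvent identity does all the work. The only points deserving a word of care are that the pointwise inverse $M^{-1}$ exists everywhere and is uniformly bounded---both supplied by the hypothesis $\|M^{-1}\|_{L^\infty}\le C$---and that the chosen matrix norm is submultiplicative, which holds for the operator norm. Should the Frobenius norm be preferred instead, the same argument applies verbatim, as it too is submultiplicative.
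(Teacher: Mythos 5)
Your proposal is correct and follows exactly the same route as the paper's proof: the resolvent identity $M^{-1}(x)-M^{-1}(y)=M^{-1}(x)\bigl(M(y)-M(x)\bigr)M^{-1}(y)$, followed by submultiplicativity of the norm, the uniform bound $\|M^{-1}\|_{L^\infty}\le C$, and taking suprema of the H\"older quotient. Your write-up merely makes explicit the details the paper leaves implicit (choice of submultiplicative norm, pointwise invertibility), so there is nothing to add.
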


\begin{proof}
Use the identity
\[
M^{-1}(x)-M^{-1}(y)= M^{-1}(x)\bigl(M(y)-M(x)\bigr)M^{-1}(y).
\]
Taking norms, dividing by $\|x-y\|^\sigma$, and taking suprema yields the estimate.
\end{proof}

\begin{lemma}[$\W^{\sigma,\infty}$ is closed in products]\label{lem:product_C_sigma}
If $a,b\in \W^{\sigma,\infty}\cap L^\infty(\R^J)$ for some $\sigma\in [0, 1)$, then $ab\in \W^{\sigma,\infty}$ and
\[
[ab]_{\W^{\sigma,\infty}}
\le
\|a\|_{L^\infty}\,[b]_{\W^{\sigma,\infty}}
+\|b\|_{L^\infty}\,[a]_{\W^{\sigma,\infty}}.
\]
\end{lemma}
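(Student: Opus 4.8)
The plan is to reduce the Hölder estimate for the product to the Hölder estimates for the individual factors by a standard add-and-subtract decomposition, using the $L^\infty$ bounds to control the pointwise evaluations. First I would fix $x\neq y$ and write
\[
(ab)(x)-(ab)(y)=a(x)\bigl(b(x)-b(y)\bigr)+\bigl(a(x)-a(y)\bigr)b(y).
\]
Applying the triangle inequality and bounding $\|a(x)\|\le\|a\|_{L^\infty}$ and $\|b(y)\|\le\|b\|_{L^\infty}$ then gives
\[
\|(ab)(x)-(ab)(y)\|\le \|a\|_{L^\infty}\,\|b(x)-b(y)\|+\|b\|_{L^\infty}\,\|a(x)-a(y)\|.
\]

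Next I would divide both sides by $\|x-y\|^\sigma$ and take the supremum over $x\neq y$. Since by definition $[h]_{\W^{\sigma,\infty}}=\sup_{x\neq y}\|h(x)-h(y)\|/\|x-y\|^\sigma$, the right-hand side is dominated termwise, yielding
\[
[ab]_{\W^{\sigma,\infty}}\le \|a\|_{L^\infty}\,[b]_{\W^{\sigma,\infty}}+\|b\|_{L^\infty}\,[a]_{\W^{\sigma,\infty}},
\]
which is exactly the claimed seminorm bound. Finiteness of the right-hand side shows $[ab]_{\W^{\sigma,\infty}}<\infty$, and combined with $\|ab\|_{L^\infty}\le\|a\|_{L^\infty}\|b\|_{L^\infty}<\infty$ this gives $ab\in\W^{\sigma,\infty}$.

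There is no real obstacle here: the statement follows from a one-line algebraic identity together with the triangle inequality. The only point requiring care is that boundedness of both factors is essential—without the $L^\infty$ control one cannot pass from the factorwise Hölder bounds to a bound on the product, which is precisely why the hypothesis includes $a,b\in L^\infty$. I note that the decomposition and estimate are written so as to apply verbatim when $a,b$ are matrix-valued and $ab$ denotes the matrix product, provided $\|\cdot\|$ is submultiplicative (e.g.\ the operator or Frobenius norm), since then $\|a(x)\bigl(b(x)-b(y)\bigr)\|\le\|a(x)\|\,\|b(x)-b(y)\|$; in the scalar case this inequality is an equality.
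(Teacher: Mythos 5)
Your proof is correct and is essentially the same as the paper's: both use the identity $a(x)b(x)-a(y)b(y)=a(x)\bigl(b(x)-b(y)\bigr)+b(y)\bigl(a(x)-a(y)\bigr)$, bound the bounded factors in $L^\infty$, divide by $\|x-y\|^\sigma$, and take suprema. Your additional remarks (why the $L^\infty$ hypothesis is needed, and that the argument extends to matrix-valued functions under a submultiplicative norm) are accurate and, in fact, relevant since the paper applies this lemma to matrix-valued quantities such as $\Sigma(\cdot)$.
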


\begin{proof}
Write
\[
a(x)b(x)-a(y)b(y)=a(x)(b(x)-b(y)) + b(y)(a(x)-a(y)).
\]
Divide by $\|x-y\|^\sigma$ and take suprema over $x\neq y$.
\end{proof}

\subsection{Proof of the claim in \Cref{rmk:necessity}}\label{subsec:proof_necessity}
Consider any function $g$ with $\E[g(\delta, P)\mid \delta, \Lambda] = 0$ a.s. \cref{as:p-index_x} implies $(X, Z) = (X, \lambda^{-1}(X, \Lambda))$. Thus, $(X, Z)$ is measurable with respect to $(X, \Lambda)$. By the tower property, 
\[\E[g(\delta, P)\mid X,Z] = \E[\E[g(\delta, P)\mid X,\Lambda]\mid X,Z] = \E[\E[g(\delta, P)\mid \delta,X,\Lambda]\mid X,Z].\]
By \cref{as:p-index_x} again, 
\[\E[g(\delta, P)\mid \delta,X,\Lambda] = \E[g(\delta, P)\mid \delta,\Lambda]=0, \,\, \mathrm{a.s.}\]
This implies $\E[g(\delta, P)\mid X,Z] = 0$. Since the RHS does not depend on $Z$, faithfulness implies $g(\delta, P) = g(\delta)$. Then 
\[0 = \E[g(\delta, P)\mid \delta, \Lambda] = g(\delta) = g(\delta, P), \,\, \mathrm{a.s.}\]

\subsection{Proof of \Cref{thm:unrestricted-faithfulness}}\label{subsec:proof_unrestricted-faithfulness}

Through this section, we consider any function $H$ with
\begin{equation}
\label{eq:premise}
\E[H(\delta,P)\mid X,Z]=k(X)\quad\text{a.s.}, \quad \E[|H(\delta, P)|] < \infty,
\end{equation}
for some measurable $k:\R^J\to\R$. Define
\begin{equation}
\label{eq:rf}
r_H(\lambda,d):=\E[H(\delta,P)\mid \Lambda=\lambda,\delta=d].
\end{equation}

We start by proving a useful high-level result.

Fix any $\lambda_0\in S_\lambda$ for which \eqref{eq:r_H_lambda} in the proof of \Cref{prop:p-index_x} in  \Cref{subsec:model-p-lambda} holds. 
Define
\begin{equation}\label{eq:h_lambda}
  H_\lambda(d) = r_H(d,\lambda) - r_H(d,\lambda_0),
\end{equation}
and
\[\cH_\lambda = \{H_\lambda(d): \E[|H(\delta,P)|] < \infty\}.\]
Note that $\cH_\lambda$ is a set of functions of $\delta$.

\begin{lemma}\label{lem:high-level}
  Assume that, for Lebesgue-almost every $\lambda\in \R^{k}$,
  \begin{equation}\label{eq:key_assumption}
    \cH_\lambda\cap \Ker(\cT) = \{\mathbf{0}\},
  \end{equation}
  where $\mathbf{0}$ denotes the class of functions of $\delta$ that are Lebesgue-almost everywhere $0$. Under \Cref{as:exogenous,as:p-index_x,,ass:location_model,ass:P,ass:complete}, $(\delta,P)\mid (X,Z)$ is faithful.
\end{lemma}

\begin{rmk}
  The condition \eqref{eq:key_assumption} reveals the minimal condition for faithfulness under the $\lambda$-index assumption. In particular, \eqref{eq:key_assumption} becomes trivial under completeness of $(\delta,P)\mid X,Z$ (\cref{as:complete_x}), which implies $\Ker(\cT) = \{\mathbf{0}\}$.
\end{rmk}

\begin{proof}[\textbf{Proof of \Cref{lem:high-level}}]
  By \eqref{eq:r_H_lambda}, there exists a full-measure subset $\cS \subset S_\lambda$ such that \eqref{eq:r_H_lambda} holds on $\cS$. Fix any $\lambda\in \cS$. Then we have
  \begin{equation}\label{eq:H_lambda_cond_exp_zero}
    \E[H_\lambda(\delta)\mid X] = \E[r_H(\delta,\lambda) - r_H(\delta,\lambda_0)\mid X] = 0\quad\text{a.s.}
    \end{equation}
  This implies
  \[r_H(\delta,\lambda) - r_H(\delta,\lambda_0)\in \Ker(\cT).\]
  By \eqref{eq:key_assumption},
  \[r_H(\delta,\lambda) - r_H(\delta,\lambda_0) = 0, \quad\text{a.s.}\]
  This can be written as 
  \[\PP(r_H(\delta,\Lambda) \neq r_H(\delta,\lambda_0)\mid \Lambda = \lambda) = 0, \quad\text{a.s.}.\]
  As a result,
  \[\PP(r_H(\delta,\Lambda) \neq r_H(\delta,\lambda_0)) = 0\Longleftrightarrow r_H(\delta,\Lambda) = r_H(\delta,\lambda_0)\quad \text{a.s.}\]
  Write $h_0(\delta)$ for $r_H(\delta,\lambda_0)$ to highlight it is just a function of $\delta$. By definition of $r_H$ in \eqref{eq:rf},
  \[\E[H(\delta,P)\mid \Lambda,\delta] = r_H(\delta,\Lambda) = h_0(\delta) \quad\text{a.s.}\]
  This implies
  \[\E[H(\delta,P)-h_0(\delta)\mid \Lambda,\delta]= 0 \quad\text{a.s.}\]
  By \cref{ass:complete}, we conclude that
  \[H(\delta,P) = h_0(\delta)\quad \text{a.s.}.\]
  The proof is then completed.
\end{proof}

~\\
\noindent \textbf{Proof of \Cref{thm:unrestricted-faithfulness}}. Fix $H$ with $\E[|H(\delta, P)|] < \infty$ and assume
\begin{equation}\label{eq:premise}
\E[H(\delta,P)\mid X,\Lambda]=k(X)\quad\text{a.s.}
\end{equation}
Let $r_H$ and $H_\lambda$ be defined in \eqref{eq:rf} and \eqref{eq:h_lambda}, respectively. Define the class of exponentially integrable functions as 
\begin{equation}\label{eq:S_a0}
  \cS_{a_0} = \left\{\int_{\R^J} e^{a_0\norm{d}}\bigl|H_\lambda(d)\bigr|\mathrm{d}d<\infty\right\}.
\end{equation}

\noindent \textbf{Step 1}. By definition and \cref{ass:P} (2),
\begin{align*}
|H_\lambda(d)|
& =\left|\int H(d,p)\bigl(q(p\mid\lambda,d)-q(p\mid\lambda_0,d)\bigr)\,dp\right|\\
& \le \int |H(d,p)|\,|q(p\mid\lambda,d)-q(p\mid\lambda_0,d)|\,dp\\
& \le \zeta(d)\int |H(d,p)|\mu(p)\,dp.
\end{align*}
By \cref{ass:P} (1), for any $\lambda'\in\R^k$,
\[
\int |H(d,p)|\mu(p)\,dp
\le \frac{1}{c_-}\int |H(d,p)|\,q(p\mid\lambda',d)\,dp
= \frac{1}{c_-}\E\bigl[|H(\delta,P)|\mid \Lambda=\lambda',\delta=d\bigr].
\]
Averaging over $\lambda'\sim \Lambda\mid \delta=d$ yields
\[
\int |H(d,p)|\mu(p)\,dp \le \frac{1}{c_-}\E\bigl[|H(\delta,P)|\mid \delta=d\bigr].
\]
Therefore,
\[
|H_\lambda(d)|\le \frac{1}{c_-}\zeta(d)\E\bigl[|H(\delta,P)|\mid \delta=d\bigr].
\]
Multiplying by $e^{a_0\norm{d}}$, integrating over $d$, inserting $p_\delta(d)$, and using \cref{ass:P} (3) gives
\begin{align*}
& \int e^{a_0\norm{d}}|H_\lambda(d)|\mathrm{d}d\\
& \le \frac{1}{c_-}\int \frac{e^{a_0\norm{d}}\zeta(d)}{p_\delta(d)}\,\E\bigl[|H(\delta,P)|\mid \delta=d\bigr]\ p_\delta(d)\mathrm{d}d\\
& \le \frac{1}{c_-}\sup_{d\in \R^J}\frac{e^{a_0\norm{d}}\zeta(d)}{p_\delta(d)}\E|H(\delta,P)|<\infty.
\end{align*}
As a result,
\begin{equation}\label{eq:H_lambda_S_a0}
  H_\lambda \in \cS_{a_0}.
  \end{equation}

\noindent \textbf{Step 2}. By \cref{ass:location_model} and \eqref{eq:H_lambda_cond_exp_zero},
\[
0 = \E[H_\lambda(\delta)\mid X=x]=\E[H_\lambda(a(x)+\epsilon)]=(H_\lambda*p_\epsilon)(a(x)),
\]
where $*$ denotes convolution. Since $\check{X}=a(X)$ has a density positive everywhere, it follows that
\begin{equation}\label{eq:convolution-zero}
(H_\lambda*p_\epsilon)(\check{x})=0\quad\text{for Lebesgue-almost every }\check{x}\in\R^J.
\end{equation}
Because $H_\lambda\in L^1(\R^J)$, implied by \eqref{eq:H_lambda_S_a0}, and $p_\epsilon\in L^1(\R^J)$, we may take Fourier transforms:
\[
\widehat{H_\lambda*p_\epsilon}(\omega)=\widehat H_\lambda(\omega)\,\widehat{p_\epsilon}(\omega)
=\widehat H_\lambda(\omega)\,\varphi_\epsilon(\omega).
\]
From \eqref{eq:convolution-zero}, $\widehat{H_\lambda*p_\epsilon}(\omega)\equiv 0$, hence
\begin{equation}\label{eq:fourier-product}
\widehat H_\lambda(\omega)\,\varphi_\epsilon(\omega)=0\qquad\forall\omega\in\R^J.
\end{equation}

\noindent \textbf{Step 3}. We now prove a lemma that $H_\lambda$ is real-analytic. The proof is deferred to the end of the proof.

\begin{lemma}\label{lem:real-analytic}
Assume that for some $a>0$,
\begin{equation}\label{eq:exp-moment}
\int_{\R^J} e^{a\norm{d}}|h(d)|\mathrm{d}d<\infty.
\end{equation}
Define $\widehat h(\omega)=\int_{\R^J}e^{-i\langle \omega, d\rangle}h(d)\mathrm{d}d$ for $\omega\in\R^J$.
Then $\widehat h$ is real-analytic on $\R^J$. 
\end{lemma}

\noindent \textbf{Step 4.} From \eqref{eq:fourier-product} and continuity of $\varphi_\epsilon$ with $\varphi_\epsilon(0)=1$, there exists $\varepsilon>0$ such that
\[\varphi_\epsilon(\omega)\neq 0,\quad \forall \norm{\omega}<\varepsilon.\]
Therefore $\widehat H_\lambda(\omega)=0$ for all $\norm{\omega}<\varepsilon$.
Since $\widehat H_\lambda$ is real-analytic, by the identity theorem, vanishing on the nonempty open ball $\{\norm{\omega}<\varepsilon\}$ implies
\[\widehat H_\lambda(\omega)\equiv 0 \text{ for any } \omega\in\R^J. \]
Since $H_\lambda\in \cS_{a_0}\subset L^1(\R^J)$, injectivity of the Fourier transform on $L^1$ yields $H_\lambda(\omega)=0$ for Lebesgue-almost every $\lambda$. Thus, the condition \eqref{eq:key_assumption} is proved. By \Cref{lem:high-level}, the proof of faithfulness is completed.

~\\
\noindent \textbf{Proof of incompleteness.} If there exists $\omega_0\in \R^J$ such that $\varphi_\epsilon(\omega_0) = 0$,
\[\E[\exp\{i\langle \omega_*, \delta\rangle\}\mid X] = \exp\{i\langle \omega_*, a(X)\rangle\}\E[\exp\{i\langle \omega_*, \epsilon\rangle\}] = 0.\]
Thus,
$\E[\cos(\langle \omega_*, \delta\rangle )\mid X] = 0.$
As a result, $\delta\mid X$ is incomplete.

~\\
\noindent \textbf{Proof of \Cref{lem:real-analytic}.} First, we prove that, for any $b<a$ and any integer $m\ge 0$,
\begin{equation}\label{eq:moments}
\int_{\mathbb{R}^J} \|d\|^m e^{b\|d\|}\,|h(d)|\,\mathrm{d}d
\;\le\; \frac{m!}{(a-b)^m}\,M_a.
\end{equation}
In fact, using the elementary inequality $t^m\le m!e^t$ with $t=b-a$ (since $e^t=\sum_{k\ge 0} t^k/k!\ge t^m/m!$), we obtain that 
\[
\|d\|^m \le \frac{m!}{(a-b)^m}\,e^{(a-b)\|d\|}.
\]
Multiplying by $e^{b\|d\|}|h(d)|$ and integrating yields \eqref{eq:moments}.

Let $z=\xi+i\chi$ with $\|\chi\|<a$ and
\begin{equation}\label{eq:GLaplace}
\tilde{h}(z) := \int_{\mathbb{R}^J} h(d)\,e^{-i \langle d, z\rangle}\,\mathrm{d}d
\;=\;\int_{\mathbb{R}^J} h(d)\,e^{-i \langle d, \xi\rangle}\,e^{\langle \chi, d\rangle}\,\mathrm{d}d,
\qquad z=\xi+i\chi.
\end{equation}
Since $\langle \chi, d\rangle \le \|\chi\|\,\|d\|$,
\[
|h(d)e^{-i \langle d, z\rangle}|=|h(d)| e^{\langle\chi, d\rangle}\le |h(d)|e^{\|\chi\|\|d\|} \le |h(d)| e^{a\|d\|}.
\]
The RHS is integrable by \eqref{eq:exp-moment}, so \eqref{eq:GLaplace} is absolutely convergent. 

Fix $z_0=\xi_0+i\chi_0\in T_a$ and set $\delta:=a-\|\chi_0\|>0$.
For $w\in\mathbb{C}^J$,
\[
\tilde{h}(z_0+w)=\int_{\mathbb{R}^J} h(d) e^{-i \langle d, z_0\rangle}\,e^{-i \langle d, w\rangle}\,\mathrm{d}d.
\]
Use the scalar power series $e^{-i \langle d, w\rangle}=\sum_{m=0}^\infty \frac{(-i)^m}{m!}(\langle d, w\rangle)^m$.
Fix $r\in(0,\delta)$ and assume $\|w\|\le r$. Then $|\langle d, w\rangle|^m\le (\|d\|\cdot\|w\|)^m\le (\|d\|r)^m$, hence
\begin{align*}
&\sum_{m=0}^\infty \frac{1}{m!}\int_{\mathbb{R}^J} |h(d)|\,|e^{-i \langle d, z_0\rangle}|\,|\langle d, w\rangle|^m\,\mathrm{d}d\\
&\le
\sum_{m=0}^\infty \frac{r^m}{m!}\int_{\mathbb{R}^J} |h(d)|\,e^{\langle \chi_0, d\rangle}\,\|d\|^m\,\mathrm{d}d \\
&\le
\sum_{m=0}^\infty \frac{r^m}{m!}\int_{\mathbb{R}^J} |h(d)|\,e^{\|\chi_0\|\,\|d\|}\,\|d\|^m\,\mathrm{d}d.
\end{align*}
Applying \eqref{eq:moments} with $b=\|\chi_0\|$ to bound the inner integral by $\frac{m!}{\delta^m}M_a$, we have
\[
\sum_{m=0}^\infty \frac{r^m}{m!}\int_{\mathbb{R}^J} |h(d)|\,e^{\|\chi_0\|\,\|d\|}\,\|d\|^m\,\mathrm{d}d
\le
M_a\sum_{m=0}^\infty (r/\delta)^m
<\infty,
\]
so the integrated series is absolutely summable and the convergence is uniform over $\|w\|\le r$.
By dominated convergence, we have
\begin{equation}\label{eq:taylor}
\tilde{h}(z_0+w)
=\sum_{m=0}^\infty \frac{(-i)^m}{m!}\int_{\mathbb{R}^J} h(d) e^{-i \langle d, z_0\rangle}\,(\langle d, w\rangle)^m\mathrm{d}d
\end{equation}
For each $m$, the map $w\mapsto \int h(d)e^{-i \langle d, z_0\rangle}(\langle d, w\rangle)^m\,\mathrm{d}d$ is a polynomial in $w$. Since the series \eqref{eq:taylor} converges uniformly on compact subsets $\{w:\|w\|\le r\}$ for every $r<\delta$, $\tilde{h}$ is holomorphic in a neighborhood of $z_0$. Because $z_0\in T_a$ was arbitrary, $\tilde{h}$ is holomorphic on $T_a$. By definition, for $z\in \R^J$, 
\[\tilde{h}(z) = \hat{h}(z).\]
We can then conclude that $\hat{h}$ is real-analytic.

\subsection{Proof of \cref{prop:example}} \label{subsec:proof_example}
To verify the incompleteness of $\delta\mid X$, we simply prove $\varphi_\epsilon$ has zeros. Since $\nu, \eta_1, \ldots, \eta_J$ are independent, 
\[\varphi_\epsilon(\omega) = \varphi_\nu(\omega)\prod_{j=1}^J\varphi_{\eta_j}(\omega_j) = \varphi_\nu(\omega)\prod_{j=1}^J \frac{\sin(b_j\omega)}{b_j\omega}.\]
Thus,
  \[\left\{\omega: \varphi_\epsilon(\omega) = 0\right\} = \left\{\omega: \omega_j = \frac{k\pi}{b_j}\text{ for some }j\in [J]\text{ and }k\in \mathbb{Z}\right\}.\]

Next, we verify \cref{ass:P}. Since $S(p)$ and $t(\lambda)$ are bounded,
\begin{equation}\label{eq:bounded_q}
  C_1^{-1}\mu(p)\ \le\ q(p\mid \lambda,d)\ \le\ C_1\mu(p),
  \end{equation}
  for some constant $C_1>0$. Thus, \cref{ass:P} (1) is proved.

  Write
  \[q_\theta(p)=\exp(\langle \theta, S(p)\rangle-A(\theta))\mu(p).\]
For each $i$,
\[
\partial_{\theta_i}q_\theta(p)
=\big(S_i(p)-\partial_{\theta_i}A(\theta)\big)\,q_\theta(p)
=\big(S_i(p)-\E_\theta[S_i(P)]\big)\,q_\theta(p),
\]
where the last line uses the well-known property of exponential families $\E[S_i(P)] = \partial_{\theta_i}A(\theta)$. Thus, 
\[|\partial_{\theta_i}q_\theta(p)|\le C_2 q_\theta(p),\]
for some constant $C_2 > 0$. By \eqref{eq:bounded_q},
\[|\partial_{\theta_i}q_\theta(p)|\le C_1C_2 \mu(p).\]
Then
\[|q_\theta(p)-q_{\theta_0}(p)|\le C_1C_2\|\theta-\theta_0\|_1\,\mu(p).\]
By definition,
\[\theta(\lambda,d)-\theta(\lambda_0,d)=\zeta(d)\big(t(\lambda)-t(\lambda_0)\big).\]
Thus, 
\begin{equation}\label{eq:q_lambda}
|q(p\mid \lambda,d)-q(p\mid \lambda_0,d)| \le C_3 \zeta(d) \mu(p),
\end{equation}
where $C_3 = 2C_1 C_2 \sup_{\lambda}|t(\lambda)|$. As a result,
\begin{align*}
|H_\lambda(d)|
&=\left|\int H(d,p)\big(q(p\mid \lambda,d)-q(p\mid \lambda_0,d)\big)\,dp\right|\\
&\le \int |H(d,p)| \cdot |q(p\mid \lambda,d)-q(p\mid \lambda_0,d)|\,dp\\
&\le C_3 \zeta(d) \int |H(d,p)|\,\mu(p)\,dp\\
& \le C_1 C_3 \zeta(d) \int |H(d,p)| q(p\mid d, \lambda')dp,
\end{align*}
for any $\lambda' \in \R^J$ that may differ from $\lambda$, where the last line follows from \eqref{eq:bounded_q}. As a result,
\[|H_\lambda(d)|\le C_1 C_3 \zeta(d) \E[|H(\delta, P)|\mid \Lambda=\lambda', \delta = d].\]
Taking integral of $\lambda'$ over the distribution of $\Lambda\mid \delta = d$, we obtain that
\[|H_\lambda(d)|\le C_1 C_3 \zeta(d) \E[|H(\delta, P)|\mid \delta = d].\]
Let $p_{\delta}(d)$ be the marginal density of $\delta$ and $a_0$ be any constant less than $c - 1/4$. Then
\begin{align}
  &\int_{\R^J} e^{a_0\|d\|}|H_\lambda(d)|\mathrm{d}d \nonumber \\
  & \le C_1 C_3 \int_{\R^J} \frac{e^{a_0\|d\|}\zeta(d)}{p_{\delta}(d)} \cdot \E[|H(\delta, P)|\mid \delta = d] p_{\delta}(d) \mathrm{d}d\nonumber \\
  & \le C_1 C_3 \E[|H(\delta, P)|] \cdot \sup_{d\in \R^J}\frac{e^{a_0\|d\|}\zeta(d)}{p_{\delta}(d)}. \label{eq:h_lambda_L1}
\end{align}
Under condition (b), $\delta=(X+\nu)+\eta$ where $X+\nu\sim N(0,2I_J)$ and $\eta\in[-1,1]^J$.
Thus,
\[p_\delta(d)=\E_{\eta}[\phi_J ((d-\eta)/2)],\]
where $\phi_J(x)$ denote the $N(0,I_J)$ density. Note that $p_\delta(d)$ is strictly positive and continuous everywhere.
Moreover $p_\delta(d)$ has Gaussian tails $\asymp \exp(-\|d\|^2/4)$, so the ratio
\[e^{a_0\|d\|}\zeta(d)/p_\delta(d) \rightarrow 0, \quad \|d\|\to\infty\]
and is continuous on $\mathbb{R}^J$. Therefore it has a finite supremum. By \eqref{eq:h_lambda_L1}, the proof os \cref{ass:P} is then completed.

Lastly, condition (a)  implies that the law of $P\mid \Lambda, \delta=d$ is an exponential family with sufficient statistic $S(p)$ given any $d$. Since $S$ is continuous and non-constant, the range of $S$ contains an open set. By the Lehmann-Scheff\`{e} completeness theorem, \cref{ass:complete} holds.

\subsection{Proof of \Cref{prop:comp_not_faith}}\label{subsec:proof_comp_not_faith}

This choice is such that \begin{align*}
\E\bk{
    \one(P > 0) \mid \delta,Z, X
  } &= \P_{Q\sim \Norm(0,1)}\pr{
    r(\delta, X) \tau(\delta, Z, X) + \tau(\delta, Z, X) Q > 0
  } 
  \\&=\Phi\pr{
    r(\delta, X)
  }
\end{align*}
Thus \[
  \E[\one(P > 0) \mid Z, X] = \E[\Phi\pr{
    r(\delta, X)
  } \mid X] \equiv k(X).
\]
Hence faithfulness does not hold. \Cref{as:exogenous} holds by construction. Thus the
remainder of the proof verifies that \cref{as:completeness} holds.

The density of $P$ is
\begin{align*}
  f(p \mid x,\delta,z) &= \frac{1}{\sqrt{2\pi}} \exp\pr{\log\frac{1}{\tau(\delta, Z, X)} - 
\frac{(p - r \tau)^2}{2\tau^2}
  } \\
  &= \frac{1}{\sqrt{2\pi}} \exp\pr{
    \log(1/\tau) - \frac{p^2}{2\tau^2} - \frac{r^2}{2} + \frac{yr}{\tau}
  }
\end{align*}
Under our choices for $r, \tau$, \[
  f(p \mid x,\delta,z) = \frac{1}{\sqrt{2\pi} z} \exp\pr{
    -\frac{p^2}{2z^2} - \frac{r^2}{2} + p \delta\frac{x}{z}
  }
\]
Thus \[
  f(\delta, p \mid x, z) = \frac{1}{\sqrt{2\pi} z} a(x) \one(x \in [1,2]) \exp\pr{
    -p^2\frac{1}{2z^2} + p \delta \frac{x}{z}
  }. 
\]

It is an exponential family supported on $[1,2] \times \R$ with natural parameters \[
  \eta(x,z) = \pr{-\frac{1}{2z^2}, \frac{x}{z}}
\]
It is possible to choose $X,Z$ such that  the support of $\eta(X,Z)$ contains an open set of $(-\infty, 0) \times
(0,
\infty)$. The
sufficient statistics are \[
  (T_1, T_2) = \pr{
    P^2, \delta P
  }
\]
Note that \[
  \delta = |T_2/\sqrt{T_1}|, P = \sgn(T_2) \sqrt{T_1}
\]
so that $(\delta, P)$ and $(T_1, T_2)$ are bijective. Thus, take any function $h(\delta, P)$, we can
write it in terms of the sufficient statistics $h(T_1, T_2)$. Since the support of $\eta
(x, z)$ contains an open set in $(-\infty, 0) \times (0,
\infty)$, $T_1, T_2$ are complete sufficient
statistics. Thus \[
  \E[H(\delta,P) \mid Z,X] = 0 \implies \E[\tilde H(T_1, T_2) \mid \eta] = 0 \implies
  \tilde H = H = 0.
\]

\end{appendix}
\end{document}